\documentclass[aps,pra,groupedaddress,onecolumn]{revtex4-2}

\usepackage{array}[=2016-10-06]

\usepackage{dcolumn}
\usepackage{bm}

\usepackage{graphicx} 
\usepackage{amssymb}
\usepackage{amsmath}
\usepackage{amsthm}
\usepackage{mathtools}
\usepackage{hyperref} 
\usepackage[left=1in,right=1in,]{geometry} 
\usepackage{setspace} 
\usepackage{cleveref}
\usepackage{physics}
\usepackage{forest}
\usepackage{mathrsfs}
\usepackage{multirow} 
\usepackage{enumitem}
\usepackage{tikz}
\usetikzlibrary{quantikz2}
\usepackage{subcaption}
\usepackage[most]{tcolorbox}
\usepackage{xfrac}
\usepackage{thmtools, thm-restate} 
\usepackage{fancyhdr}

\usepackage{float}
\makeatletter
\let\newfloat\newfloat@ltx
\makeatother

\usepackage{algorithm} 
\usepackage{algpseudocode} 

\usepackage{chngcntr}
\counterwithin{figure}{section}

\makeatletter
\newcommand{\vo}{\vec{}\@ifnextchar{^}{\,}{}}
\makeatother

\newtheorem{definition}{Definition}

\newtheorem{remark}{Remark}

\DeclareMathOperator{\Span}{span}

\tcbset{highlight math style={colback=white!10!white, colframe=black!10!black}}

\makeatletter
\newcommand\newtag[2]{#1\def\@currentlabel{#1}\label{#2}}
\makeatother

\fancypagestyle{firstpage}{\fancyfoot[C]{DISTRIBUTION STATEMENT A. Approved for public release: distribution is unlimited.}}

\renewcommand{\subsectionmark}[1]{} 

\hypersetup{
	colorlinks=true,
	linkcolor=blue,      
	filecolor=magenta,      
	urlcolor=cyan,           
}

\usepackage{chngcntr}
\counterwithout{figure}{section}
\counterwithout{table}{section}

\begin{document}

\begin{titlepage}
	\thispagestyle{firstpage}
	\begin{center}
		\Large\textbf{Quantum Data Loading for Carleman Linearized Systems: Application to the Lattice-Boltzmann Equation}\\
		\bigskip
		\large{Reuben Demirdjian\textsuperscript{1,*,$\dagger$}, Thomas Hogancamp\textsuperscript{1,2,*,$\ddagger$},\\ 
		Abeynaya Gnanasekaran\textsuperscript{3}, Amit Surana\textsuperscript{4} and Daniel Gunlycke\textsuperscript{5}}\\
		\bigskip 
		\large\textit{\textsuperscript{1}U.S. Naval Research Laboratory, Monterey, California}\\
		\large\textit{\textsuperscript{2}National Research Council, Washington, D.C.}\\
		\large\textit{\textsuperscript{3}SRI International, Menlo Park, California}\\
		\large\textit{\textsuperscript{4}RTX Technology Research Center (RTRC), East Hartford, Connecticut}\\
		\large\textit{\textsuperscript{5}U.S. Naval Research Laboratory, Washington, D.C.}\\
		\large\textit{\textsuperscript{*}These authors contributed equally to this work.}\\
		\large\textit{\textsuperscript{$\dagger$}\href{mailto:Reuben.Demirdjian.civ@us.navy.mil}{Reuben.Demirdjian.civ@us.navy.mil}}\\
		\large\textit{\textsuperscript{$\ddagger$}\href{mailto:Thomas.E.Hogancamp.ctr@us.navy.mil}{Thomas.E.Hogancamp.ctr@us.navy.mil}}\\
		\today
	\end{center}
	\noindent\textbf{Abstract.} Nonlinear ordinary and partial differential equations are ubiquitous in science and engineering, yet finding their solutions is often computationally intractable for classical hardware. To determine if quantum computers can offer a practical advantage, one critical challenge that must be solved is determining how to efficiently load exponentially sized matrices onto quantum hardware. In this article, we introduce an alternative linear combination of unitaries (LCU) strategy which relies on an intermediate linear combination of non-unitaries (LCNU) and a systematic embedding procedure. One major advantage of this LCU strategy is that it maintains the exact number of terms as in the LCNU. Therefore, this approach offers a data loading framework for matrices that lack an efficient decomposition using the standard LCU alone. Using this approach, we construct a generalized LCNU framework for any Carleman linearized autonomous dynamical system having a polynomial nonlinearity, covering a wide range of problems arising in physics. To demonstrate the effectiveness of our approach, we construct an LCNU for the 3-dimensional Carleman linearized lattice Boltzmann equation (LBE), an important problem in computational fluid dynamics. Here, we find that the number of terms in the decomposition scales like $N_s \sim \mathcal{O}(\alpha^2 Q^2)$, where $\alpha$ is the Carleman truncation order and $Q$ is the number of discrete velocities. Importantly, $N_s$ is completely independent of the number of spatial and temporal discretization points, indicating that the use of exponentially fine meshes may be possible. We then perform a resource estimation of our LCNU's T gate cost when combined with the (1) PREP and SELECT block encoding oracles for fault-tolerant computation, and (2) variational quantum linear solver for NISQ implementation. In the former, the T cost scales like $\mathcal{O}(\alpha^3 Q^2 (\log_2 n)^2)$, where $n$ is the total number of spatial grid points across all dimensions. The latter requires exactly $N_s^2(\log_2 (2n_tn^\alpha)+1)$ circuits per iteration for $n_t$ time steps, with a worst case T gate cost of $\mathcal{O}(\alpha (\log_2 Qn)^2)$ among them. We, therefore, provide not only a generalized data loading framework of wide-reaching applicability within a range of physics disciplines, but also demonstrate the most efficient approach to date for the Carleman linearized LBE. 
\end{titlepage}



\tableofcontents


\newpage 

\section{Introduction}
\twocolumngrid
	
Whether quantum computers can solve nonlinear ordinary or partial differential equations (ODEs or PDEs) with an advantage over classical computers remains an open question. An obvious obstacle being that quantum computers are inherently linear, making it challenging to directly solve nonlinear equations. To circumvent this challenge, the Carleman linearization method \cite{Liu2021,Kowalski1991CarlemanBook} can be used to transform a finite dimensional nonlinear ODE into an infinite dimensional linear ODE, which is then truncated and discretized to obtain a finite dimensional linear system of equations. A solution to the subsequent linear system can then be obtained with a quantum linear system algorithm (QLSA), a problem that quantum computers can solve with an exponential advantage under certain conditions \cite{Childs2017ImprovedHHL,Harrow2009HHL,Bravo2023,Morales2024QLSAsurvey,Dalzell2024QLSAshortcut,Aaronson2015HHLconditions}. 

One necessary condition for any quantum algorithm to maintain advantage, is that the classical data must be loaded onto the quantum computer efficiently. In the case of QLSAs, the classical data is often an exponentially sized non-unitary $2^n \times 2^n$ matrix and, therefore, requires a carefully crafted data loading strategy. If the data loading is too expensive then all quantum advantage is lost before the QLSA even begins. A common criterion for data loading is that its cost should be polylogarithmic with the size of the matrix, that is, its resource requirements scale like $\mathcal{O}(\text{poly} (\log_2 2^n))$. This is likely not possible for dense, unstructured matrices. Fortunately, Carleman linearization admits matrices that are sparse and structured. 

When considering how to load a non-unitary matrix, an obvious first choice is the Pauli decomposition -- a linear combination of unitaries (LCU) where each decomposition term is a Kronecker product of Pauli matrices. However, a common issue with this method is that the number of terms in the LCU of an arbitrary matrix scales like $4^n$ \cite{hantzko2024tensorized}. An alternative consideration is to create a linear combination of non-unitaries (LCNU), whereby the non-unitary operators are embedded within unitary operators resulting in an LCU.  This was initially proposed in Lemma 7 of \cite{Childs2017ImprovedHHL} (therein termed the \textit{non-unitary LCU}), where they provided a quantum algorithm that abstractly prepares the desired state given the LCNU. This was followed up in \cite{GS24,GS2025_LCofThings,Surana2025ConstrainedOpt} where they defined the so-called Sigma basis -- comprised of the identity matrix and the non-unitary $2 \times 2$ standard basis elements -- allowing them to construct explicit quantum circuits. While this LCNU approach is related to the methods of \cite{Li2023SigmaBasis,Bae2024,Liu2021SigmaBasis,Kondo2021SigmaBasis}, the key difference in the former is that the non-unitaries are embedded into unitary matrices resulting in an LCU with an equal number of terms as the LCNU. While this approach does not guarantee an efficient data loading strategy, it can lead to efficient decompositions for specific classes of sparse, structured matrices. Building upon this initial LCNU approach, \cite{Demirdjian2025} defined a new set composed of the Sigma basis and certain permutation matrices to construct an efficient LCNU for the 1-dimensional Carleman linearized Burgers' equation. Despite the inclusion of these permutation matrices, the embedding procedure was shown to be just as straightforward as in the pure Sigma basis approach.

While having an efficient data loading strategy is necessary for advantage, it is not a sufficient condition. In the case of simulating fluid dynamics, a whole suite of challenges must be overcome if any quantum advantage is to be achieved \cite{Demirdjian2022Variational,Huang2025FourierReadout,Nguyen2026QAlgo4PDEReview,Bharadwaj2025,Jennings2026CarlLBEApplied,Lewis2024Limitations,Lin2022Challenges,Sarma2024VariationalPDE,Song2025Incompressible,Wu2025RevisitingCarl,Surana2024PolynomialDynSys,Gourianov2025,Garrett2025Feasibility,Jennings2024FastForwarding,Tennie2023,Gaitan2020,Williams2025Iterative,Oz2021,Jin2024QAlgoPDE,Lapworth2022Hybrid,Ljubomir2022,Lubasch2020,Hogancamp2026Linear,gan2025provably,penuel2024fesibility}. Specific to the Carleman linearization approach, an important challenge is to determine the truncation order required to accurately represent the nonlinear dynamics \cite{Gonzalez2025Carleman,Lin2022Carleman,Jennings2025Carleman,Li2025CarlemanLBE,Liu2021}. Here, a tension exists between the size of the linear system, which grows exponentially with truncation order, and the
Carleman truncation error (i.e. the resulting error induced by truncation). If the truncation order is too large, then advantage may be lost during the data loading step. However, if it is too small, then the Carleman truncation error may dominate the solution. 

An obvious starting point for fluid dynamics simulations is the Navier-Stokes equation. However, an analytical analysis of the Carleman truncation error \cite{Liu2021} found that the error may actually grow with increasing truncation order in the $R>1$ regime, where $R$ is a ratio of nonlinearity to linear dissipation and is related to the Reynolds number. An implication of this convergence issue is that Carleman linearization might only be useful for laminar flow conditions (small Reynolds number) and not the more interesting turbulent flow conditions (large Reynolds number). An alternate approach is to simulate turbulent flows by applying the Carleman linearization procedure to the lattice Boltzmann equation (LBE) \cite{itani2022analysis,sanavio2024lattice,sanavio2024three,succi2026foundational,sanavio2025carleman} since its nonlinearity may be determined not by the Reynolds number, but instead by the Mach number \cite{Li2025CarlemanLBE}. While this might appear to solve the convergence issue, a recent analysis has suggested that the Carleman truncation error for the LBE may actually depend on both the Reynolds and Mach numbers \cite{Jennings2025LBE}. Therefore, it remains to be seen whether the Carleman linearization method can be used to accurately simulate turbulent flow conditions. Though analyses of the errors are insightful, it is important to note that there may exist specific flow regimes for which convergence and accuracy are well-behaved with respect to truncation order. We believe that such flows, if they exist, will likely be determined empirically.

\subsection{Contributions}

In this work, we address the challenge of efficiently loading exponentially sized matrices onto quantum hardware. This is a critical bottleneck in achieving a quantum advantage across a diverse range of application areas, including computational fluid dynamics. Our contributions are structured as follows:

In Section \ref{sec:LinComBlock}, we substantially extend the existing LCNU theory discussed in  \cite{GS24,GS2025_LCofThings,Demirdjian2025,Surana2025ConstrainedOpt,Surana2024PolynomialDynSys} in part by introducing the set $\mathscr{P}_1$, which is composed of the $2 \times 2$ standard basis, Pauli matrices and any permutation matrix with dimension of a power of two. This new class of matrices serves as the fundamental building blocks for constructing non-unitary decompositions, where we demonstrate that every non-unitary element within this class can be directly embedded into a unitary matrix yielding an LCU that preserves the exact term count of the original LCNU. This is accomplished by formalizing the concept of ``unitary completion", i.e. the idea of extending a non-unitary matrix into a unitary one. In Theorem \ref{thm:Theorem for Lbar}, we show that it is straightforward to find the unitary completion for matrices composed of arbitrary combinations of matrix and tensor products of elements from our set $\mathscr{P}_1$. This extension greatly broadens the types of matrices, and consequently the application domains, that can be efficiently loaded with the LCNU method.

In Section \ref{sec:LinComCarl}, we propose a zero padding methodology requiring only a single qubit, an improvement over previous methods \cite{Demirdjian2025,penuel2024fesibility,Jennings2025LBE}, to ensure any resulting Carleman linearized dynamical system has a dimension equal to a power of two, which is a necessary condition for quantum computers. We then derive an efficient LCNU decomposition of this padded system down to its problem specific dynamical matrices, establishing a generalized data loading framework for any Carleman linearized ODE with polynomial nonlinearity.

In Section \ref{sec:Carl_LBE}, we apply this generalized framework to the 3D Carleman linearized LBE, a cornerstone model in computational fluid dynamics and demonstrate the most efficient data loading approach to date. Here, we find that the number of terms in the decomposition scales like $N_s \sim \mathcal{O}(\alpha^2Q^2)$, where $\alpha$ is the Careleman truncation order and $Q$ is the number of discrete velocities in the LBE. We then provide explicit circuit constructions for each term down to the last gate. The result of this is a demonstration of how to implement data loading using our LCNU approach for a useful, real-world problem.

In Section \ref{sec:ResourceEstimation}, we conclude by performing a rigorous resource estimation of our T gate cost considering both a fault-tolerant and noisy intermediate scale quantum (NISQ) framework. In both frameworks, we demonstrate that our approach scales polylogarithmically with the number of spatial and temporal discretization points. Our data loading method is then compared to the Pauli decomposition approach whereby we find that ours not only exhibits a four order of magnitude improvement for even the smallest problem sizes, but also that it has a demonstrably more efficient scaling. This analysis proves that the Carleman linearized LBE can be efficiently loaded onto quantum hardware even for exponentially large spatial grids and time steps, providing a viable path toward obtaining a quantum advantage in fluid dynamics simulations.


\onecolumngrid

\section{A Linear Combination of Non-Unitaries} \label{sec:LinComBlock}
Throughout this paper, $\log$ is used to denote $\log_2$. Consider a square matrix $L \in \mathbb{C}^{N\times N}$ with $N=2^{q_N}$ for some positive integer $q_N$. It is always possible to find an LCU of the form
\begin{equation*}
	L = \sum_{l=1}^{N_A} a_l A_l ,
\end{equation*}
where $a_l \in \mathbb{C}$ and $A_l \in \mathbb{C}^{N\times N}$ are unitary matrices. However, the number of terms $N_A$ in this decomposition is not guaranteed to scale like $\mathcal{O}(\text{poly}(\log N))$, and therefore, may be too large to be practically useful. To overcome this challenge, we seek an LCNU of the form 
\begin{equation} \label{eqn:LCNU}
	L = \sum_{l=1}^{N_s}c_lL_l ,
\end{equation}
where $c_l\in\mathbb{C}$, each $L_l \in \mathbb{C}^{N\times N}$ is either a unitary or specifically designed non-unitary matrix, and the imposed constraint $N_s\sim \mathcal{O}(\text{poly}(\log N))$. In this section, we demonstrate how to embed specific kinds of non-unitary $L_l$ terms into unitary matrices resulting in an LCU with $N_s$ terms. We do this by building off of the approach introduced in \cite{GS24,GS2025_LCofThings,Demirdjian2025}, which requires the following definition.
 
\begin{definition}
	\label{def:completion}
	Henceforth, let $V,W \subset \mathbb{C}^N$ be vector spaces over $\mathbb{C}$. Suppose that $W \subset V$ and that $Q: W \to V$ is a unitary operator on $W$, i.e., for any $w_1, w_2 \in W$, $w_1^\dagger Q^\dagger Qw_2 = w_1^\dagger w_2$. Then, a unitary operator $\overline{Q}: V \to V$ is said to be a unitary completion of $Q$ if $\overline{Q} w = Q w, \text{ for all } w \in W$. Next, let $Q$ be trivial on $W^\perp$, where $W^\perp$ is the orthogonal complement of $W$. Then we also define $Q^\perp \colon W^\perp \to V$, the unitary complement of $Q$, by the relation $Q^{\perp} \coloneq \overline{Q} - Q$. 
\end{definition}

Such a unitary operator $\overline{Q}$ always exists. 

\begin{restatable}[]{lemma}{LemmaUl} \label{Lemma for Ul}
	Using the notation of Definition \ref{def:completion}. Let $Q$, $Q^\perp$ and $\overline{Q}$ represent the matrix forms of their respective operators with respect to the standard basis. Then $\overline{Q}$ is a unitary completion of $Q$ if and only if $U$ is unitary where
	\begin{equation} \label{eqn:U block of Q}
		U=\begin{pmatrix} Q & Q^\perp \\ Q^\perp & Q \end{pmatrix} .
	\end{equation}
\end{restatable}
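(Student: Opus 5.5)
The plan is to block-diagonalize $U$ with a Hadamard on the ancilla, which turns unitarity of $U$ into unitarity of two $N\times N$ blocks, and then to show that the second block is just $\overline{Q}$ composed with a reflection. Throughout, let $P_W$ and $P_{W^\perp}=I-P_W$ be the orthogonal projectors onto $W$ and $W^\perp$. The matrix conventions implicit in Definition~\ref{def:completion} are:
\begin{itemize}[leftmargin=*]
\item $Q$ is trivial on $W^\perp$, so $Q=QP_W$.
\item $Q^\perp$ is an operator on $W^\perp$, extended by zero on $W$, so $Q^\perp=Q^\perp P_{W^\perp}$.
\end{itemize}

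\textbf{Block-diagonalization.} First rewrite \eqref{eqn:U block of Q} as $U=I_2\otimes Q+X\otimes Q^\perp$, where $X$ is the Pauli-$X$. Conjugating by the unitary $H\otimes I_N$ and using $HXH=Z$ gives
\begin{equation*}
(H\otimes I_N)\,U\,(H\otimes I_N)=\begin{pmatrix} Q+Q^\perp & 0\\ 0 & Q-Q^\perp\end{pmatrix}.
\end{equation*}
Since conjugation by a unitary preserves unitarity, $U$ is unitary if and only if both $Q+Q^\perp=\overline{Q}$ and $Q-Q^\perp$ are unitary.

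\textbf{($\Rightarrow$).} Suppose $\overline{Q}$ is a unitary completion of $Q$. Then $\overline{Q}P_W=QP_W=Q$, because $\overline{Q}$ agrees with $Q$ on $W$ and $Q$ vanishes on $W^\perp$. Hence $Q^\perp=\overline{Q}-Q=\overline{Q}(I-P_W)=\overline{Q}P_{W^\perp}$. It follows that
\begin{equation*}
Q-Q^\perp=\overline{Q}\,(P_W-P_{W^\perp}).
\end{equation*}
The factor $R=P_W-P_{W^\perp}$ is Hermitian and satisfies $R^2=I$, so it is a unitary reflection. Therefore $Q-Q^\perp$ is a product of two unitaries and is unitary. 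Together with the unitarity of $\overline{Q}$, the block-diagonalization shows $U$ is unitary.

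\textbf{($\Leftarrow$).} Suppose $U$ is unitary. By the block-diagonalization, $\overline{Q}=Q+Q^\perp$ is unitary. For $w\in W$ we have $Q^\perp w=Q^\perp P_{W^\perp}w=0$, so $\overline{Q}w=Qw$. Hence $\overline{Q}$ is a unitary completion of $Q$.

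\textbf{Main obstacle.} The computation itself is routine. The delicate step is stating the implicit conventions precisely: that $Q$ and $Q^\perp$ are viewed as full $N\times N$ matrices vanishing on $W^\perp$ and $W$ respectively. Without these, the identity $Q-Q^\perp=\overline{Q}R$ and the converse direction both fail, so I would state these conventions explicitly at the start of the proof.
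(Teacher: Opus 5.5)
Your block-diagonalization $(H\otimes I_N)\,U\,(H\otimes I_N)=\mathrm{diag}\bigl(\overline{Q},\,Q-Q^\perp\bigr)$ is correct. It is a different and tidier route than the paper's entrywise expansion of $UU^\dagger$.

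Your ($\Rightarrow$) direction is also correct, and more robust than the paper's. The paper obtains $Q^\perp Q^\dagger=0$ from an eigen-decomposition of $\overline{Q}$ and asserts that $W$ is spanned by eigenvectors of $\overline{Q}$. That assertion fails, e.g.\ for $Q=\rho_1$, $\overline{Q}=\sigma_x$: there $W$ is spanned by $(0,1)^T$, which is not an eigenvector of $\sigma_x$. Your identities $Q=\overline{Q}P_W$ and $Q^\perp=\overline{Q}P_{W^\perp}$ avoid this. Both follow from the completion property and $Q$ being trivial on $W^\perp$, and they give everything directly.

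The gap is in ($\Leftarrow$).
\begin{itemize}
\item The convention $Q^\perp=Q^\perp P_{W^\perp}$ you invoke there is literally the statement $\overline{Q}w=Qw$ for all $w\in W$. That is half of what must be proved.
\item In Definition~\ref{def:completion}, $Q^\perp$ is only introduced once $\overline{Q}$ is already a completion. For the converse to be non-circular, $\overline{Q}$ must therefore be an arbitrary matrix with $Q^\perp=\overline{Q}-Q$. This is how the paper reads it.
\item Your converse never uses the hypothesis that $Q$ is unitary on $W$. That hypothesis, not the convention, is what makes the converse true, so your remark that the converse fails without the convention is incorrect.
\item Without the isometry hypothesis the converse genuinely fails: take $N=1$, $W=\mathbb{C}$, $Q=0$, $\overline{Q}=1$. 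Then $U=\sigma_x$ is unitary, but $\overline{Q}\neq Q$ on $W$.
\end{itemize}

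The repair is short, and there are two equivalent ways to do it.
\begin{itemize}
\item Paper's argument: for $w\in W$, $U\begin{pmatrix} w\\ 0\end{pmatrix}=\begin{pmatrix} Qw\\ Q^\perp w\end{pmatrix}$. Hence $\|w\|^2=\|Qw\|^2+\|Q^\perp w\|^2=\|w\|^2+\|Q^\perp w\|^2$, so $Q^\perp w=0$.
\item In your framework: $\overline{Q}$ and $Q-Q^\perp=2Q-\overline{Q}$ are both unitary, and $\|Qw\|=\|w\|$. Expanding $\|(2Q-\overline{Q})w\|^2=\|w\|^2$ gives $\mathrm{Re}\bigl((Qw)^\dagger\overline{Q}w\bigr)=\|w\|^2$. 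Therefore $\|Qw-\overline{Q}w\|^2=0$.
\end{itemize}
With this fix, neither direction needs any a priori convention on $Q^\perp$.
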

\begin{proof}
	See Appendix \ref{Proof for U_l}.
\end{proof}

Lemma 1 provides a blueprint to embed each LCNU term from \eqref{eqn:LCNU} into unitary matrices, provided that their unitary complements can be computed. In this case, one can obtain an LCU with $N_s$ terms in the form 
\begin{equation} \label{eqn:LCU}
	U_L = \sum_{l=1}^{N_s} c_l U_l ,
\end{equation}
where each non-unitary term $L_l$ is embedded into $U_l\in\mathbb{C}^{2N \times 2N}$ according to 
\begin{equation} \label{eqn:U_l}
	U_l \coloneq\begin{pmatrix} L_l & L_l^\perp \\ 
		L_l^\perp & L_l \end{pmatrix}
	= U_{l,1} U_{l,2} ,
\end{equation}
and
\begin{equation} \label{eqn:Ul1 Ul2}
	U_{l,1} \coloneq
	\begin{pmatrix}
		I-L_lL_l^\dagger & L_lL_l^\dagger \\
		L_lL_l^\dagger & I-L_lL_l^\dagger
	\end{pmatrix} , 
	\quad
	U_{l,2} \coloneq 
	\begin{pmatrix}
		0 & \overline{L}_l \\
		\overline{L}_l & 0
	\end{pmatrix} .
\end{equation}

In general, unitary completions are not unique and may not have efficient circuit implementations. We therefore seek a construction for each $L_l$ such that their corresponding $U_{l,1}$ and $U_{l,2}$ unitaries have efficient quantum circuit implementations. To that end, we first define $\mathbb{S}_{2^n}=\bigcup_{i=2}^n S_{2^i}$, where $S_{2^i}$ is the set of all $2^i \times 2^i$ permutation matrices. Next, consider the set $\mathscr{P}_1=\mathbb{P}_\rho \cup \mathbb{P}_\sigma \cup \mathbb{S}_{2^n}$, where $\mathbb{P}_\sigma=\{\sigma_x,\sigma_y,\sigma_z,I\}$ is the Pauli basis and $\mathbb{P}_\rho=\{\rho_0,\rho_1,\rho_2,\rho_3\}$ is the standard $2 \times 2$ basis defined as
\begin{equation} \label{eqn:rhos}
	\rho_0 = \begin{pmatrix} 1&0\\0&0 \end{pmatrix}, \quad
	\rho_1 = \begin{pmatrix} 0&1\\0&0 \end{pmatrix}, \quad
	\rho_2 = \begin{pmatrix} 0&0\\1&0 \end{pmatrix}, \quad
	\rho_3 = \begin{pmatrix} 0&0\\0&1 \end{pmatrix}.
\end{equation}

With this, the $L_l$ matrices that we seek for the LCNU decomposition are combinations of matrix and Kronecker products of elements from the set $\mathscr{P}_1$. Specifically, they admit a matrix representation of the form
\begin{equation} \label{eqn:Llgeneral}
	L_l = \prod_{i=1}^{m} \Bigl( \bigotimes_{j=1}^{n_i} P_{i,j} \Bigr) ,
\end{equation}
where $P_{i,j}\in\mathscr{P}_1$ and $\text{dim}(\bigotimes_{j=1}^{n_i} P_{i,j})$ is fixed for all $i\in\{1,\dots,m\}$.

Next, to help construct a completion for \eqref{eqn:Llgeneral}, we adopt the following specific unitary completions for each element in $\mathscr{P}_1$:
\begin{equation} \label{eqn:P bar}
	\overline{P} = 
	\begin{cases} 
		I, & P\in\{\rho_0,\rho_3\} \\
		\sigma_x, & P\in\{\rho_1,\rho_2\} \\
		P, & P\in \mathbb{P}_\sigma \cup \mathbb{S}_{2^n} .
	\end{cases}
\end{equation} 

\begin{remark} \label{remark:Special Pij}
	In the cases where $P_{i,j}\in\mathbb{P}_\rho$ or $P_{i,j}\in\mathbb{P}_\sigma$ then $n_i=n$. In both of these cases, the matrix products in \eqref{eqn:Llgeneral} are redundant because they are each closed under multiplication. Thus, \eqref{eqn:Llgeneral} is intended for cases where the $P_{i,j}$ terms are mixed among $\mathbb{P}_{\rho}$, $\mathbb{P}_\sigma$ and $\mathbb{S}_{2^n}$.
\end{remark}

Any $2^n \times 2^n$ matrix can be constructed using a linear combination of matrix and tensor products of elements from $\mathscr{P}_1$. To see this, note that $\mathbb{P}_\rho$ and $\mathbb{P}_\sigma$ both independently form a basis in $\mathbb{C}^{2 \times 2}$, and therefore the Kronecker products in \eqref{eqn:Llgeneral} enable the formation of a basis over $\mathbb{C}^{2^n \times 2^n}$. 

Next, we show that, given $L_l$ from \eqref{eqn:Llgeneral}, a unitary completion $\overline{L}_l$ is straightforward to find. A construction of $U_{l,2}$ immediately follows since, from \eqref{eqn:Ul1 Ul2}, $U_{l,2}=\sigma_x \otimes \overline{L}_l$.

\begin{restatable}[]{theorem}{TheoremLbar} \label{thm:Theorem for Lbar}
	Consider a non-trivial $L_l$ in the form of \eqref{eqn:Llgeneral}. Then, $\overline{L}_l = \prod_{i=1}^{m} \left( \bigotimes_{j=1}^{n_{i}} \overline{P}_{i,j} \right)$ is a valid unitary completion of $L_l$, where $\overline{P}_{i,j}$ is given by \eqref{eqn:P bar}.
\end{restatable}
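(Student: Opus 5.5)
The plan is to exploit the fact that every element of $\mathscr{P}_1$ is a \emph{generalized partial permutation}: it sends each standard basis vector either to $0$ or to a unimodular multiple of a standard basis vector. The completions in \eqref{eqn:P bar} are chosen to agree with $P$ on every basis vector that $P$ does not annihilate. I would first isolate this as a property $(\star)$ of a square matrix $F$ and a unitary $\overline{F}$: for every standard basis vector $e_k$, either $Fe_k=0$, or $Fe_k=\overline{F}e_k$ and $\overline{F}e_k$ is a unimodular multiple of a standard basis vector.

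Then I would verify $(\star)$ for the building blocks $P\in\mathscr{P}_1$. For the $\rho$'s this is a direct check on $e_0,e_1$:
\begin{itemize}
\item $\rho_0e_0=e_0=Ie_0$, and $\rho_0e_1=0$;
\item $\rho_1e_1=e_0=\sigma_xe_1$, and $\rho_1e_0=0$;
\item $\rho_2e_0=e_1=\sigma_xe_0$, and $\rho_2e_1=0$;
\item $\rho_3e_1=e_1=Ie_1$, and $\rho_3e_0=0$.
\end{itemize}
For Paulis and permutations, $\overline{P}=P$ is unitary and monomial, so $(\star)$ is immediate.

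Next I would show that $(\star)$ is stable under the two operations appearing in \eqref{eqn:Llgeneral}.

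\textbf{Kronecker products.} A basis vector $e_k=\bigotimes_j e_{k_j}$ is mapped by $\bigotimes_j P_{i,j}$ to $\bigotimes_j P_{i,j}e_{k_j}$. This is zero iff some factor vanishes. Otherwise every factor equals $\overline{P}_{i,j}e_{k_j}$, so the image equals $\bigl(\bigotimes_j\overline{P}_{i,j}\bigr)e_k$, which is again a unimodular multiple of a basis vector. The Kronecker product of unitaries is unitary.

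\textbf{Matrix products.} I would induct on $m$. If $F_1\cdots F_m e_k\neq0$, then $F_me_k\neq0$, so $F_me_k=\overline{F}_me_k=\lambda e_{k'}$ with $|\lambda|=1$. Also $F_1\cdots F_{m-1}e_{k'}\neq0$, so the induction hypothesis replaces the remaining factors by their completions. Hence $F_1\cdots F_me_k=\overline{F}_1\cdots\overline{F}_me_k$, which is again a unimodular multiple of a basis vector.

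The step I expect to need the most care is matching this to Definition \ref{def:completion}, which requires identifying the subspace $W$ on which $L_l$ is unitary and on whose complement it is trivial. I would take $W=\Span\{e_k: L_le_k\neq0\}$; this is nonzero because $L_l$ is non-trivial. By $(\star)$, $L_l$ agrees with the unitary $\overline{L}_l$ on each spanning vector, hence on all of $W$, so $L_l$ is isometric on $W$, i.e.\ $w_1^\dagger L_l^\dagger L_lw_2=w_1^\dagger w_2$.

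Moreover, $W^\perp=\Span\{e_k: L_le_k=0\}$, because the standard basis is orthonormal, so $L_l$ is trivial on $W^\perp$. One point to state explicitly is that the kernel of $L_l$ is exactly this span. The images $\overline{L}_le_k$ of distinct basis vectors are orthonormal, so no nontrivial combination of the non-annihilated basis vectors lies in the kernel.

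Hence $\overline{L}_lw=L_lw$ for all $w\in W$, and $\overline{L}_l$ is a valid unitary completion of $L_l$, as claimed in Theorem \ref{thm:Theorem for Lbar}. Finally, Lemma \ref{Lemma for Ul} then yields the unitary $U_l$ with $L_l^\perp=\overline{L}_l-L_l$.
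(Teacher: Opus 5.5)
Your proof is correct. It follows the same skeleton as the paper's proof: verify the building blocks, pass to Kronecker products, then induct on the number $m$ of matrix factors. The bookkeeping, however, is genuinely different.

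The paper works at the level of subspaces, using three ingredients:
\begin{itemize}
\item Lemma \ref{Lemma of Q1 kron Q2}: a Kronecker product of arbitrary completions is a completion.
\item Lemma \ref{incl_prod}: for $1$-sparse matrices with unimodular entries, a product of completions completes the product on $\tilde{W}_2=\{w\in W_2 \mid Q_2w\in W_1\}$.
\item Lemmas \ref{thm_abst_cond} and \ref{trivial_cond}: the characterization $Q_1Q_2=0\iff\tilde{W}_2=\{0\}$. This is needed to tie non-triviality of $L_l$ to a nonzero domain, and to identify that domain with the rowspace at each inductive step.
\end{itemize}

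Your invariant $(\star)$ packages all of this into one statement about standard basis vectors. It carries both the unimodular monomial structure and the agreement with the completion through both operations at once, so each closure step is a one-line check. The domain $W=\Span\{e_k : L_le_k\neq 0\}$, which is exactly the rowspace the paper uses, is read off only at the end, together with its nontriviality, the isometry on $W$, and the triviality on $W^\perp$. As a result you never need the triviality lemmas. You also make explicit two points the paper leaves implicit: the base-case check for the $\rho_i$, and the requirement of Definition \ref{def:completion} that $L_l$ vanish on $W^\perp$.

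What the paper's route buys in return is modularity. Its Kronecker lemma holds for arbitrary subspaces and arbitrary completions, not just the specific monomial completions in \eqref{eqn:P bar}, so it could be reused for building blocks outside $\mathscr{P}_1$. Your argument relies essentially on every factor being a partial monomial matrix that agrees with its completion on each surviving basis vector, which is precisely the setting of the theorem.
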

\begin{proof}
	See Appendix \ref{Proof for Lbar}.
\end{proof}

\begin{restatable}[]{corollary}{CorUPV} \label{thm:UPV}
	Let $U,V$ be unitary and $P_{i,j} \in \mathscr{P}_1$. Then, 
	\begin{equation} \label{p1}
		L = U \left(\prod_{i=1}^{m} \bigotimes_{j=1}^{n_{i}}P_{i,j}\right) V
	\end{equation}
	can be completed as 
	\begin{equation} \label{p2}
		\overline{L} = U \left(\prod_{i=1}^{m} \bigotimes_{j=1}^{n_{i}}\overline{P}_{i,j}\right) V \, . 
	\end{equation}
\end{restatable}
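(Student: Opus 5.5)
The plan is to reduce the corollary to Theorem~\ref{thm:Theorem for Lbar} together with Definition~\ref{def:completion}. The only new ingredient is showing that multiplying on the left and right by unitaries preserves the completion relation. Write $M = \prod_{i=1}^{m} \bigotimes_{j=1}^{n_i} P_{i,j}$. By Theorem~\ref{thm:Theorem for Lbar}, $\overline{M} = \prod_{i=1}^{m} \bigotimes_{j=1}^{n_i} \overline{P}_{i,j}$ is a unitary completion of $M$. Concretely, there is a subspace $W_M$ (the subspace on which $M$ acts isometrically, with $M$ trivial on $W_M^\perp$) such that $\overline{M}w = Mw$ for all $w\in W_M$. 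I would take the domain of $L = UMV$ to be $W_L := V^\dagger W_M$. Its orthogonal complement is $W_L^\perp = V^\dagger W_M^\perp$, because $V$ is unitary and so preserves inner products.

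The key steps are as follows.

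\emph{Isometry.} For $w_1,w_2\in W_L$ we have $Vw_1, Vw_2\in W_M$, and hence
\[
w_1^\dagger L^\dagger L w_2 = (Vw_1)^\dagger M^\dagger U^\dagger U M (Vw_2) = (Vw_1)^\dagger (Vw_2) = w_1^\dagger w_2 .
\]

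\emph{Triviality.} For $w\in W_L^\perp$, $Vw\in W_M^\perp$, so $Lw = UM(Vw) = 0$.

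\emph{Agreement.} For $w\in W_L$,
\[
\overline{L}w = U\overline{M}(Vw) = UM(Vw) = Lw .
\]

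\emph{Unitarity.} $\overline{L} = U\overline{M}V$ is a product of three unitaries, so it is unitary.

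Together these give exactly the conditions of Definition~\ref{def:completion}. As a cross-check, I could instead verify via Lemma~\ref{Lemma for Ul}. With $L^\perp = \overline{L}-L = U(\overline{M}-M)V = UM^\perp V$, we get
\[
\begin{pmatrix} L & L^\perp\\ L^\perp & L\end{pmatrix} = (I_2\otimes U)\begin{pmatrix} M & M^\perp\\ M^\perp & M\end{pmatrix}(I_2\otimes V),
\]
which is unitary because the middle block matrix is unitary by Lemma~\ref{Lemma for Ul} applied to $M$.

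The main obstacle is being careful about what ``completion'' means in Definition~\ref{def:completion}. The subspace $W$ is implicit, so one must check that the right subspace for $L$ is $V^\dagger W_M$ and not $W_M$ itself. One must also check that the triviality condition on $W^\perp$ transfers, which is what makes $L^\perp$ well defined. The block-matrix route through Lemma~\ref{Lemma for Ul} sidesteps this bookkeeping, so I would present that as the main argument and use the subspace argument as the conceptual justification.
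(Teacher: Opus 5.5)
Your subspace argument is correct and is essentially the paper's proof: the paper also takes the domain of $L$ to be $V^{-1}$ applied to the rowspace of $\prod_{i}\bigotimes_{j}P_{i,j}$, gets unitarity of $\overline{L}$ from the product of unitaries, and gets agreement from Theorem~\ref{thm:Theorem for Lbar}, while your isometry and triviality checks fill in conditions the paper leaves implicit. One correction: the block-matrix route via Lemma~\ref{Lemma for Ul} does not sidestep the bookkeeping, because that lemma is stated in the setting of Definition~\ref{def:completion} and its converse direction already assumes $L$ is unitary on a specified subspace and trivial on its complement, so keep the subspace argument as the main proof.
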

\begin{proof}
	See Appendix \ref{Proof for UPV}.
\end{proof}

Next, we provide a sufficient condition for $L_l$ under which the associated $U_{l,1}$ is guaranteed to have an efficient circuit representation. To do this, we must first define the set $\mathcal{R} = \{ \lambda \bigotimes_{j=1}^n r_j \;|\; r_j \in \{\rho_0,\rho_3,I\} ,\, \lambda \in \mathbb{C},\, n=1,2,\dots\}$.

\begin{restatable}[]{theorem}{TheoremLLT} \label{thm:Theorem for U1}
	Consider a non-trivial $L_l$ in the form of \eqref{eqn:Llgeneral}. If $L_l L_l^T \in \mathcal{R}$, then the associated $U_{l,1}$ from \eqref{eqn:Ul1 Ul2} can be implemented with a single multi-controlled \textsc{NOT} gate.
\end{restatable}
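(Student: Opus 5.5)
The approach is to show that $L_l L_l^\dagger$ is a diagonal $0/1$ projector of the form $\bigotimes_j r_j$ with $r_j\in\{\rho_0,\rho_3,I\}$. Then $U_{l,1}$ reduces to a single multi-controlled \textsc{NOT} acting on the ancilla.

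\textbf{Step 1: entries of $L_l$.} Every element of $\mathscr{P}_1$ has real entries, except $\sigma_y$. For the factors $\rho_k$, permutations, $I$, $\sigma_x$ and $\sigma_z$, Kronecker and matrix products keep $L_l$ real, so $L_l^\dagger = L_l^T$. When $\sigma_y$ factors appear, $L_l$ is a real matrix times a global phase $\pm i$, possibly with sign changes. In that case $L_lL_l^\dagger$ still equals $L_lL_l^T$ up to a sign. I will fix the convention $L_lL_l^\dagger$ as the object the hypothesis is really about, and state explicitly that $L_lL_l^\dagger = \pm L_lL_l^T$.

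Each factor in \eqref{eqn:Llgeneral} has at most one nonzero entry per row and per column, and every nonzero entry has modulus $1$. This holds for all of $\mathscr{P}_1$, and it is preserved under Kronecker and matrix products. So $L_l$ is a partial permutation matrix with unimodular entries. It follows that $L_lL_l^\dagger$ is diagonal with entries in $\{0,1\}$; it is the projector onto the row support of $L_l$.

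\textbf{Step 2: the projector is a product of local factors.} The hypothesis says $L_lL_l^T=\lambda\bigotimes_j r_j$. Comparing with Step 1 forces $|\lambda|=1$, and in fact $L_lL_l^\dagger=\Pi:=\bigotimes_j r_j$, because non-triviality gives at least one entry equal to $1$. Write $C_0=\{j: r_j=\rho_0\}$ and $C_1=\{j:r_j=\rho_3\}$. Then $\Pi$ projects onto the computational basis states whose qubits in $C_0$ are $0$ and whose qubits in $C_1$ are $1$.

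\textbf{Step 3: identify the circuit.} Substituting into \eqref{eqn:Ul1 Ul2} gives
\begin{equation*}
U_{l,1}=\begin{pmatrix} I-\Pi & \Pi\\ \Pi & I-\Pi\end{pmatrix}=I\otimes(I-\Pi)+\sigma_x\otimes \Pi .
\end{equation*}
This is exactly a $\sigma_x$ on the top (ancilla) qubit, controlled on the register being in the range of $\Pi$. Concretely, it is a multi-controlled \textsc{NOT} with open controls on $C_0$, closed controls on $C_1$, and no controls on qubits with $r_j=I$. The edge case $\Pi=I$ has no controls and is just an $X$ gate, which is still a (trivially) multi-controlled \textsc{NOT}.

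\textbf{Main obstacle.} The delicate point is the reconciliation between $L_lL_l^T$ in the hypothesis and $L_lL_l^\dagger$ in $U_{l,1}$, together with the scalar $\lambda$. I will handle this using the partial-permutation structure from Step 1 to pin down $\lambda=\pm1$. I will also check carefully that $\sigma_y$ factors contribute only a global phase and so leave the projector unchanged.
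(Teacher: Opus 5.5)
Your proof is correct and follows the same route as the argument the paper relies on. The paper's proof only cites Theorem 2 of \cite{GS2025_LCofThings}, and the control pattern it produces is what Algorithm~\ref{alg:Ul1} encodes: once $L_lL_l^\dagger$ is the diagonal projector $\Pi=\bigotimes_j r_j$, the block form $I\otimes(I-\Pi)+\sigma_x\otimes\Pi$ is exactly a multi-controlled \textsc{NOT} on the ancilla. You also supply details the paper leaves implicit, by using the partial-permutation structure of $L_l$ to force $\lambda=\pm1$ and to reconcile the hypothesis on $L_lL_l^T$ with the $L_lL_l^\dagger$ that actually appears in $U_{l,1}$, including the global phase contributed by $\sigma_y$ factors.
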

\begin{proof}
 	This follows directly from the proof of Theorem 2 in \cite{GS2025_LCofThings}. 
\end{proof}

While, Theorem \ref{thm:Theorem for U1} is useful to check if $U_{l,1}$ has an efficient circuit implementation given some $L_l$, it does not aid in the construction of $L_l$ itself. To do this, we provide a corollary showing the existence of an efficient $U_{l,1}$ circuit implementation under specific conditions of $L_l$. First, let $f \in \textup{Sym}(n)$ and define $P_f$ as a permutation matrix that permutes $n$-fold tensor products of $2 \times 2$ matrices, i.e. $P_f (x_1 \otimes \cdots \otimes x_n) P_f^T = x_{f(1)} \otimes \cdots \otimes x_{f(n)}$, where $x_k \in \mathbb{C}^{2 \times 2}$. Then, define $\mathcal{S}_{n!} \coloneq \{P_f \;|\; f \in \textup{Sym}(n)\}$. Following this, we define the set 
\begin{equation} \label{eqn:Set Sn}
	\mathbb{S}_n \coloneq \left\{S \left( \bigotimes_{k=1}^n \sigma_x^{b_k} \right) \;\middle|\; S \in \mathcal{S}_{n!},\, b_k\in\{0,1\},\, n=1,2,\dots \right\} .
\end{equation}
In words, $\mathbb{S}_n$ is the set of permutation matrices that both reorder the sequence of and rotate the components of an $n$-fold tensor product of $2 \times 2$ matrices.

\begin{restatable}[]{corollary}{corollaryLLT} \label{thm:U1_P2}
	Consider a non-trivial $L_l$ in the form of \eqref{eqn:Llgeneral} with the restriction that $P_{i,j} \in \mathscr{P}_2$, where $\mathscr{P}_2 \coloneq \mathbb{P}_\rho \cup \mathbb{P}_\sigma \cup \mathbb{S}_n$. Then the associated $U_{l,1}$ from \eqref{eqn:Ul1 Ul2} is either the identity or can be implemented with a single multi-controlled \textsc{NOT} gate.
\end{restatable}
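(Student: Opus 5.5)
The plan is to reduce the corollary to Theorem \ref{thm:Theorem for U1}: we show that $L_l L_l^T$ (which equals $L_lL_l^\dagger$ here, since every element of $\mathscr{P}_2$ is real) always lies in $\mathcal{R}$. We then treat separately the case where $L_lL_l^T$ is the identity, or zero if $L_l$ is trivial.

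\textbf{Step 1: a normal form for $L_l$.} Elements of $\mathbb{S}_n$ only reorder tensor factors and conjugate individual factors by $\sigma_x$. We therefore move every such permutation to the far right of the product \eqref{eqn:Llgeneral}, using the identities
\begin{equation*}
(\textstyle\bigotimes_j x_j)\, P_f = P_f\, (\textstyle\bigotimes_j x_{f^{-1}(j)})
\end{equation*}
(with the appropriate convention for $f$) and $x\,\sigma_x = \sigma_x (\sigma_x x \sigma_x)$. Conjugation by $\sigma_x$ maps $\mathbb{P}_\rho$ to itself ($\rho_0\leftrightarrow\rho_3$, $\rho_1\leftrightarrow\rho_2$) and maps $\mathbb{P}_\sigma$ to itself up to a sign $\pm1$. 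Hence $L_l = \lambda\,(\bigotimes_{k=1}^n x_k)\,S$, where $S\in\mathbb{S}_n$ (the set $\mathbb{S}_n$ is a group, being the hyperoctahedral group acting on qubits), $\lambda=\pm1$ (or $\pm i$ if $\sigma_y$ appears), and each $x_k$ is a product of elements of $\mathbb{P}_\rho\cup\mathbb{P}_\sigma$. For this normal form to exist, the factors in each block of \eqref{eqn:Llgeneral} must be single-qubit elements or elements of $\mathbb{S}_n$ of matching dimension. This bookkeeping is the step I expect to be the main obstacle, because an element of $\mathbb{S}_n$ could appear as a factor on only a subset of the qubits. That case is handled by extending it with identities, which keeps it inside $\mathbb{S}_n$ on all $n$ qubits.

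\textbf{Step 2: single-qubit products.} Each $x_k$ is a product of matrices from $\mathbb{P}_\rho\cup\mathbb{P}_\sigma$. Every such matrix is a monomial matrix with at most one nonzero entry, of modulus $1$, in each row and column. Such matrices are closed under multiplication, so $x_k$ is either $0$, a unit-modulus multiple of a single $\rho_a$, or a unitary monomial (a phase times a signed permutation).

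\textbf{Step 3: computing $L_lL_l^\dagger$.} Using the normal form, $L_lL_l^\dagger = |\lambda|^2 \bigotimes_k x_k x_k^\dagger$, because $SS^\dagger=I$. For each factor we compute $x_kx_k^\dagger$. If $x_k$ is a multiple of $\rho_0$ or $\rho_2$, then $x_kx_k^\dagger$ is a positive multiple of $\rho_0$ or $\rho_3$. If $x_k$ is a multiple of $\rho_1$ or $\rho_3$, then $x_kx_k^\dagger$ is $\rho_0$ or $\rho_3$. If $x_k$ is a unitary monomial, then $x_kx_k^\dagger = I$. Therefore $L_lL_l^\dagger\in\mathcal{R}$, and Theorem \ref{thm:Theorem for U1} gives the single multi-controlled \textsc{NOT} implementation.

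\textbf{Step 4: the degenerate case.} Suppose every $r_j=I$ and, after the normalization implicit in \eqref{eqn:U_l}, $L_lL_l^\dagger = I$. Then \eqref{eqn:Ul1 Ul2} gives $U_{l,1}=\sigma_x\otimes I$, which is a single uncontrolled \textsc{NOT}. We count this as the trivial case stated in the corollary, after absorbing it into $U_{l,2}$ (equivalently, into the relabeling of the ancilla). This yields the ``identity'' alternative of the statement.
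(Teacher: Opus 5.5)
Steps 1--3 of your proposal are a correct proof, by a different route from the paper. The paper never rearranges $L_l$. It works on the Gram side, writing $L_lL_l^T=P_1\cdots P_mP_m^T\cdots P_1^T$ with $P_i=\bigotimes_j P_{i,j}$. It then shows by induction from the inside out that $R\mapsto P_iRP_i^T$ maps $\mathcal{R}\cup\{\mathbf{0}\}$ into itself, by checking the cases $\mathbb{P}_\rho$, $\mathbb{P}_\sigma$ and $\mathbb{S}_n$; the last case holds because $X$-conjugation and reordering of qubits preserve $\mathcal{R}$. You use that same fact about $\mathbb{S}_n$ only once, on $L_l$ itself, to reach the normal form $L_l=\lambda(\bigotimes_k x_k)S$, after which $L_lL_l^\dagger=\bigotimes_k x_kx_k^\dagger$ is immediate. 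Your route costs the bookkeeping of extending partial permutations to all qubits, but it gives more:
\begin{itemize}
\item $L_lL_l^\dagger$ is exactly a projector, with no stray scalar;
\item non-triviality becomes the transparent condition that no $x_k$ vanishes;
\item it shows that what matters is that $\mathbb{S}_n$ normalizes tensor products of single-qubit $1$-sparse matrices (the paper's $CX$ example shows what can go wrong otherwise).
\end{itemize}
The paper's induction avoids any normal form and simply tracks $\mathcal{R}\cup\{\mathbf{0}\}$, which is where its ``identity'' alternative comes from.

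Two corrections. First, $\mathscr{P}_2$ is not entirely real, since $\sigma_y\in\mathbb{P}_\sigma$. This is harmless. Writing $L_l=i^kM$ with $M$ real gives $L_lL_l^T=(-1)^kL_lL_l^\dagger$, and $\mathcal{R}$ is closed under scalars. In any case, $L_lL_l^\dagger$, which is what you compute, is what enters \eqref{eqn:Ul1 Ul2}.

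Second, drop Step~4, which misreads the ``identity'' alternative. When $L_lL_l^\dagger=I$ nothing needs normalizing. Then $U_{l,1}=\sigma_x\otimes I$ is just the multi-controlled NOT with zero controls that Algorithm~\ref{alg:Ul1} outputs when every $r_j=I$. That case is already covered by Step~3, and the paper's cost count treats it exactly this way: $\mathcal{T}(L_{1,1})=0$ still costs one controlled NOT. Absorbing it into $U_{l,2}$ changes the factorization; it does not make the $U_{l,1}$ of \eqref{eqn:Ul1 Ul2} the identity.

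In the paper the identity case is $L_lL_l^T=\mathbf{0}$, where \eqref{eqn:Ul1 Ul2} literally gives $U_{l,1}=I$. Since here $L_lL_l^T=\pm L_lL_l^\dagger$, that case forces $L_l=0$. Under the non-triviality hypothesis, that alternative therefore never arises.
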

\begin{proof}
	See Appendix \ref{Proof for U1_P2}.
\end{proof}

It is important to note that the restriction $P_{i,j} \in \mathscr{P}_2$ from Corollary \ref{thm:U1_P2} is a sufficient, but not necessary condition for which $U_{l,1}$ has an efficient implementation. First and foremost, we do not consider the conditions under which $U_{l,1}$ is implemented with $r$ multi-controlled \textsc{NOT} gates for some small integer $r$, which would also be considered efficient. Secondly, there exist permutation matrices outside of $\mathscr{P}_2$ that also admit $U_{l,1}$ matrices that can be implemented with a single multi-controlled \textsc{NOT} gate. For example, consider that $CX \notin \mathscr{P}_2$ and define $L_l = CX (I \otimes \rho_0)$. Then we have $L_lL_l^T = I \otimes \rho_0 \in \mathcal{R}$, which has an efficient $U_{l,1}$ circuit following Theorem \ref{thm:Theorem for U1}. However, if we now consider $L_l = CX (\rho_0 \otimes I)$, we find that $L_lL_l^T =\text{diag}(1,0,0,1) \notin \mathcal{R}$. From these examples, we can see that there exist permutation matrices outside the set from Corollary \ref{thm:U1_P2} that conditionally admit an $U_{l,1}$ circuits composed of a single multi-controlled \textsc{NOT} gate.

We conclude this section by providing algorithms to construct both $U_{l,1}$ and $U_{l,2}$ given an $L_l$ of the form \eqref{eqn:Llgeneral} that satisfies Theorem \ref{thm:Theorem for U1}. Fortunately, all constructions in this article are of this form. With this restriction, $U_{l,1}$ is straightforward to find and is outlined in Algorithm \ref{alg:Ul1}. Similarly, the method to construct a circuit for $U_{l,2}$ is outlined in Algorithm \ref{alg:Ul2}. These two circuits are combined to encode the unitary $U_l$, which, from \eqref{eqn:U_l}, embeds the desired $L_l$ term. 

\begin{algorithm}[h]
	\begin{algorithmic}[1]
		\caption{A pseudo-code to construct the circuit for $U_{l,1}$} \label{alg:Ul1}
		\Require A matrix $L_l \in \mathbb{C}^{N \times N}$ of the form \eqref{eqn:Llgeneral} that satisfies Theorem \ref{thm:Theorem for U1}
		\Require A $\log N+1$ qubit register $q_0,\dots,q_{\log N}$
		\State Compute $L_l L_l^T = \bigotimes_{j=0}^{\log N-1} r_j$ \Comment{$r_j \in \{\rho_0,\rho_3,I\}$}
		\For{$j \gets 0 \text{ to } \log N - 1$} 
		\Comment{Generate control qubits}
		\If{$r_j = \rho_0$}
		\State Add an open control to qubit $q_j$
		\ElsIf{$r_j = \rho_3$}
		\State Add a closed control to qubit $q_j$
		\ElsIf{$r_j = I$}
		\State No control on qubit $q_j$
		\EndIf
		\EndFor	
		\State Target on the ancilla qubit $q_{\log N}$
	\end{algorithmic}
\end{algorithm}

\begin{algorithm}
	\begin{algorithmic}[1]
		\caption{A pseudo-code to construct the circuit for $U_{l,2}$} \label{alg:Ul2}
		\Require A matrix $L_l \in \mathbb{C}^{N \times N}$ of the form \eqref{eqn:Llgeneral} 
		\Require A $\log N+1$ qubit register $q_0,\dots,q_{\log N}$
		\State Compute $\overline{L}_l$ using Theorem \ref{thm:Theorem for Lbar}
		\State Apply the unitary $\overline{L}_l$ onto qubits $q_0,\dots,q_{\log N-1}$
		\State Apply a NOT gate onto qubit $q_{\log N}$
	\end{algorithmic}
\end{algorithm}


\section{Data Loading for the Carleman Linearization Method} \label{sec:LinComCarl}
In this section, we find an LCNU as in \eqref{eqn:LCNU} with terms of the form \eqref{eqn:Llgeneral} for \textit{any} Carleman linearized autonomous dynamical system with polynomial nonlinearity. Note, since a dynamical system with general nonlinearity can be approximated into one with a polynomial nonlinearity in a process referred to as polynomialization \cite{kramer2025polynomialization}, the methods outlined below are applicable more broadly if this extra level of complexity is used. We consider the $N$-dimensional ODE of the form
\begin{equation} \label{eqn:GeneralIVP}
	\frac{\partial \vec{f}}{\partial t} = \sum_{k=0}^{N_F} F_k\vec{f}^{\otimes k}, \qquad \vec{f}(0)=\vec{f}_0 ,
\end{equation}
where $\vec{f}(t) \coloneq \bigl( f_1(t),\dots,f_N(t) \bigr)^T\in\mathbb{C}^N$, $N=2^{q_N}$ for some positive integer $q_N$, $\vec{f}^{\otimes 0} \coloneq 1$, and each $F_k\in\mathbb{C}^{N\times N^k}$ are time independent.

\subsection{A Review of the Carleman Method} \label{sec:Carl Lin Review}
To transform \eqref{eqn:GeneralIVP} using the Carleman linearization method following \cite{Kowalski1991CarlemanBook,Liu2021}, let $\alpha \ge N_F + 1$ be the truncation order and define $\vec{y}(t) \coloneq (\vec{f}(t), \vec{f}^{\otimes2}(t), \dots, \vec{f}^{\otimes\alpha}(t))^T \in \mathbb{C}^{\Delta}$ with $\Delta=\sum_{j=1}^\alpha N^j$. This yields 
\begin{equation} \label{eqn:dydt}
	\frac{d\vec{y}}{dt} = A\vec{y} + \vec{b} , \quad \vec{y}(0)=\vec{y}^{\,0} ,
\end{equation}
where 
\begin{equation} \label{eqn:A}
	A \coloneq                                   
	\begin{pmatrix}
		A_1^1    & \cdots & A_{N_F}^1    &       		&  					   &  \\
		A_1^2    & A_2^2  & \cdots 	     & A_{N_F+1}^2 	&             		   &  \\    
		&  & 		 &     		    & \ddots      		   &  \\
		&  	  	 &   \ddots     		 & 	\ddots            &    	     		   & A_{\alpha}^{\alpha-N_F+1}  \\
		&	  	  & 	   	  	 & 			 	&             		   & \vdots \\
		&   	  &        		 & 			 	& A_{\alpha-1}^\alpha  & A_\alpha^\alpha \\
	\end{pmatrix} , 
	\quad 
	\vec{b} \coloneq \begin{pmatrix} F_0 \\ 0 \\ \vdots \\ 0 \end{pmatrix} ,
\end{equation}
with $A\in\mathbb{C}^{\Delta\times\Delta}$, $\vec{b}\in\mathbb{C}^{\Delta}$ and $A_{j+k-1}^j \in \mathbb{C}^{N^j \times N^{j+k-1}}$ for $k = 0,1,\dots,N_F$ defined by
\begin{equation} \label{eqn:Ajpkm1j}
	A_{j+k-1}^j \coloneq \sum_{l=0}^{j-1} I_{N^l} \otimes F_k \otimes I_{N^{j-l-1}} .
\end{equation}
Next, we temporally discretize \eqref{eqn:dydt} using $n_t=2^{q_t}$ time steps of size $\Delta t$ for some positive integer $q_t$ by applying the backward Euler method. This results in the linear system 
\begin{equation} \label{eqn:LYB}
	L\vec{Y}=\vec{B} ,
\end{equation}
where $L\in\mathbb{C}^{n_t\Delta\times n_t\Delta}$, $\vec{Y},\vec{B}\in\mathbb{C}^{n_t\Delta}$ and
\begin{equation} \label{eqn:L Matrix}
	L \coloneq
	\begin{pmatrix}
		I  &  			 & 	 	  &  \\
		-I & I-\Delta tA &  	  &  \\
		& \ddots 	 & \ddots &  \\
		&      		 & -I 	  & I-\Delta tA
	\end{pmatrix} , \quad
	\vec{Y} \coloneq
	\begin{pmatrix}
		\vec{y}^{\,0} \\
		\vec{y}^{\,1} \\
		\vdots \\
		\vec{y}^{\,n_t-1}
	\end{pmatrix} , \quad
	\vec{B} \coloneq 
	\begin{pmatrix}
		\vec{y}^{\,0} \\
		\vec{b}\Delta t \\
		\vdots \\
		\vec{b}\Delta t
	\end{pmatrix} ,
\end{equation}
for $\vec{y}^{\,r}=\vec{y}(r\Delta t)$. Thus, $L\vec{Y} = \vec{B}$ is a linear system of equations that approximates the original nonlinear ODE in \eqref{eqn:GeneralIVP} with accuracy depending on the truncation order $\alpha$. One disadvantage of \eqref{eqn:LYB} is that it is challenging to efficiently decompose $L$ into an LCU since each of the submatrices have different size. In the next subsection, we provide an approach to solve this problem.

\subsection{An LCNU for Carleman Linearized Systems} \label{sec:Carl Lin Zero Pad}
In this subsection, we implement a judicious zero padding approach in a similar (but improved) fashion as in \cite{Demirdjian2025}. This zero padding is advantageous because it lays the foundation for an LCNU decomposition with a polylogarithmic number of terms. While a general decomposition is not possible since each dynamical system has unique structure, we do outline a general LCNU to the fullest extent possible for an arbitrary Carleman linearized system. From this, an LCU is formed by using the methods of Section \ref{sec:LinComBlock} to embed each term of the LCNU.

First, define $\vec{y}^{\,(\text{e})}(t) \coloneq (\vec{z}(t),\vec{y}(t))^T \in \mathbb{C}^{2N^\alpha}$, $\vec{b}^{(\text{e})} \coloneq (\vec{z}(0),\vec{b})^T \in \mathbb{C}^{2N^\alpha}$, $\vec{z}(t) \in \mathbb{C}^{N_0}$, $N_0 = 2N^\alpha-\Delta$ and
\begin{equation} \label{eqn:Ae_def}
	A^{(\text{e})} \coloneq 
	\begin{pmatrix}
		0_{N_0 \times N_0} & 0_{N_0 \times \Delta} \\ 0_{\Delta \times N_0} & A
	\end{pmatrix}
	\in \mathbb{C}^{2N^\alpha \times 2N^\alpha} .
\end{equation}
We can then embed the ODE from \eqref{eqn:dydt} into the larger ODE
\begin{equation} \label{eqn:dyedt}
	\frac{d\vec{y}^{\,(\text{e})}}{dt} = A^{(\text{e})}\vec{y}^{\,(\text{e})} + \vec{b}^{\,(\text{e})} , \quad \vec{y}^{\,(\text{e})}(0)=\vec{y}^{\,(\text{e}),0} .
\end{equation}
Note, the system is formulated such that $\vec{z}(t)=\vec{z}(0)$, where we can let $\vec{z}(0)=\vec{0}$. As before, we apply the backward Euler method to \eqref{eqn:dyedt} to obtain the linear system  
\begin{equation} \label{eqn:LeYeBe}
	L^{(\text{e})}\vec{Y}^{(\text{e})}=\vec{B}^{(\text{e})} ,
\end{equation}
where $L^{(\text{e})}\in\mathbb{C}^{2n_tN^\alpha \times 2n_tN^\alpha}$, and $\vec{Y}^{(\text{e})},\vec{B}^{(\text{e})}\in\mathbb{C}^{2n_tN^\alpha}$ with
\begin{equation} \label{eqn:Le Matrix}
	L^{(\text{e})} \coloneq
	\begin{pmatrix}
		I  &   		           &        &   \\
		-I & I-\Delta tA^{(\text{e})} &        &  \\
		   & \ddots    		   & \ddots & \\
		   &  				   & -I 	& I-\Delta tA^{(\text{e})}
	\end{pmatrix} ,\quad
	\vec{Y}^{(\text{e})} \coloneq
	\begin{pmatrix} \vec{y}^{\,(\text{e}),0} \\ \vec{y}^{\,(\text{e}),1} \\ \vdots \\ \vec{y}^{\,(\text{e}),n_t-1} \end{pmatrix}
	,\quad
	\vec{B}^{(\text{e})} \coloneq
	\begin{pmatrix} \vec{y}^{\,(\text{e}),0} \\ \vec{b}^{(\text{e})}\Delta t \\ \vdots \\ \vec{b}^{(\text{e})}\Delta t \end{pmatrix} ,
\end{equation}
and $\vec{y}^{\,(\text{e}),r} \coloneq \vec{y}^{\,(\text{e})}(r\Delta t)$. The resulting system in \eqref{eqn:LeYeBe} embeds the original system from \eqref{eqn:LYB} and can therefore provide the desired solution.

To see how this new matrix $L^{(\text{e})}$ can be decomposed into an efficient LCNU, let
\begin{equation} \label{eqn:Le}
	L^{(\text{e})} = L_1^{(\text{e})} - \Delta tL_2^{(\text{e})}
\end{equation}
where $L_1^{(\text{e})}  \in \mathbb{C}^{2n_tN^\alpha \times 2n_tN^\alpha}$ is defined by
\begin{equation} \label{eqn:L1e}
	L_1^{(\text{e})} \coloneq 
	\begin{pmatrix}
		I_{2N^\alpha}  &               & 				&  \\
		-I_{2N^\alpha} & I_{2N^\alpha} & 				&  \\
		 			   & \ddots 	   & \ddots 		& \\
					   & 			   & -I_{2N^\alpha} & I_{2N^\alpha}        
	\end{pmatrix} 
	=
	(I_{n_t} + \rho_1^{\otimes \log n_t} - S_{+1}^{n_t} ) \otimes I_{2N^\alpha} .
\end{equation}
Here,  $I_r$ is the $r \times r$ identity matrix and $S_{+1}^{n_t}\in\mathbb{R}^{n_t \times n_t}$ is the incrementer permutation matrix defined by
\begin{equation} \label{eqn:Sp1}
	S_{+1}^r \coloneq 
	\left( \begin{array}{cccc}
		0      & \cdots & 0 & 1 \\
		1      & \ddots &   & 0 \\
		& \ddots	&  	& \vdots \\
		&  		& 1 & 0 \\
	\end{array} \right)_{r \times r} .
\end{equation}
Additionally, $L_2^{(\text{e})}  \in \mathbb{C}^{2n_tN^\alpha \times 2n_tN^\alpha}$ is defined by 
\begin{equation} \label{eqn:L2e}
	L_2^{(\text{e})} \coloneq 
	\begin{pmatrix}
		0  & 0 & \cdots & 0 \\
		0 & A^{(\text{e})} & \cdots & 0 \\
		\vdots & \ddots & \ddots & \vdots \\
		0  & \cdots & 0 & A^{(\text{e})}
	\end{pmatrix}
	=
	(I_{n_t} - \rho_0^{\otimes \log n_t})\otimes A^{(\text{e})} ,
\end{equation}
where, from the derivation from Appendix \ref{sec:Ae deriv}, we have
\begin{equation} \label{eqn:Ae_expr}
	\begin{split}
		A^{(\text{e})} 
		&= \underbrace{
			\sum_{j=2}^{\alpha} (\rho_0 \otimes \rho_3^{\otimes \log N-1})^{\otimes \alpha-j} 
			\otimes \rho_2 \otimes \left[ A^{(\text{e}),j}_{j-1} \left( \mathcal{P}_2^T \otimes I_{N^{j-1}} \right) \right]
			}_\text{Constant Forcing} \\		
		&+ \underbrace{
			\sum_{j=1}^{\alpha} (\rho_0 \otimes \rho_3^{\otimes \log N-1})^{\otimes \alpha-j} 
			\otimes \rho_3 \otimes A^j_j
			}_\text{Linear} \\  
		&+ \underbrace{
			\sum_{k=2}^{N_F} \sum_{j=1}^{\alpha-k+1} (\rho_0 \otimes \rho_3^{\otimes \log N-1})^{\otimes \alpha-k-j+1} 
			\otimes \rho_{1} \otimes \left[ \left( \mathcal{P}_k\otimes I_{N^j} \right) A^{(\text{e}),j}_{j+k-1} \right] 
			}_\text{Nonlinear} ,
	\end{split}
\end{equation} 
where $\mathcal{P}_k\in\mathbb{R}^{N^{k-1} \times N^{k-1}}$ is an easy to construct permutation matrix described in Appendix \ref{sec:Pk circuit} and  
\begin{subequations}
	\label{equations}
	\begin{align}
		\label{eqn:Aejm1j_def}
		A^{(\text{e}),j}_{j-1} &\coloneq \begin{pmatrix} A^j_{j-1} & 0_{N^j \times (N^{j}-N^{j-1})} \end{pmatrix} 
		\in \mathbb{C}^{N^j \times N^j}
		, \\[4pt]
		\label{eqn:Aejpkm1j_def}
		A^{(\text{e}),j}_{j+k-1} &\coloneq
		\begin{pmatrix} A^j_{j+k-1} \\ 0_{(N^{j+k-1}-N^j) \times N^{j+k-1}} \end{pmatrix} 
		\in \mathbb{C}^{N^{j+k-1} \times N^{j+k-1}} ,\qquad k\ge2 .
	\end{align} 
\end{subequations}
In \eqref{eqn:Ae_expr}, the ``Constant Forcing", ``Linear" and ``Nonlinear" terms are associated with the $F_0$, $F_1$, and $F_k\vert_{k\ge2}$ terms respectively. Using the derivation in Appendix \ref{Aejpkm1j_Derivation}, an expression to decompose both \labelcref{eqn:Aejm1j_def,eqn:Aejpkm1j_def} is
\begin{equation} \label{eqn:Aejpkm1j}
	A^{(\text{e}),j}_{j+k-1} =
	\sum_{l=0}^{j-1}
	\left[
	\left( \rho_0^{\otimes \log N^{k-1}} \otimes K^{(N^l,N)} \right) 
	\left(
	F^{(\text{e})}_k 
	\otimes I_{N^l} 
	\right)
	K^{(N^k,N^l)} \right]
	\otimes I_{N^{j-l-1}} , 
\end{equation}
where $K^{(a,b)}\in\mathbb{C}^{ab\times ab}$ is the commutation matrix \cite{Xu2018} defined in Appendix \ref{sec:Circ for Commutation} and 
\begin{equation} \label{eqn:Fke def}
	F^{(\text{e})}_k \coloneq
	\begin{cases}
		\begin{pmatrix} F_0 & 0_{N \times N-1} \end{pmatrix} , & k=0 \\[4pt]
		\begin{pmatrix} F_k \\ 0_{(N^k-N) \times N^k} \end{pmatrix} , & k\ge2 .
	\end{cases}
\end{equation}

Together, equations \labelcref{eqn:L2e,eqn:Ae_expr,eqn:Aejpkm1j,eqn:Fke def} provide a decomposition for $L_2^{(\text{e})}$ of the form \eqref{eqn:Llgeneral} up to $F_0^{(\text{e})}$, $F_1$ and $F^{(\text{e})}_k$, which are problem dependent matrices that must be specified before further decomposition is possible. The framework provided here admits an efficient number of terms $N_s$ in the LCNU provided that efficient decompositions for the problem dependent $F_0^{(\text{e})}$, $F_1$ and $F^{(\text{e})}_k$ matrices exists. In the next section, we look at an explicit dynamical system that admits a polylogarithmic decomposition.


\section{Data Loading for the Carleman Linearized LBE} \label{sec:Carl_LBE}
We begin this section by using the discrete-velocity Boltzmann equation (DVBE) to derive the lattice Boltzmann equation (LBE) in the form of \eqref{eqn:GeneralIVP}. This form allows us to apply the zero padded Carleman linearization approach from Section \ref{sec:Carl Lin Zero Pad}, yielding a linear system that approximates the LBE. Next, we decompose the linear system using the framework described in Section \ref{sec:Carl Lin Zero Pad} including the problem specific $F_k$ matrices. Lastly, we create explicit circuit constructions for each term in the LCNU decomposition, providing a complete method to encode the zero padded Carleman linearized LBE.


\subsection{Derivation of the LBE} \label{sec:LBE Derivation}

The DVBE is obtained from the classical Boltzmann equation via a suitable discretization of velocity space. In particular, if a velocity set $\{(e_m, w_m)\}_{m=1}^{Q}$, for velocity $\vec{e}_m\in\mathbb{C}^3$ and weight $w_m$, satisfies a number of algebraic relations to ensure conservation of macroscopic variables (see Section 3.4.7 of \cite{Kruger2017}), then the DVBE can be used to model the continuous dynamics of $f(t,\vec{X}) = (f_1(t,\vec{X}), ..., f_Q(t,\vec{X}))$, where $f_m(t,\vec{X})$ is the velocity distribution function describing the probability density of finding a particle at position $\vec{X}=(x,y,z)$ and time $t$. We consider the DVBE on the domain $(0,L_x) \times (0,L_y) \times (0,L_z)$ with a BGK collision operator \cite{BGK1954} and periodic boundary conditions in each spatial direction. The relevant system is:
\begin{gather} \label{eqn:DVBE}
	\frac{\partial f_m}{\partial t} + \vec{e}_m\cdot\nabla f_m = - \frac{1}{\tau}(f_m-f_m^\text{(eq)}) 
	, \quad \vec{f}_m(t=0,\vec{X}) = \vec{f}_{m,0}(\vec{X}) ,
\end{gather}
where $f_m^\text{(eq)}$ is the equilibrium distribution function and $\tau$ is the relaxation parameter. For this work, we assume that $m\in\{1,\dots,Q\}$ with $Q=2^{q_Q}$ for some positive integer $q_Q$, i.e. that the number of discrete velocities are a power of two. 

Before deriving the LBE, define $\Delta \eta \coloneq L_\eta/n_\eta$, where $\eta \in \{x,y,z\}$ and $n_\eta$ is the number of discretization points for dimension $\eta$. Assume that $\Delta x = \Delta y = \Delta z$, then we non-dimensionalize the DVBE using the following parameters:
\begin{equation} \label{eqn:nondim}
	\frac{\partial}{\partial t} = \frac{1}{\Delta t}\frac{\partial}{\partial t^\star} ,\quad 
	\frac{\partial}{\partial \eta} = \frac{1}{\Delta x}\frac{\partial}{\partial \eta^\star} ,\quad
	\vec{e}_m = \frac{\Delta x}{\Delta t} \vec{e}_m^{\,\star} ,\quad
	\tau = \Delta t \tau^\star = \frac{\Delta t^2}{\Delta x^2} \frac{\nu}{c_s^{\star 2}} ,\quad
	c_s = \frac{\Delta x}{\Delta t} c_s^\star ,
\end{equation}
where $c_s$ is the speed of sound, $\nu$ is the kinematic viscosity, $e_m \in \mathbb{Z}$, and the $\star$ superscript represents a non-dimensional variable. Next, by inserting the rescalings from \eqref{eqn:nondim} into \eqref{eqn:DVBE}, we obtain 
\begin{equation} \label{eqn:FullLBE NonDim}
	\frac{\partial f_m}{\partial t^\star} + \vec{e}_m^{\,\star} \cdot \nabla^\star f_m 
	= - \frac{1}{\tau^\star}(f_m-f_m^\text{(eq)}) ,
\end{equation}
where $\nabla^\star = \partial/\partial x^\star + \partial/\partial y^\star + \partial/\partial z^\star$. 

Next, we expand the Maxwell equilibrium function using a second order Taylor expansion of the form
\begin{equation} \label{eqn:MaxEq}
	f_m^\text{(eq)} 
	= \rho^\star w_m^\star(a + b\vec{e}_m^{\,\star} \cdot \vec{u}^{\,\star} + c(\vec{e}_m^{\,\star} \cdot \vec{u}^{\,\star})^2 + d\norm{\vec{u}^{\,\star}}^2) ,
\end{equation}
where $a=1$, $ b=c_s^{\star-2}$, $c=(2c_s^{\star 4})^{-1}$, $d=-(2c_s^{\star 2})^{-1}$, and $w_m^\star$ are the lattice weights for the non-dimensional speeds $\vec{e}_m^{\,\star}$. The non-dimensionalized macroscopic variables are defined by
\begin{equation} \label{eqn:rho_u}
	\rho^\star(t,\vec{X}) \coloneq \sum_{m=1}^Q f_m ,\quad \vec{u}^{\,\star}(t,\vec{X}) \coloneq \frac{1}{\rho^\star} \sum_{m=1}^Q f_m \vec{e}_m^{\,\star} .
\end{equation}

Following \cite{Li2025}, assume that $\rho^\star\approx1$, then $1/\rho^\star \approx2-\rho^\star$ and $\abs{1-\rho^\star}\ll1$. Evaluating \labelcref{eqn:MaxEq,eqn:rho_u} into $\eqref{eqn:FullLBE NonDim}$ and, henceforth dropping the $\star$ superscripts for convenience, yields
\begin{equation} \label{eqn:LBE Expanded}
	\begin{split}
		\frac{\partial f_m}{\partial t} &= -\vec{e}_m\cdot\nabla f_m - \frac{1}{\tau}f_m
		+ \frac{1}{\tau}w_m\biggr( a\sum_{q=1}^Q f_q + b\vec{e}_m\cdot\sum_{q=1}^Q \vec{e}_q f_q \biggl) \\
		&+ \frac{2}{\tau}w_m\biggr( c\Bigr(\vec{e}_m\cdot\sum_{q=1}^Q \vec{e}_q f_q\Bigl)^2 + d\norm{\sum_{q=1}^Q \vec{e}_q f_q} ^2 \biggl)
		- \frac{1}{\tau}w_m\biggr( c\Bigr(\vec{e}_m\cdot\sum_{q=1}^Q \vec{e}_q f_q\Bigl)^2 + d\norm{\sum_{q=1}^Q \vec{e}_q f_q} ^2 \biggl)\biggr(\sum_{s=1}^Q f_s\biggl) .
	\end{split}
\end{equation}
To simplify \eqref{eqn:LBE Expanded}, we use the following relations
\begin{equation} \label{eqn:LBE properties}
	\begin{split}
		\norm{\sum_{q=1}^Q f_q\vec{e}_q} ^2
		&= \sum_{q,r=1}^Q (\vec{e}_q \cdot \vec{e}_r) f_q f_r , \\
		\Bigr(\vec{e}_m\cdot\sum_{q=1}^Q \vec{e}_q f_q\Bigl)^2 
		&= \sum_{q,r=1}^Q (\vec{e}_m\cdot\vec{e}_q)(\vec{e}_m\cdot\vec{e}_r)f_qf_r ,\\
		\Bigr(\vec{e}_m\cdot\sum_{q=1}^Q \vec{e}_q f_q\Bigl)^2 \Bigr(\sum_{s=1}^Q f_s\Bigl) 
		&= \sum_{q,r,s=1}^Q (\vec{e}_m\cdot\vec{e}_q)(\vec{e}_m\cdot\vec{e}_r)f_qf_rf_s ,
	\end{split}
\end{equation}
and define
\begin{equation} \label{eqn:gamma_beta}
	\begin{split}
		\beta_{m,q} &\coloneq \frac{1}{\tau}\Bigr( w_m\big(a + b(\vec{e}_m\cdot\vec{e_q})\big) - \delta_{m,q} \Bigl) ,\\
		\gamma_{q,m,r} &\coloneq \frac{1}{\tau}w_m\big( c(\vec{e}_m\cdot\vec{e}_q)(\vec{e}_m\cdot\vec{e}_r) + d(\vec{e}_q\cdot\vec{e}_r) \big) .
	\end{split}
\end{equation}
Inserting \labelcref{eqn:LBE properties,eqn:gamma_beta} into \eqref{eqn:LBE Expanded} yields
\begin{equation} \label{eqn:LBE_Compact}
	\frac{\partial f_m}{\partial t} = -\underbrace{\vec{e}_m\cdot\nabla f_m}_\text{Streaming}
	+ \underbrace{\sum_{q=1}^Q\beta_{m,q}f_q}_\text{Linear Collision}
	+ \underbrace{2\sum_{q,r=1}^Q \gamma_{q,m,r}f_qf_r}_\text{Quadratic Collision}
	- \underbrace{\sum_{q,r,s=1}^Q \gamma_{r,m,s}f_qf_rf_s}_\text{Cubic Collision} .
\end{equation}

Next, we spatially discretize the non-dimensional domain using lattice units, i.e. $\Delta x^\star = \Delta y^\star = \Delta z^\star = 1$, with the following grids: $\vec{x}=(1,\dots,n_x)^T$, $\vec{y}=(1,\dots,n_y)^T$, and $\vec{z}=(1,\dots,n_z)^T$. Then, the non-dimensional domain is $[0,n_x) \times [0,n_y) \times [0,n_z)$, where we assume that $n_x$, $n_y$ and $n_z$ are each a power of two. Next, define $f_m^{kji} \coloneq f_m(t,z_k,y_j,x_i)$, then the discretized velocity distribution function may be written as
\begin{equation} \label{eqn:fvec}
	\begin{split}
		\vec{f} = (&\underbrace{\underbrace{f_1^{111},\dots,f_Q^{111},
				\dots,f_1^{11n_x}, \dots,f_Q^{11n_x}}_{Qn_x},
			\dots,f_1^{1n_yn_x},\dots,f_Q^{1n_yn_x}}_{Qn_xn_y}, \\
		&\vdots \\
		&\underbrace{\underbrace{f_1^{n_z11},\dots,f_Q^{n_z11},
				\dots,f_1^{n_z1n_x},\dots,f_Q^{n_z1n_x}}_{Qn_x},
			\dots,f_1^{n_zn_yn_x},\dots,f_Q^{n_zn_yn_x}}_{Qn_xn_y})^T ,
	\end{split} 
\end{equation}
where $\vec{f}\in\mathbb{C}^{Qn}$ and $n=n_xn_yn_z$. Using this discretization, our goal is to obtain a vectorized version of \eqref{eqn:LBE_Compact} in the form of \eqref{eqn:GeneralIVP} with $N_F=3$ (cubic nonlinearity). To that end, we must determine the structure of $F_1 \in \mathbb{C}^{Qn\times Qn}$, $F_2 \in \mathbb{C}^{Qn\times (Qn)^2}$ and $F_3 \in \mathbb{C}^{Qn\times (Qn)^3}$. First, observe that there are two linear terms in \eqref{eqn:LBE_Compact}: the streaming and linear collision term. We, can therefore write $F_1 = S + \tilde{F}_1$, where $S$ is the streaming operator obtained via the central finite difference method and $\tilde{F}_1$ is the linear collision operator. In the remainder of this section, we find the explicit forms of $S$, $\tilde{F}_1$, $F_2$ and $F_3$, which we then use to construct a vectorized version from \eqref{eqn:LBE_Compact}. 

From \eqref{eqn:LBE_Compact} and the assumption that $\Delta x^* = 1$, the streaming operator $S \in \mathbb{C}^{Qn \times Qn}$ must satisfy
\begin{equation*}
	\big( S \vec{f} \,\big) \big|_m^{kji} = -\frac{1}{2}
	\left( e_m^x(f_m^{k,j,i+1} - f_m^{k,j,i-1}) + e_m^y(f_m^{k,j+1,i} - f_m^{k,j-1,i}) + e_m^z(f_m^{k+1,j,i} - f_m^{k-1,j,i}) \right) ,
\end{equation*}
for $i \in \{1,\dots,n_x\}$, $j \in \{1,\dots,n_y\}$, $k \in \{1,\dots,n_z\}$ and $m \in \{1,\dots,Q\}$. To determine a matrix representation of $S$, first define the velocity matrices
\begin{equation} \label{eqn:ExEyEz}
	E_x \coloneq \begin{pmatrix} e_1^x & & \\ & \ddots & \\ & & e_Q^x \end{pmatrix} ,\quad
	E_y \coloneq \begin{pmatrix} e_1^y & & \\ & \ddots & \\ & & e_Q^y \end{pmatrix} ,\quad
	E_z \coloneq \begin{pmatrix} e_1^z & & \\ & \ddots & \\ & & e_Q^z \end{pmatrix} ,
\end{equation}
where $\vec{e}_m=(e_m^x,e_m^y,e_m^z)$ is the $m\text{th}$ velocity vector. Next, we define 
\begin{equation} \label{eqn:SxSySz}
\begin{split}
	S_x &\coloneq \frac{1}{2} I_{n_yn_z} \otimes (S_{+1}^{n_x} - S_{-1}^{n_x}) \otimes E_x ,\\
	S_y &\coloneq \frac{1}{2} I_{n_z} \otimes (S_{+1}^{n_y} - S_{-1}^{n_y}) \otimes I_{n_x} \otimes E_y ,\\
	S_z &\coloneq \frac{1}{2} (S_{+1}^{n_z} - S_{-1}^{n_z}) \otimes I_{n_xn_y} \otimes E_z .
\end{split}
\end{equation}
Then, it follows that $S = S_x + S_y + S_z$.

Next, from \eqref{eqn:LBE_Compact}, the linear collision operator $\tilde{F}_1 \in \mathbb{C}^{Qn \times Qn}$ must satisfy 
\begin{equation}
	\big( \tilde{F}_1 \vec{f} \,\big) \big|_m^{kji} = \sum_{q=1}^Q \beta_{m,q} f_{q}^{kji} .
\end{equation}
We therefore have 
\begin{equation} \label{eqn:F1tilde}
	\tilde{F}_1 \coloneq\begin{pmatrix} R & & \\ & \ddots & \\ & & R \end{pmatrix}
	,\quad 	
	R \coloneq 
	\begin{pmatrix} 
		\beta_{1,1} & \cdots & \beta_{1,Q} \\ 
		\vdots & \ddots & \vdots \\ 
		\beta_{Q,1} & \cdots & \beta_{Q,Q}
	\end{pmatrix}
	\in\mathbb{C}^{Q\times Q} .
\end{equation}

Again using \eqref{eqn:LBE_Compact}, the quadratic collision operator must satisfy 
\begin{equation}
	\big( F_2 \vec{f}^{\,\otimes 2} \,\big) \bigr|_{m}^{kji}
	= 2\sum_{q,r=1}^Q \gamma_{q,m,r}f_q^{kji}f_r^{kji} .
\end{equation}
If we define $B_{2,q}\in\mathbb{C}^{n\times Qn^2}$ as
\begin{equation} \label{eqn:Bq}
	B_{2,q} \coloneq
	\begin{pmatrix}
		\underbrace{0\;\dots\;0}_{n(q-1)} \; \underbrace{1\;0\;\dots\;0}_{n} \; \underbrace{0\;\dots\;0}_{n(Q-q)} \\
		& \underbrace{0\;\dots\;0}_{n(q-1)} \; \underbrace{0\;1\;0\;\dots\;0}_{n} \; \underbrace{0\;\dots\;0}_{n(Q-q)} \\
		&& \ddots \\
		&&& \underbrace{0\;\dots\;0}_{n(q-1)} \; \underbrace{0\;\dots\;0\;1}_{n} \; \underbrace{0\;\dots\;0}_{n(Q-q)}
	\end{pmatrix} ,
\end{equation}
and define $\Gamma_q\in\mathbb{C}^{Q\times Q}$ as
\begin{equation} \label{eqn:Gammaq}
	\Gamma_q \coloneq
	\begin{pmatrix}
		\gamma_{q11} & \cdots & \gamma_{q1Q} \\
		\vdots & \ddots & \vdots \\
		\gamma_{qQ1} & \cdots & \gamma_{qQQ}
	\end{pmatrix} ,
\end{equation}
then it follows that
\begin{equation} \label{eqn:F2}
	F_2 = 2\sum_{q=1}^Q B_{2,q} \otimes \Gamma_q .
\end{equation}

Lastly, using \eqref{eqn:LBE_Compact}, the cubic collision operator must satisfy 
\begin{equation}
	\big( F_3 \vec{f}^{\,\otimes 3} \big) \big|_{m}^{kji}
	= -\sum_{q,r,s=1}^Q \gamma_{r,m,s}f_q^{kji}f_r^{kji}f_s^{kji} .
\end{equation}
Similar to before, define $B_{3,q}\in\mathbb{C}^{n\times Q^2n^3}$ as
\begin{equation} \label{eqn:B3q}
	B_{3,q} \coloneq
	\begin{pmatrix}
		\underbrace{
			\underbrace{0_{p_q^0}
				\; 1 \; 0_{\hat{p}_q^0}}_{Qn^2}
			\; \dots \;
			0_{p_q^0} \; 1 \; 0_{\hat{p}_q^0}}_{(Qn)^2} \\[-3em]
		& 0_{p_q^1} \; 1 \; 0_{\hat{p}_q^1}
		\; \dots \;
		0_{p_q^1} \; 1 \; 0_{\hat{p}_q^1} \\
		&&\ddots \\
		&&& 0_{p_q^{n-1}} \; 1 \; 0_{\hat{p}_q^{n-1}} 
		\; \dots \;
		0_{p_q^{n-1}} \; 1 \; 0_{\hat{p}_q^{n-1}} \\        
	\end{pmatrix} ,
\end{equation}
with $p_q^{\,j}=Qnj+n(q-1)+j$, $\hat{p}_q^{\,j}=Qn(n-j)-n(q-1)-j-1$ for $j\in\{0,\dots,n-1\}$. Here, the $0_{r}$ blocks are sized $1 \times r$ for an integer $r$. Using this and $\Gamma_q$ defined in \eqref{eqn:Gammaq}, it follows that 
\begin{equation} \label{eqn:F3}
	F_3 = -\sum_{q=1}^Q B_{3,q} \otimes \Gamma_q .
\end{equation}

Finally, bringing together \labelcref{eqn:SxSySz,eqn:F1tilde,eqn:F2,eqn:F3} we obtain the desired vectorized form of the LBE 
\begin{equation} \label{eqn:LBE_ExtraCompact}
	\frac{\partial \vec{f}}{\partial t} = F_1\vec{f} + F_2\vec{f}^{\,\otimes2} + F_3\vec{f}^{\,\otimes3} .
\end{equation}
The advantage of \eqref{eqn:LBE_ExtraCompact} is that it is in the form \eqref{eqn:GeneralIVP}, which means that we can use the methods described in Section \ref{sec:Carl Lin Zero Pad} to: (1) implement the zero padded Carleman linearization method to obtain the linear system $L^{(\text{e})}\vec{Y}^{(\text{e})}=\vec{B}^{(\text{e})}$, and (2) decompose $L^{(\text{e})}$ up to the level of $F_k$ matrices. In Appendix \ref{sec:Condition Number}, we perform a condition number analysis of this zero padded system $L^{(\text{e})}$ from \eqref{eqn:Le Matrix} relative to its non-padded counterpart $L$ from \eqref{eqn:L Matrix}, finding that $\kappa(L^{(e)}) \le \mathcal{O}(\sqrt{n_t} \kappa(L))$. This suggests that a substantial increase in condition number as a result of zero padding cannot be ruled out. In the next section, we find decompositions for $F_1$, $F_2^{(\text{e})}$ and $F_3^{(\text{e})}$ in form of \eqref{eqn:Llgeneral}, with the latter two being zero padded forms of $F_2$ and $F_3$ using \eqref{eqn:Fke def}.


\subsection{Decomposition of the $F_1$ Matrix} \label{sec:F1 Decomp}
From the previous section $F_1=S+\tilde{F}_1$, so here we treat $S$ and $\tilde{F}_1$ separately. Starting with the streaming operator $S$, we first apply the Pauli decomposition on the velocity matrices from \eqref{eqn:ExEyEz}. To do this, first let $\sigma_0=\sigma_x$, $\sigma_1=\sigma_y$, $\sigma_2=\sigma_z$, and $\sigma_3=I$, and consider the quaternary bitstring $\nu = \nu_1, \dots, \nu_N$, for $\nu_i \in \{0,1,2,3\}$ with $i \in \{1,\dots,N\}$ and some positive integer $N$. Then, a string of tensor products of Pauli operators may be written as $\sigma_\nu \coloneq \sigma_{\nu_1} \otimes \cdots \otimes \sigma_{\nu_N}$. Following this, if $f(\eta,m)$ is a quaternary bitstring of length $\log Q$ for $\eta \in \{x,y,z\}$ and an integer $m$, then the Pauli decomposition of $E_\eta$ is
\begin{equation} \label{eqn:E_eta}
	E_\eta = \sum_{m=1}^{N_{E_\eta}} b_{\eta,m} \sigma_{f(\eta,m)}  ,
\end{equation}
where $b_{\eta,m} \in \mathbb{C}$. Then, from \eqref{eqn:SxSySz} we may write
\begin{equation} \label{eqn:Sterm}
	S = \sum_{\eta \in \{x,y,z\}} \sum_{p\in\{+1,-1\}} \sum_{m=1}^{N_{E_\eta}} S_{\eta,p,m} ,
\end{equation}
with
\begin{equation} \label{eqn:S_etapm}
	S_{\eta,p,m} =
	\begin{cases} 
		p I_{n_yn_z} \otimes S_p^{n_x} \otimes \sigma_{f(x,m)} & \text{for } \eta = x \\
		p I_{n_z} \otimes S_p^{n_y} \otimes I_{n_x} \otimes \sigma_{f(y,m)} & \text{for } \eta = y \\
		p S_p^{n_z} \otimes I_{n_xn_y} \otimes \sigma_{f(z,m)} & \text{for } \eta = z ,
	\end{cases}
\end{equation}
where the incrementer $S_{+1}^r$ is defined in \eqref{eqn:Sp1} and the decrementer by $S_{-1}^r \coloneq (S_{+1}^r)^T$. 

Next, from \eqref{eqn:F1tilde}, we can see that $\tilde{F}_1 \coloneq I_n \otimes R$. The real valued matrix $R$ is decomposed in two steps. First, the singular value decomposition (SVD) is used to obtain $R = \frac{1}{\tau} W_R \Sigma_R V_R^T$ where $W_R, V_R \in \mathbb{R}^{Q \times Q}$ are unitary and $\Sigma_R \in \mathbb{R}^{Q \times Q}$ is a diagonal matrix. Second, if $g(m)$ is a quaternary bitstring of length $\log Q$ for an integer $m$, then the Pauli decomposition applied to $\Sigma_R$ is
\begin{equation*}
	\Sigma_R = \sum_{m=1}^{N_R} a_m \sigma_{g(m)} ,
\end{equation*}
where $a_m \in \mathbb{C}$. Bringing both steps together yields
\begin{equation} \label{eqn:R1 Decomp}
	R = \frac{1}{\tau} \sum_{m=1}^{N_{R}} a_m W_R \sigma_{g(m)} V_R^T .
\end{equation}

By combining \labelcref{eqn:S_etapm,eqn:Sterm,eqn:R1 Decomp}, we find that the total number of terms in the decomposition of $F_1$ is $N_{R} + 2(N_{E_x}+N_{E_y}+N_{E_z})$. To determine values for $N_{R}$, $N_{E_x}$, $N_{E_y}$ and $N_{E_z}$, we must select specific lattice structures. Here, we consider the commonly used D1Q3, D2Q9 and D3Q15 lattices defined in Appendix \ref{sec:Velocity Sets}. Since none of these cases satisfy the assumption from Section \ref{sec:LBE Derivation} of $Q$ being a power of two, we embed them into larger systems using the approach from Appendix \ref{sec:Velocity Sets Embed}. This transforms the D1Q3, D2Q9 and D3Q15 cases into D1Q3*, D2Q9* and D3Q15*, respectively, where the asterisk indicates that the $Q$ dimension has been embedded into the nearest power of two larger than $Q$. The relevant numerical values for these three latices are provided in Table \ref{tbl:Pauli SVD Decompositions}.

\begin{table}
	\begin{center}
		\renewcommand{\arraystretch}{1.25}
		\begin{tabular}{p{0.18\linewidth} | p{0.43\linewidth} | p{0.062\linewidth} | p{0.062\linewidth} | p{0.062\linewidth}}
			\hline
			\multicolumn{5}{c}{Relevant Numerical Figures by Lattice Type} \\
			\hline 
			Quantity & Description & D1Q3* & D2Q9* & D3Q15* \\
			\hline \hline
			$N_{E_x}$  & No. of Terms in $E_x$ Pauli Decomp. & 2  & 10   & 12   \\
			$N_{E_y}$  & No. of Terms in $E_y$ Pauli Decomp. & --  & 10   & 11   \\
			$N_{E_z}$  & No. of Terms in $E_z$ Pauli Decomp. & --  & --    & 10   \\
			$N_{R}$    & No. of Terms in $\Sigma_R$ Pauli Decomp. & 4  & 16   & 16  \\
			$N_\Gamma = \sum_{q=1}^Q N_{\Gamma_q}$ & No. of Terms in $\Sigma_{\Gamma_q}$ Pauli Decomp. (summed)
			& 8  & 128  & 224 \\
			$G[CW_R] + G[CV_R]$ & T cost of Controlled SVD unitaries in \eqref{eqn:R1 Decomp} 
			& 27 & 272 & 273 \\
			$G[CW_{\Gamma}] + G[CV_{\Gamma}]$ & T cost of Controlled SVD unitaries in \eqref{eqn:Gamma Decomp}
			& 20 & 255 & 265 \\
		\end{tabular}
	\end{center}
	\caption{Numerical results associated with the Pauli Decompositions and SVDs used in \labelcref{eqn:R1 Decomp,eqn:E_eta,eqn:Gamma Decomp}. For the Pauli decompositions, we use the \texttt{SparsePauliOp} from IBM's \texttt{Qiskit} software using an error tolerance of $10^{-8}$, which provides the results reported rows 1--5 of the table. For rows 6--7, we first decompose the SVD unitaries using the \texttt{Qiskit} transpiler with basis gates $\{CX,U3\}$, then we apply the \texttt{Qiskit SolovayKitaev} algorithm to decompose the single qubit gates into the gate set $\{T, T^\dagger, H\}$. Note, we assume that $G[CW_{\Gamma_q}] = G[CW_\Gamma]$ and $G[CV_{\Gamma_q}] = G[CV_\Gamma]$, for all $q \in \{1,\dots,Q\}$.}
	\label{tbl:Pauli SVD Decompositions}
\end{table}


\subsection{Decomposition of the $F_2$ and $F_3$ Matrices } \label{sec:F2 F3 Decomp}
The $F_2$ and $F_3$ matrices share similar structures so they are decomposed together. From Section \ref{sec:LBE Derivation}, we have the general form
\begin{equation*}
	F_k = d_k \sum_{q=1}^Q B_{k,q} \otimes \Gamma_q ,
\end{equation*}
where $k \in \{2,3\}$, $d_2=2$, $d_3=-1$, $\Gamma_q$ is defined in \eqref{eqn:Gammaq} and $B_{k,q}\in\mathbb{C}^{n\times Q^{k-1}n^k}$ is defined in \labelcref{eqn:Bq,eqn:B3q} for $k=2$ and $k=3$, respectively. Next, $F_k$ is embedded in a square matrix by zero padding using \eqref{eqn:Fke def}, yielding the relations
\begin{equation} \label{eqn:F20}
	F_k^{(\text{e})} \coloneq
	\begin{pmatrix} F_k \\ 0_{((Qn)^k-Qn)\times (Qn)^k} \end{pmatrix} 
	= \begin{pmatrix} d_k\sum_{q=1}^Q B_{k,q} \otimes \Gamma_q \\ 0_{((Qn)^k-Qn)\times (Qn)^k} \end{pmatrix} 
	= d_k \sum_{q=1}^Q\begin{pmatrix} B_{k,q} \\ 0_{(Q^{k-1}n^k-n)\times Q^{k-1}n^k} \end{pmatrix} \otimes \Gamma_q .
\end{equation}
If we define 
\begin{equation} \label{eqn:Dk}
	D_k \coloneq \rho_0^{\otimes \log(Qn)^{k-1}} \otimes I_{n} ,
\end{equation}
then it follows that
\begin{equation} \label{eqn:DPq}
	\begin{pmatrix} B_{k,q} \\ 0_{(Q^{k-1}n^k-n)\times Q^{k-1}n^k} \end{pmatrix}
	= 2^{-(k-2)(\log Q)/2} D_k \overline{B}_{k,q} ,
\end{equation}
where $\overline{B}_{k,q}$ is a unitary completion of $B_{k,q}$ and the coefficient is derived in Appendix \ref{sec:Circuit for B3q}. Next, we apply the same combined SVD and Pauli decomposition technique as used in \eqref{eqn:R1 Decomp} for each matrix $\Gamma_q$. If $h(q,m)$ is a quaternary bitstring of length $\log Q$ for $q \in \{1,\dots,Q\}$ and an integer $m$, then the desired decomposition of each $\Gamma_q$ is
\begin{equation} \label{eqn:Gamma Decomp}
	\Gamma_q 
	=\frac{1}{\tau} W_{\Gamma_q} \Sigma_{\Gamma_q} V_{\Gamma_q}^T
	=\frac{1}{\tau} \sum_{m=1}^{N_{\Gamma_q}} c_{q,m} W_{\Gamma_q} \sigma_{h(q,m)} V_{\Gamma_q}^T ,
\end{equation}
where $c_{q,m} \in \mathbb{C}$, the SVD is used to obtain the first equality with unitaries $W_{\Gamma_q}, V_{\Gamma_q} \in \mathbb{R}^{Q \times Q}$ and diagonal matrix $\Sigma_{\Gamma_q} \in \mathbb{R}^{Q \times Q}$ for each $q \in \{1,\dots,Q\}$, and the second equality is the result of the Pauli decomposition on $\Sigma_{\Gamma_q}$. Finally, inserting \labelcref{eqn:DPq,eqn:Gamma Decomp} into \eqref{eqn:F20} yields
\begin{equation} \label{eqn:Fke expression}
	F_k^{(\text{e})} =
	\sum_{q=1}^Q \sum_{m=1}^{N_{\Gamma_q}} \hat{d}_{k,q,m}
	\left( D_k \overline{B}_{k,q} \right)
	\otimes \left( W_{\Gamma_q} \sigma_{h(q,m)} V_{\Gamma_q}^T \right) ,
\end{equation}
where $\hat{d}_{k,q,m} = c_{q,m} d_k 2^{-(k-2)(\log Q)/2}$. This form is desirable because it is of the form \eqref{eqn:Llgeneral} and $\overline{B}_{k,q}$ is a unitary matrix with an efficient circuit for $k=2$ (Appendix \ref{sec:Circuit for B2q}) and $k=3$ (Appendix \ref{sec:Circuit for B3q}). For example, for the $k=2$ case we insert \eqref{eqn:B2q circ eq} into \eqref{eqn:Fke expression} to obtain
\begin{equation}
	F_2^{(\text{e})} = 
	\sum_{q=1}^Q \sum_{m=1}^{N_{\Gamma_q}}
	\hat{d}_{2,q,m}  \Biggl(
	\left( \rho_0^{\otimes \log n} \otimes \big( \rho_0^{\otimes \log Q} \mathcal{X}_{\log Q}(q-1) \big) 
	\otimes I_n \right)
	M_{Qn+1}^n 
	\Biggr)
	\otimes \left( W_{\Gamma_q} \sigma_{h(q,m)} V_{\Gamma_q}^T \right) ,
\end{equation}
where $\mathcal{X}_{\log Q}(q-1)$ is composed solely of identiy and NOT gates (Appendix \ref{sec:Pk circuit}), and $M_{m+1}^r$ is composed solely of identity, NOT and CNOT gates (Appendix \ref{sec:Construction of M}). A similar expansion for $F_3^{(\text{e})}$ can be made by inserting \eqref{eqn:B3q circ eq} into \eqref{eqn:Fke expression} with a similar resource cost.


\subsection{Explicit Circuit Constructions} \label{sec:LBE Circuits}
We now combine the results of the previous sections to explicitly construct circuits for each term in the decomposition of the zero padded Carleman linearized LBE. From \labelcref{eqn:Le,eqn:L2e,eqn:Ae_expr}, we can see that $L^{(\text{e})}$ is split into four types of terms: $L_1^{(\text{e})}$, constant forcing, linear and nonlinear. Since we are considering a homogeneous equation, the constant forcing term is zero. Therefore, we may write $L^{(\text{e})} = L_1^{(\text{e})} + L_\text{lin}^{(\text{e})} + L_\text{nlin}^{(\text{e})}$, where the first term is defined in \eqref{eqn:L1e}, the second represents the linear components and the third the nonlinear components. Since we grouped linear terms in Sections \labelcref{sec:LBE Derivation,sec:F1 Decomp} by either streaming or collision, we now write $L^\text{(e)}_{\text{lin}} = L^\text{(e)}_{\text{lin,1}} + L^\text{(e)}_{\text{lin,2}}$, where the first is associated with $S$ and the second with $\tilde{F}_1$. In the following subsections, the general expression for each type of term is combined with Algorithms \labelcref{alg:Ul1,alg:Ul2} to construct explicit circuits.

\subsubsection{Circuits for $L_1^{(\textnormal{e})}$} \label{sec:L1e}
Using \eqref{eqn:L1e}, we can write $L_1^{(\text{e})} = \sum_{i=1}^3 c_i L_{1,i}$ with 
\begin{equation} \label{eqn:L1e Terms}
\begin{split}
	L_{1,1} &= I_{2n_t(Qn)^\alpha} , \\
	L_{1,2} &= \rho_1^{\otimes \log n_t} \otimes I_{2(Qn)^\alpha} , \\
	L_{1,3} &= S_{+1}^{n_t} \otimes I_{2(Qn)^\alpha} ,
\end{split}
\end{equation}
where $c_1=c_2=1$ and $c_3=-1$. To obtain the circuits for each of these expressions, we use the embedding procedure from Section \ref{sec:LinComBlock}. Using Theorem \ref{thm:Theorem for Lbar}, we find
\begin{equation} \label{eqn: L1e completions}
	\begin{split}
		\overline{L}_{1,1} &= I_{2n_t(Qn)^\alpha} , \\
		\overline{L}_{1,2} &= \sigma_x^{\otimes \log n_t} \otimes I_{2(Qn)^\alpha} , \\
		\overline{L}_{1,3} &= S_{+1}^{n_t} \otimes I_{2(Qn)^\alpha} .
	\end{split}
\end{equation}
Additionally, we can see that
\begin{equation} \label{eqn:L1eL1eT}
	\begin{split}
		L_{1,1}L_{1,1}^T &= I_{2n_t(Qn)^\alpha} , \\
		L_{1,2}L_{1,2}^T &= \rho_0^{\otimes \log n_t} \otimes I_{2(Qn)^\alpha} , \\
		L_{1,3}L_{1,3}^T &= I_{2n_t(Qn)^\alpha} .
	\end{split}
\end{equation}
Using the expressions in \labelcref{eqn: L1e completions,eqn:L1eL1eT} with Algorithms \labelcref{alg:Ul1,alg:Ul2}, we construct the circuits for the $L_{1,2}$ and $L_{1,3}$ terms in Figure \ref{fig:L1e Circuit}, where $L_{1,1}$ is excluded since it is trivial.

\begin{figure}
	\centering
	\hspace{-4em}
	\begin{subfigure}[t]{0.48\textwidth}
		\centering
		\includegraphics[]{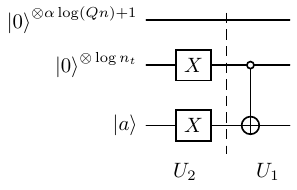}
		\vspace{-1em}
		\begin{equation*}
			\tcbhighmath{
				\begin{split}
					L_{1,2} &= \rho_1^{\otimes \log n_t} \otimes I_{2(Qn)^\alpha} \\
					\overline{L}_{1,2} &= \sigma_x^{\otimes \log n_t} \otimes I_{2(Qn)^\alpha} \\
					L_{1,2} L_{1,2}^T &= \rho_0^{\otimes \log n_t} \otimes I_{2(Qn)^\alpha}
			\end{split}	}
		\end{equation*}
	\end{subfigure}
	\begin{subfigure}[t]{0.48\textwidth}
		\centering
		\includegraphics[]{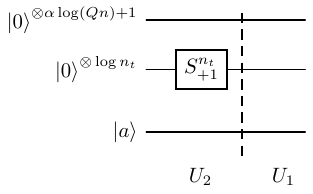}
		\vspace{-1em}
		\begin{equation*}
			\tcbhighmath{
				\begin{split}
					L_{1,3} &= S_{+1}^{n_t} \otimes I_{2(Qn)^\alpha} \\
					\overline{L}_{1,3} &= S_{+1}^{n_t} \otimes I_{2(Qn)^\alpha} \\
					L_{1,3} L_{1,3}^T &= I_{2n_t(Qn)^\alpha} 
			\end{split}	}
		\end{equation*}
	\end{subfigure}
	\caption{Embeddings for the two nontrivial $L_1^{(\text{e})}$ terms of \eqref{eqn:L1e Terms}: the emebedded $L_{1,2}$ term (left) and the embedded $L_{1,3}$ term (right). The $\ket{a}$ wire is a single ancillary qubit required to embed each term into a unitary operation. A control operation or a single qubit gate on a multi-qubit register should be interpreted as the respective operation being applied to every wire within that register. The vertical dashed line separates the $U_1$ component from the $U_2$ component corresponding to the embedding procedure from \eqref{eqn:Ul1 Ul2}. The equations below the circuits are: the desired term to load ($L$), its completion ($\overline{L}$) which constructs $U_2$ using Algorithm \ref{alg:Ul2}, and $LL^T$ which constructs $U_1$ using Algorithm \ref{alg:Ul1}. Circuits drawn using \cite{Kay2018}.}
	\label{fig:L1e Circuit}
\end{figure}

\subsubsection{Circuits for $L_\textnormal{lin,1}^{(\textnormal{e})}$: Streaming Term} \label{sec:Llin1}
We now turn our attention to the streaming terms associated with $S$. Define the set of tuples
\begin{equation*}
\begin{split}
	\Lambda_1 \coloneq \{ (i,j,l,\eta,p,m) \mid &i\in\{1,2\}, j\in\{1,\dots\alpha\}, l\in\{0,\dots,j-1\},\\ 
	&\eta\in\{x,y,z\}, p\in\{+1,-1\}, m\in\{1,\dots, N_{E_\eta}\} \}.
\end{split}
\end{equation*}
Then, by combining \eqref{eqn:L2e} with the linear component of \eqref{eqn:Ae_expr} and inserting \labelcref{eqn:Ajpkm1j,eqn:Sterm}, we obtain
\begin{subequations} 
	\begin{align}
		L_\text{lin,1}^{(\text{e})} &= \sum_{\lambda_\in\Lambda_1} c_{\lambda}^\text{lin,1} L_{\lambda}^\text{lin,1} , \label{eqn:Llin1a} \\
		L_{\lambda}^\text{lin,1} & = \tilde{I}_i \otimes \left( \rho_0 \otimes \rho_3^{\otimes \log(Qn) -1} \right)^{\otimes \alpha-j}
		\otimes \rho_3 \otimes I_{(Qn)^l} \otimes S_{\eta,p,m} \otimes I_{(Qn)^{j-l-1}} , \label{eqn:Llin1b} \\
		c_{\lambda}^\text{lin,1} &= (-1)^{i-1} b_{\eta,m} \label{eqn:Llin1c} ,
	\end{align}
\end{subequations}
where $b_{\eta,m} \in \mathbb{C}$ comes from \eqref{eqn:E_eta}, $S_{\eta,p,m}$ is defined in \eqref{eqn:S_etapm}, and we define $\tilde{I}_1 \coloneq I_{n_t}$ and $\tilde{I}_2 \coloneq \rho_0^{\otimes \log n_t}$. By applying Theorem \ref{thm:Theorem for Lbar} to \eqref{eqn:Llin1b}, we have
\begin{equation} \label{eqn:Llin1 bar}
	\overline{L_{\lambda}^\text{lin,1}} = I_{2n_t(Qn)^{\alpha-j+l}} \otimes S_{\eta,p,m} \otimes I_{(Qn)^{j-l-1}} .
\end{equation}
Next, using $S_{\eta,p,m} S_{\eta,p,m}^T = I_{Qn}$ and $\tilde{I}_i \tilde{I}_i^T = \tilde{I}_i$, we find that 
\begin{equation} \label{eqn:Llin1 Llin1T}
	L_{\lambda}^\text{lin,1}(L_{\lambda}^\text{lin,1})^T = \tilde{I}_i \otimes \left( \rho_0 \otimes \rho_3^{\otimes \log(Qn) -1} \right)^{\otimes \alpha-j} \otimes \rho_3 \otimes I_{(Qn)^j} .
\end{equation}
Using \labelcref{eqn:Llin1 bar,eqn:Llin1 Llin1T} with Algorithms \labelcref{alg:Ul1,alg:Ul2}, we can construct the circuit for $L_{\lambda}^\text{lin,1}$ for any value of $\lambda \in \Lambda_1$, with the most expensive such circuit illustrated in Figure \ref{fig:Lin1 Circuit} corresponding to $\lambda = (2,1,0,x,+1,m)$ for any $m \in \{1,\dots,N_{E_\eta}\}$.

\begin{figure}[!htbp]
	\centering
	\includegraphics[]{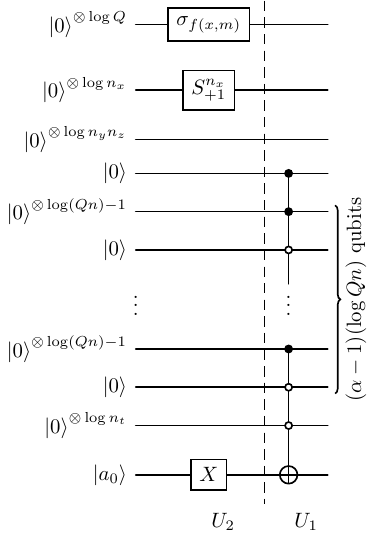}
	\vspace{-1em}
	\begin{equation*}
		\tcbhighmath{
			\begin{split}
			L_{\lambda}^\text{lin,1} &= \rho_0^{\otimes \log n_t} \otimes (\rho_0 \otimes \rho_3^{\otimes \log (Qn)-1})^{\otimes \alpha-1}
			\otimes \rho_3 \otimes I_{n_yn_z} \otimes S_{+1}^{n_x} \otimes \sigma_{f(\eta,m)} \\[5pt]
			\overline{L_{\lambda}^\text{lin,1}} &= I_{2 n_t (Qn)^{\alpha-1}} \otimes I_{n_yn_z} \otimes S_{+1}^{n_x} \otimes 	\sigma_{f(\eta,m)} \\[5pt]
			L_{\lambda}^\text{lin,1} (L_{\lambda}^\text{lin,1})^T &= \rho_0^{\otimes \log n_t} \otimes (\rho_0 	\otimes \rho_3^{\otimes \log (Qn)-1})^{\otimes \alpha-1}
			\otimes \rho_3 \otimes I_{Qn} 
		\end{split}	}
	\end{equation*}	
	\caption{Embedding for $L_{\lambda}^\text{lin,1}$ as defined in \eqref{eqn:Llin1b} using $\lambda = (2,1,0,x,+1,m)$ for any $m \in \{1,\dots,N_{E_\eta}\}$, which is the most expensive circuit among any $\lambda\in\Lambda_1$. The $\sigma_{f(\eta,m)}$ block is the $m\text{th}$ term in the Pauli decomposition of $x$-component of \eqref{eqn:E_eta}, and is therefore a tensor product of $\log Q$ Pauli gates. The $S_{+1}^n$ block is the incrementer circuit from \eqref{eqn:Sp1}.}
	\label{fig:Lin1 Circuit}
\end{figure}

\subsubsection{Circuits for $L_\textnormal{lin,2}^{(\textnormal{e})}$: Linear Collision Term} \label{sec:Llin2}
We now turn our attention to the linear collision terms associated with $\tilde{F}_1$. Define the set of tuples
\begin{equation*}
	\Lambda_2 = \{(i,j,l,m) \mid i\in\{1,2\}, j\in\{1,\dots\alpha\}, l\in\{0,\dots,j-1\}, m\in\{1,\dots,N_{R}\} \} .
\end{equation*}
Then, by combining \eqref{eqn:L2e} with the linear component of \eqref{eqn:Ae_expr} and inserting \labelcref{eqn:Ajpkm1j,eqn:F1tilde}, we obtain
\begin{subequations} 
	\begin{align}
		L_\text{lin,2}^{(\text{e})} &= \sum_{\lambda \in \Lambda_2} c_{\lambda}^\text{lin,2} L_{\lambda}^\text{lin,2}
		, \label{eqn:Llin2a} \\
		L_{\lambda}^\text{lin,2}
		&= \tilde{I}_i \otimes \left( \rho_0 \otimes \rho_3^{\otimes \log(Qn) -1} \right)^{\otimes \alpha-j}
		\otimes \rho_3 \otimes I_{n(Qn)^l} 
		\otimes ( W_R \sigma_{g(m)} V_R^T ) \otimes I_{(Qn)^{j-l-1}} , \label{eqn:Llin2b} \\
		c_{\lambda}^\text{lin,2} &= \frac{1}{\tau}(-1)^{i-1} a_m , \label{eqn:Llin2c}
	\end{align}
\end{subequations}
where $a_m \in \mathbb{C}$, $W_R$, $V_R$ and $\sigma_{g(m)}$ come from \eqref{eqn:R1 Decomp}. Using the mixed product property to factor out $W_R$ and $V_R^T$ on the left and right sides of \eqref{eqn:Llin2b}, respectively, and then applying Corollary \ref{thm:UPV} we have 
\begin{equation} \label{eqn:Llin2 bar}
	\overline{L_{\lambda}^\text{lin,2}} = I_{2n_t(Qn)^{\alpha-j+l}n} \otimes ( W_R \sigma_{g(m)} V_R^T ) \otimes I_{(Qn)^{j-l-1}} .
\end{equation}
Next, from \eqref{eqn:Llin2b} we can see that
\begin{equation} \label{eqn:Llin2 Llin2T}
	L_{\lambda}^\text{lin,2}(L_{\lambda}^\text{lin,2})^T = \tilde{I}_i \otimes \left( \rho_0 \otimes \rho_3^{\otimes \log(Qn) -1} \right)^{\otimes \alpha-j} \otimes \rho_3 \otimes I_{(Qn)^j} ,
\end{equation}
where we have used $\tilde{I}_i\tilde{I}_i^T = \tilde{I}_i$ and $( W_R \sigma_{g(m)} V_R^T )( W_R \sigma_{g(m)} V_R^T )^T = I_Q$ for all $m \in \{1,\dots,N_R\}$. Using \labelcref{eqn:Llin2 bar,eqn:Llin2 Llin2T} with Algorithms \labelcref{alg:Ul1,alg:Ul2}, we can construct the circuit for $L_\lambda^\text{lin,2}$ for any value of $\lambda \in \Lambda_2$, with the most expensive such circuit illustrated in Figure \ref{fig:Lin2 Circuit} corresponding to $\lambda = (2,1,0,m)$ for any $m \in \{1,\dots,N_R\}$.

\begin{figure}[!htbp]
	\centering
	\includegraphics[]{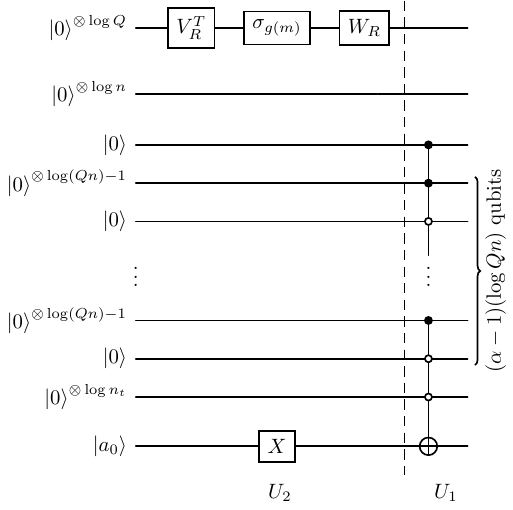}
	\vspace{-1em}
	\begin{equation*}
		\tcbhighmath{
			\begin{split}
				L_{\lambda}^\text{lin,2} &= \rho_0^{\otimes \log n_t} \otimes (\rho_0 \otimes \rho_3^{\otimes \log (Qn)-1})^{\otimes \alpha-1}
				\otimes \rho_3 \otimes I_n \otimes ( W_R \sigma_{g(m)} V_R^T ) \\[5pt]
				\overline{L_{\lambda}^\text{lin,2}} &= I_{2 n_t Q^{\alpha-1} n^\alpha} \otimes ( W_R \sigma_{g(m)} V_R^T )
				\\[5pt]
				L_{\lambda}^\text{lin,2} (L_{\lambda}^\text{lin,2})^T &= \rho_0^{\otimes \log n_t} \otimes (\rho_0 \otimes \rho_3^{\otimes \log (Qn)-1})^{\otimes \alpha-1}
				\otimes \rho_3 \otimes I_{Qn}
		\end{split} }
	\end{equation*}		
	\caption{Embedding for $L_{\lambda}^\text{lin,2}$ as defined in \eqref{eqn:Llin2b} using $\lambda = (2,1,0,m)$ for any $m \in \{1,\dots,N_R\}$, which is the most expensive circuit among any $\lambda\in\Lambda_2$. The $\sigma_{g(m)}$ gate is the $m\text{th}$ term in the Pauli decomposition used in \eqref{eqn:R1 Decomp}, and is therefore a tensor product of $\log Q$ Pauli gates. The $V_R$ and $W_R$ circuits come from the SVD in \eqref{eqn:R1 Decomp} and depend on the specific lattice structure used.}
	\label{fig:Lin2 Circuit}
\end{figure}

\subsubsection{Circuits for $L_\textnormal{nlin}^{(\textnormal{e})}$: Nonlinear Collision Terms}
The final component to analyze is $L_\text{nlin}^{(\text{e})}$, which contains both the quadratic and cubic nonlinear collision terms. First, define the set of tuples
\begin{equation*}
\begin{split}
	\Lambda_3 = \{ (k,i,j,l,q,m) \mid &k\in\{2,3\}, i\in\{1,2\}, j\in\{1,\dots,\alpha-k+1\},\\ &l\in\{0,\dots,j-1\}, q\in\{1,\dots,Q\}, m\in\{1,\dots,N_{\Gamma_q}\}  \}. 
\end{split}
\end{equation*}
Then, by combining \eqref{eqn:L2e} with the nonlinear component of \eqref{eqn:Ae_expr}, and inserting \labelcref{eqn:Aejpkm1j,eqn:Fke expression} we obtain
\begin{subequations} 
	\begin{align}
		L_{\text{nlin}}^{(\text{e})} &= \sum_{\lambda\in\Lambda_3} c_{\lambda}^\text{nlin} L_{\lambda}^\text{nlin} , \label{eqn:Lnonlina} \\
		\begin{split}
		L_{\lambda}^\text{nlin} &= \tilde{I}_i \otimes 
		(\rho_0 \otimes \rho_3^{\otimes \log (Qn)-1})^{\otimes \alpha-k-j+1} 
		\otimes \rho_{1}
		\otimes \big[ \left( \mathcal{P}_k \otimes I_{(Qn)^j} \right) \cdot E \big] , \\
		E &= \bigg[ \left( \rho_0^{\otimes \log(Qn)^{k-1}} \otimes K^{((Qn)^l,Qn)} \right) \\
		&\quad\qquad \cdot 
		\Big( 
		\left( D_k \overline{B}_{k,q} \right) 
		\otimes \left( W_{\Gamma_q} \sigma_{h(q,m)} V_{\Gamma_q}^T \right)
		\otimes I_{(Qn)^l} 
		\Big)
		\cdot K^{((Qn)^k,(Qn)^l)} \bigg]
		\otimes I_{(Qn)^{j-l-1}} , 
		\end{split} \label{eqn:Lnonlinb} \\
		c_{\lambda}^\text{nlin} &= \frac{1}{\tau}(-1)^{i-1} \hat{d}_{k,q,m} , \label{eqn:Lnonlinc}
	\end{align}		 
\end{subequations}
where $\tilde{I}_i$ is defined in Section \ref{sec:Llin1}, $\mathcal{P}_k$ is defined in Appendix \ref{sec:Pk circuit}, $K^{(a,b)}$ is the commutation matrix from Appendix \ref{sec:Circ for Commutation}, $D_k$ is defined in \eqref{eqn:Dk}, $\overline{B}_{k,q}$ is defined in Appendix \ref{sec:Circuit for B2q} (for $k=2$) and \ref{sec:Circuit for B3q} (for $k=3$), $W_{\Gamma_q}$, $V_{\Gamma_q}$ and $\sigma_{h(q,m)}$ come from \eqref{eqn:Gamma Decomp} and the coefficient $\hat{d}_{k,q,m}$ is defined in Section \ref{sec:F2 F3 Decomp}. After factoring out appropriate terms using the mixed product property on \eqref{eqn:Lnonlinb}, and then applying Corollary \ref{thm:UPV}, we have 
\begin{equation} \label{eqn:Lnonlin bar}
	\begin{split}
		\overline{L_{\lambda}^\text{nlin}} &= 
		I_{2n_t(Qn)^{\alpha-k-j+1}} \otimes \sigma_x
		\otimes \big[ \left( \mathcal{P}_k \otimes I_{(Qn)^j} \right) \overline{E} \big] , \\
		\overline{E} &= \bigg[ 
		\left( I_{(Qn)^{k-1}} \otimes K^{((Qn)^l,Qn)} \right) \\
		&\qquad\qquad \left( \overline{B}_{k,q} \otimes \left( W_{\Gamma_q} \sigma_{h(q,m)} V_{\Gamma_q}^T \right)
		\otimes I_{(Qn)^l} \right)
		K^{((Qn)^k,(Qn)^l)} 
		\bigg] 
		\otimes I_{(Qn)^{j-l-1}} .
	\end{split}
\end{equation}
Next, from \eqref{eqn:Lnonlinb} we can see that
\begin{equation} \label{eqn:Lnonlin LnonlinT} 
	L_{\lambda}^\text{nlin} (L_{\lambda}^\text{nlin})^T = \tilde{I}_i \otimes (\rho_0 \otimes \rho_3^{\otimes \log(Qn) -1})^{\otimes \alpha-k-j+1} 
	\otimes \rho_0 \otimes \rho_3^{\otimes \log(Qn)^{k-1}} \otimes I_{(Qn)^j} ,
\end{equation}
where we have used the fact that the following are all unitary matrices: $\mathcal{P}_k$, $K^{(a,b)}$, $\overline{B}_{k,q}$, $W_{\Gamma_q}$, $V_{\Gamma_q}$ and $\sigma_{h(q,m)}$. Additionally, we used the properties $\tilde{I}_i\tilde{I}_i^T=\tilde{I}_i$, $\mathcal{P}_2 \rho_0^{\otimes \log Qn} \mathcal{P}_2^T = \rho_3^{\otimes \log Qn}$, and $\mathcal{P}_3\rho_0^{\otimes 2\log Qn}\mathcal{P}_3^T=\rho_3^{\otimes \log (Qn)-1} \otimes \rho_0 \otimes \rho_3^{\otimes \log Qn}$. Using \labelcref{eqn:Lnonlin bar,eqn:Lnonlin LnonlinT} with Algorithms \labelcref{alg:Ul1,alg:Ul2}, we can construct the circuit for $L_\lambda^\text{nlin}$ for any value of $\lambda \in \Lambda_3$, with the most expensive such circuit illustrated in Figure \ref{fig:nonlin Circuit} corresponding to $\lambda = (3,2,\alpha-2,\alpha-3,q,m)$ for any $m \in \{1,\dots,N_{\Gamma_q}\}$ and $q \in \{1,\dots,Q\}$. 

\begin{figure}
	\centering
	\includegraphics[]{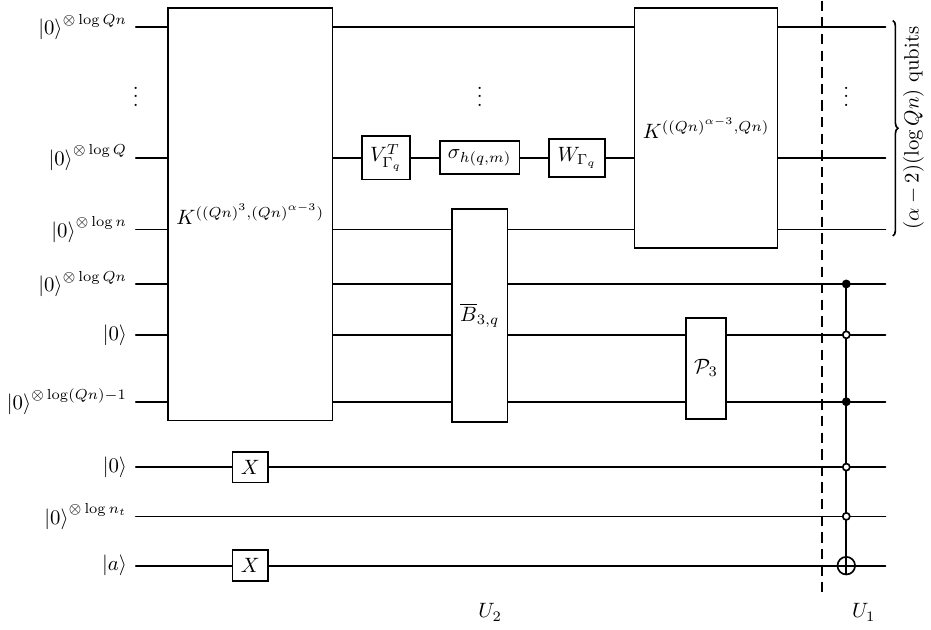}
	\begin{equation*}
	\tcbhighmath{
		\begin{split}
			L_{\lambda}^\text{nlin}
			&= \rho_0^{\otimes \log n_t} \otimes \rho_1 \otimes 
			\Big[ 
			\left( \mathcal{P}_3 \otimes I_{(Qn)^{\alpha-2}} \right) 
			\left( \rho_0^{\otimes 2\log Qn} \otimes K^{((Qn)^{\alpha-3},Qn)} \right) \\
			&\qquad 
			\left( (D_3 \overline{B}_{3,q}) \otimes \left( W_{\Gamma_q} \sigma_{h(q,m)} V_{\Gamma_q}^T \right) \otimes I_{(Qn)^{\alpha-3}} \right)
			K^{((Qn)^3,(Qn)^{(\alpha-3)})}
			\Big] \\[5pt]
			\overline{L_{\lambda}^\text{nlin}} 
			&= I_{2n_t} \otimes 
			\Big[ 
			\left( \mathcal{P}_3 \otimes I_{(Qn)^{\alpha-2}} \right)
			\left( I_{(Qn)^2} \otimes K^{((Qn)^{\alpha-3},Qn)} \right) \\
			&\qquad 
			\left( \overline{B}_{3,q} \otimes \left( W_{\Gamma_q} \sigma_{h(q,m)} V_{\Gamma_q}^T \right) \otimes I_{(Qn)^{\alpha-3}} \right)
			K^{((Qn)^3,(Qn)^{(\alpha-3)})}
			\Big]
			\\[5pt]
			L_{\lambda}^\text{nlin} (L_{\lambda}^\text{nlin})^T 
			&= \rho_0^{\otimes \log n_t} \otimes \rho_0 \otimes \rho_3^{\otimes \log (Qn)-1} \rho_0 
			\otimes \rho_3^{\log Qn} \otimes I_{(Qn)^{\alpha-2}}
		\end{split} }
	\end{equation*}
	\caption{Embedding for $L_\lambda^\text{nlin}$ as defined in \eqref{eqn:Lnonlinb} using  $\lambda=(3,2,\alpha-2,\alpha-3,q,m)$ for any $m \in \{1,\dots,N_{\Gamma_q}\}$ and $q \in \{1,\dots,Q\}$, which is the most expensive circuit among any $\lambda\in\Lambda_3$. The circuits for $\mathcal{P}_3$, $\overline{B}_{3,q}$ and the commutation matrix $K^{(a,b)}$ for integers $a$ and $b$ are provided in Appendices \labelcref{sec:Circuit for B3q,sec:Circ for Commutation,sec:Pk circuit}, respectively. The $\sigma_{h(q,m)}$ gate is the $m\text{th}$ term in the Pauli decomposition used in \eqref{eqn:Gamma Decomp} and is therefore a tensor product of $\log Q$ Pauli gates. The $V_{\Gamma_q}$ and $W_{\Gamma_q}$ circuits come from the SVD in \eqref{eqn:Gamma Decomp} and are determined for a specific lattice structure.}
	\label{fig:nonlin Circuit}
\end{figure}


\section{Resource Estimation} \label{sec:ResourceEstimation}
In this section, we estimate the T gate counts required to implement the decomposition from Section \ref{sec:Carl_LBE} into (1) fault tolerant algorithms using the PREP and SELECT access model from \cite{Childs2012LCU,Hariprakash2025}, and (2) the variational quantum linear solver (VQLS) algorithm. In both methods, we consider the cost of only a single call to the data loading subroutine. The gate counts are estimated following \cite{Jennings2025LBE} whereby they introduce the function $G[U]$, which is defined as the T gate count to implement an arbitrary unitary circuit $U$ with zero error. Here, we consider an LCU of the form \eqref{eqn:LCU} with the assumption that $L_l$ satisfies Theorem \ref{thm:Theorem for U1}, which yields the following relations for the T cost
\begin{equation} \label{eqn:Cost Ul}
	G[U_l] =  G[U_{l,1}] + G[U_{l,2}] = G[C^{\mathcal{T}(L_l)}X] + G[\overline{L_l}] ,
\end{equation}
where $C^kX$ is a multi-controlled NOT operator with $k$ controls and $\mathcal{T}(A)$ is a function which counts the number of elements from the set $\{\rho_0,\rho_3\}$ in the product $AA^T$ for some matrix $A$ of the form \eqref{eqn:Llgeneral}. For example, if $A=\rho_0\otimes\rho_1\otimes\rho_2\otimes\rho_3\otimes I$, then $AA^T=\rho_0^{\otimes2} \otimes \rho_3^{\otimes2} \otimes I$ and so $\mathcal{T}(A)=4$. The first term of the RHS of \eqref{eqn:Cost Ul} comes from Algorithm \ref{alg:Ul1}, where $U_{l,1}$ is implemented with a single multi-control NOT gate having $\mathcal{T}(L_l)$ control operations. The second term comes from the definition $U_{l,2}\coloneq\sigma_x \otimes \overline{L_l}$ (Algorithm \ref{alg:Ul2}). Therefore, to calculate the T gate count of any $U_l$, we require only $\mathcal{T}(L_l)$ and $\overline{L_l}$. In the following subsections, we compute the T gate count to load $L^{(\text{e})}$ from Section \ref{sec:LBE Circuits}.

\begin{table}
	\begin{center}
		\renewcommand{\arraystretch}{1.15}
		\begin{tabular}{p{0.02\linewidth} | p{0.08\linewidth} | p{0.25\linewidth} | p{0.18\linewidth} | p{0.12\linewidth} | p{0.1\linewidth}}
			\hline
			\multicolumn{6}{c}{Relevant Gate Costs} \\
			\hline 
			ID & Type & Description & T Count & Ancilla & Sources \\
			\hline \hline
			\newtag{1}{row:CX} & $CX$ & CNOT & 0 & 0 & -- \\ 
			\newtag{2}{row:Tof} & $C^2X$ & Toffoli & 7 & 0 & \cite{Selinger2013} \\
			\newtag{3}{row:CkX} & $C^kX$ & Multi-controlled \textsc{NOT} & $8k-12$ & two clean & \cite{Khattar2025} \\
			\newtag{4}{
			} & $S_{+1}^r$ & $r \times r$ Incrementer & $12\log r$ & $\log^* r$ clean & \cite{Khattar2025} \\
			\newtag{5}{row:CInc} & $CS_{+1}^r$ & Controlled Incrementer & $12\log 2r$ & $\log^* 2r$ clean & \cite{Gidney2015Increment} \\
			\newtag{6}{row:CSWAP} & $C\text{SWAP}$ & Fredkin & $7$ & 0 & \cite{Cruz2024} \\
			\newtag{7}{row:CkU} & $C^kU$ & Multi-controlled U & $14(k-1)+G[CU]$ & $k-1$ clean & \cite{Jennings2025LBE} \\
			\newtag{8}{row:CH} & $CH$ & Controlled Hadamard & 2 & 0 & \cite{Jennings2025LBE} \\
			\newtag{9}{row:Pk} & $C\mathcal{P}_k$ & Controlled $\mathcal{P}_k$ & 0 & 0 & Appx \ref{sec:Pk circuit} \\
			\newtag{10}{row:CKab} & $CK^{(a,b)}$ & Controlled Commutation & $7\log a \log b$ & 0 & Appx \ref{sec:Circ for Commutation} \\
			\newtag{11}{row:barB2q} & $C\overline{B}_{2,q}$ & Controlled $\overline{B}_{2,q}$ & $7\log n$ & 0 & Appx \ref{sec:Circuit for B2q} \\
			\newtag{12}{row:barB3q} & $C\overline{B}_{3,q}$ & Controlled $\overline{B}_{3,q}$ & $21 \log n + 2\log Q$ & 0 & Appx \ref{sec:Circuit for B3q} \\
		\end{tabular}
	\end{center}
	\caption{A list of relevant gate costs used in the resource estimation. For each of the multi-controlled gates we assume $k>2$. Here, $\log^* $ is the iterated logarithm base $2$. Note, the T counts for rows \ref{row:Pk}--\ref{row:barB3q} use controlled versions of their respective sources.}
	\label{tbl: Res Est Gates}
\end{table}

\subsection{Resource Estimate for the PREP and SELECT Block Encoding Oracles} \label{sec:PREP and SELECT}
The goal of this section is to analyze the T cost of loading the non-unitary matrix $L^{(\text{e})}$ using the PREP and SELECT procedures for the lattices D1Q3*, D2Q9* and D3Q15* (see Appendix \ref{sec:Velocity Sets Embed} for lattice details). First, a minor modification of the PREP and SELECT oracles is required since our decomposition relies on the LCNU approach from Section \ref{sec:LinComBlock} as opposed to the standard LCU. Using the finding of \cite{GS2025_LCofThings}, suitable oracles for our setting can be achieved by including a single ancilla qubit for SELECT and leaving PREP unchanged. Specifically, using the LCU of the form \eqref{eqn:LCU}, the oracles to load $U_L$ are
\begin{equation} \label{eqn:PREP SELECT}
\begin{split}
	\text{PREP} \ket{0}_a &= \frac{1}{\sqrt{c}}\sum_{l=1}^{N_s} \sqrt{\abs{c_l}}\ket{l}_a ,\\
	\text{SELECT} \ket{l}_a \ket{0} \ket{\psi} &= \ket{l}_a U_l \ket{0} \ket{\psi} \\
	&= \ket{l}_a\ket{0}L_l\ket{\psi} + \ket{l}_a\ket{1}L_l^\perp\ket{\psi} ,
\end{split}
\end{equation}
where the subscript $a$ refers to an ancilla register with $\log N_s$ qubits, the coefficients $c_l$ are from \eqref{eqn:LCNU}, $c=\sum_{l=1}^{N_s} \abs{c_l}$ and $\ket{\psi}$ is an arbitrary state. Then, by applying the PREP, SELECT and $\text{PREP}^\dagger$ oracles we load the unitary matrix $\begin{pmatrix} U_L/c & * \\ * & * \end{pmatrix}$ where the asterisk entries represent arbitrary blocks of appropriate size. 

Assuming the worst case where no structure exists between the $c_l$ coefficients, the T cost of the PREP oracle is $\mathcal{O}(N_s\log \frac{1}{\epsilon})$ for desired accuracy $\epsilon$, which is the cost to implement an arbitrary $N_s \times N_s$ unitary matrix. From Section \ref{sec:LBE Circuits}, we find that $N_s = \alpha(\alpha+1)(2N_{E_x}+2N_{E_y}+2N_{E_z}+N_{R}) + 2(\alpha-1)^2N_\Gamma + 3$. Using $\alpha=4$, $\epsilon=10^{-12}$ and the relevant quantities from Table \ref{tbl:Pauli SVD Decompositions}, we find that the PREP oracle requires $\mathcal{O}(10^{4})$, $\mathcal{O}(10^{6})$ and $\mathcal{O}(10^{6})$ T gates for the D1Q3*, D2Q9* and D3Q15* cases, respectively. As will be demonstrated in the following subsections, these costs are negligible relative to the T cost of the SELECT oracles.

The SELECT oracle is implemented by controlling each term in the LCU on the $\log N_s$ ancilla qubits from the PREP oracle. First, define the function $\mathcal{U} \colon \mathbb{C}^{N \times N} \to \mathbb{C}^{2N \times 2N}$ such that $\mathcal{U}(L)=\begin{pmatrix} L & L^\perp \\ L^\perp & L \end{pmatrix}$ following the constructions of Lemma \ref{Lemma for Ul}. Then, following Section \ref{sec:LBE Circuits}, we consider the LCU
\begin{equation*}
	U_{L^{(\text{e})}} = \sum_{i=1}^3 \mathcal{U}(L_{1,i})
	+ \sum_{\lambda \in \Lambda_1} \mathcal{U}(L_{\lambda}^\text{lin,1})
	+ \sum_{\lambda \in \Lambda_2} \mathcal{U}(L_{\lambda}^\text{lin,2})
	+ \sum_{\lambda \in \Lambda_3} \mathcal{U}(L_{\lambda}^\text{nlin}) .
\end{equation*}
With this expression, the total cost to implement the SELECT oracle for $L^{(\text{e})}$ is
\begin{equation} \label{eqn:Cost Le Eq}
	\begin{split}
		G[C^{\log N_s} U_{L^{(\text{e})}}] 
		&=  14N_s(\log N_s -1) 
		+ \sum_{i=1}^3 G[C\mathcal{U}(L_{1,i})]
		+ \sum_{\lambda \in \Lambda_1} G[C\mathcal{U}(L_{\lambda}^\text{lin,1})] \\
		&+ \sum_{\lambda \in \Lambda_2} G[C\mathcal{U}(L_{\lambda}^\text{lin,2})] 
		+ \sum_{\lambda \in \Lambda_3} G[C\mathcal{U}(L_{\lambda}^\text{nlin})] ,
	\end{split}
\end{equation}
where we have used row \ref{row:CkU} of Table \ref{tbl: Res Est Gates}. In the remainder of this section we explicitly calculate the contributions of each term on the RHS of \eqref{eqn:Cost Le Eq}.

\subsubsection{T Count for $L_1^{\textnormal{(e)}}$ Terms} \label{sec:Tcount L1e}
By applying \eqref{eqn:Cost Ul} to the controlled $L_{1,i}$ terms of \eqref{eqn:Cost Le Eq}, we have $G[C\mathcal{U}(L_{1,i})] = G[C\overline{L}_{1,i}] + G[C^{\mathcal{T}(L_{1,i})+1} X]$ for $i = 1,2,3$. From the completions in \eqref{eqn: L1e completions} and using rows \ref{row:CX} and \ref{row:CInc} from Table \ref{tbl: Res Est Gates}, we find
\begin{equation} \label{eqn:Cost U2_L1}
	\sum_{i=1}^3 G[C{\overline{L}_{1,i}}] 
	= \log n_t G[CX] + G[CS_{+1}^{n_t}]
	= 12\log 2n_t .
\end{equation}
Next, with \eqref{eqn:L1eL1eT} we find that $\mathcal{T}(L_{1,1})=0$, $\mathcal{T}(L_{1,2})=\log n_t$, and $\mathcal{T}(L_{1,3})=0$. This gives
\begin{equation} \label{eqn:Cost U1_L1}
	\sum_{i=1}^3 G[C^{\mathcal{T}(L_{1,i})+1} X] 
	= 2G[CX] + G[C^{\log n_t+1}X]
	= 8\log n_t - 4 ,
\end{equation}
where we have used rows \ref{row:CX} and \ref{row:CkX} from Table \ref{tbl: Res Est Gates}. Combining \labelcref{eqn:Cost U2_L1,eqn:Cost U1_L1} yields
\begin{equation} \label{eqn:Cost L1e}
	\sum_{i=1}^3 G[C\mathcal{U}(L_{1,i})] = 20\log n_t + 8 .
\end{equation}

\subsubsection{T Count for $L_\textnormal{lin,1}^\textnormal{(e)}$ Terms} \label{sec:Tcount Llin2}
Next, from \labelcref{eqn:Cost Ul,eqn:Llin1b}, we see that $G[C\mathcal{U}(L_{\lambda}^\text{lin,1})] = G[C\overline{L_{\lambda}^\text{lin,1}}] + G[C^{\mathcal{T}(L_{\lambda}^\text{lin,1})+1} X]$ for a given $\lambda \in \Lambda_1$. From \eqref{eqn:Llin1 bar}, we see that $G[C\overline{L_{\lambda}^\text{lin,1}}] = G[CS_{\eta,p,m}]$. Using the definition of $S_{\eta,p,m}$ from \eqref{eqn:S_etapm} and assuming $N_{E_x}=N_{E_y}=N_{E_z}$, we find
\begin{equation} \label{eqn:Cost U2_Llin2}
	\begin{split}
		\sum_{\lambda\in\Lambda_1} G[C\overline{L_{\lambda}^\text{lin,1}}]
		&= \sum_{\lambda\in\Lambda_1} G[CS_{\eta,p,m}] \\
		&= \alpha(\alpha+1) N_{E_x} \sum_{p\in\{+1,-1\}} \left( G[CS_p^{n_x}] + G[CS_p^{n_y}] + G[CS_p^{n_z}] \right) \\
		&= 72N_{E_x}\alpha(\alpha+1)\log 2n_x ,
	\end{split}
\end{equation}
where the third line uses row \ref{row:CInc} from Table \ref{tbl: Res Est Gates} and $n_x=n_y=n_z$. Next, from \eqref{eqn:Llin1 Llin1T}, we can see that $\mathcal{T}(L_{\lambda}^\text{lin,1}) = (i-1)\log n_t + (\alpha-j)\log Qn + 1$, and so we have
\begin{equation} \label{eqn:Cost U1_Llin2}
\begin{split}
	\sum_{\lambda\in\Lambda_1}
	G[C^{\mathcal{T}(L_{\lambda}^\text{lin,1})+1} X]
	&= \sum_{\lambda\in\Lambda_1} (8\mathcal{T}(L_{\lambda}^\text{lin,1})-4) \\
	&= 8N_{E_x}\alpha(\alpha+1) \left[ 2(\alpha-1)\log Qn + 3\log n_t + 3 \right] .
\end{split}
\end{equation}
Combining \labelcref{eqn:Cost U2_Llin2,eqn:Cost U1_Llin2} yields
\begin{equation} \label{eqn:Cost Llin2}
	\sum_{\lambda \in \Lambda_1} G[C\mathcal{U}(L_{\lambda}^\text{lin,1})]
	= 8N_{E_x}\alpha(\alpha+1) \left[ 2(\alpha-1)\log Qn + 3(\log n_t + 3\log2n_x +1) \right] .
\end{equation}

\subsubsection{T Count for $L_\textnormal{lin,2}^\textnormal{(e)}$ Terms} \label{sec:Tcount Llin1}
Next, using \eqref{eqn:Cost Ul} with the $L_{\lambda}^\text{lin,2}$ terms \eqref{eqn:Llin2b} for $\lambda\in\Lambda_2$, we have $G[C\mathcal{U}(L_{\lambda}^\text{lin,2})] = G[C\overline{L_{\lambda}^\text{lin,2}}] + G[C^{\mathcal{T}(L_{\lambda}^\text{lin,2})+1} X]$. From the completion in \eqref{eqn:Llin2 bar}, and given that the T cost for any controlled Pauli gate is zero, we find that
\begin{equation} \label{eqn:Cost U2_Llin1}
	\sum_{\lambda \in \Lambda_2} G[C\overline{L_{\lambda}^\text{lin,2}}]
	= N_R \alpha(\alpha+1)( G[CW_R] + G[CV_R] ) ,
\end{equation}
where $G[CW_R]$ and $G[CV_R]$ are found numerically and provided in Table \ref{tbl:Pauli SVD Decompositions}.

Next, from \eqref{eqn:Llin2 Llin2T}, we can see that $\mathcal{T}(L_{\lambda}^\text{lin,2})=(i-1)\log n_t + (\alpha-j)\log Qn + 1$. So, using row \ref{row:CkX} from Table \ref{tbl: Res Est Gates}, we have
\begin{equation} \label{eqn:Cost U1_Llin1}
	\begin{split}
		\sum_{\lambda \in \Lambda_2}  
		G[C^{\mathcal{T}(L_{\lambda}^\text{lin,2}) + 1} X] 
		&= \sum_{\lambda \in \Lambda_2} 8(\mathcal{T}(L_{\lambda}^\text{lin,2}) + 1) - 12 \\
		&= \frac{4}{3}N_{R}\alpha(\alpha+1) 
		\left[ 2(\alpha-1) \log Qn + 3\log n_t + 3 \right] .
	\end{split}
\end{equation}
Combining \labelcref{eqn:Cost U1_Llin1,eqn:Cost U2_Llin1} yields
\begin{equation} \label{eqn:Cost Llin1}
	\sum_{\lambda \in \Lambda_2} G[C\mathcal{U}(L_{\lambda}^\text{lin,2})] 
	= \frac{4}{3}N_{R}\alpha(\alpha+1) 
	\left[ 2(\alpha-1) \log Qn + 3\log n_t + \frac{3}{4}\left( G[CW_R] + G[CV_R] \right) + 3\right] .
\end{equation}

\subsubsection{T Count for $L_\textnormal{nlin}^\textnormal{(e)}$ Terms} \label{secTcount nonlin}
Next, using \eqref{eqn:Cost Ul} with the $L_{\lambda}^\text{nlin}$ terms \eqref{eqn:Lnonlinb} for $\lambda\in\Lambda_3$, we have $G[C\mathcal{U}(L_{\lambda}^\text{nlin})] = G[C\overline{L_{\lambda}^\text{nlin}}] + G[C^{\mathcal{T}(L_{\lambda}^\text{nlin})+1} X]$. We make the assumption that $G[CW_{\Gamma_q}] = G[CW_\Gamma]$ and $G[CV_{\Gamma_q}] = G[CV_\Gamma]$ for all $q \in \{1,\dots,Q\}$, where $G[CW_\Gamma]=\max\{G[CW_{\Gamma_q}] \;|\; q \in \{1,\dots,Q\}\}$ and $G[CV_\Gamma]=\max\{G[CV_{\Gamma_q}] \;|\; q \in \{1,\dots,Q\}\}$. Then, from the completion in \eqref{eqn:Lnonlin bar}, we have
\begin{equation} \label{eqn:Cost Lnonlin Completion}
\begin{split}
	G[C\overline{L_{\lambda}^\text{nlin}}]
	&= G[C\mathcal{P}_k] + G[CK^{((Qn)^l,Qn)}] + G[C\overline{B}_{k,q}]
	+ G[CW_{\Gamma}] + G[CV_{\Gamma}] + G[CK^{((Qn)^k,(Qn)^l)}] \\
	&= 7l(\log Qn)^2(k+1) + 7(k+1)\log n + 2(k-2)\log Q + G[CW_{\Gamma}] + G[CV_{\Gamma}],
\end{split}
\end{equation}
where the second equality is the result of using rows \ref{row:Pk}, \ref{row:CKab}, \ref{row:barB2q} and \ref{row:barB3q} from Table \ref{tbl: Res Est Gates}. Note, $G[CW_{\Gamma}]$ and $G[CV_{\Gamma}]$ are found numerically and provided in Table \ref{tbl:Pauli SVD Decompositions}. This results in
\begin{equation} \label{eqn:Cost U2_nonlin}
\begin{split}
	\sum_{\lambda\in\Lambda_3} G[C\overline{L_{\lambda}^\text{nlin}}] 
	&= \frac{1}{3}N_\Gamma(\alpha-1)\Big[ 7(\alpha-2)(7\alpha-12)(\log Qn)^2  \\
	&\quad + 3(\alpha-1)\left( 28\log n + 2\log Q + 2G[CW_{\Gamma}] + 2G[CV_{\Gamma}] \right) \Big] ,
\end{split}
\end{equation}
where $N_\Gamma \coloneq \sum_{q=1}^Q N_{\Gamma_q}$. Next, from \eqref{eqn:Lnonlin LnonlinT}, we can see that $\mathcal{T}(L_{\lambda}^\text{nlin}) = (i-1)\log n_t + (\alpha-j)\log Qn + 1$, and so using row \ref{row:CkX} from Table \ref{tbl: Res Est Gates}, we have
\begin{equation} \label{eqn:Cost U1_nonlin}
	\begin{split}
		\sum_{\lambda\in\Lambda_3}
		G[C^{\mathcal{T}(L_{\lambda}^\text{nlin})+1} X]
		&= \sum_{\lambda\in\Lambda_3} 8(\mathcal{T}(L_{\lambda}^\text{nlin})+1) - 12 \\
		&= \frac{8}{3}N_\Gamma(\alpha-1) 
		\big[ 2(\alpha^2+\alpha-3)\log Qn + 3(\alpha-1)(\log n_t +1) \big] .
	\end{split}
\end{equation}
Combining \labelcref{eqn:Cost U1_nonlin,eqn:Cost U2_nonlin} yields
\begin{equation} \label{eqn:Cost Lnonlin}
\begin{split}
	\sum_{\lambda \in \Lambda_3} G[C\mathcal{U}(L_{\lambda}^\text{nlin})]
	&=
	\frac{1}{3}N_\Gamma (\alpha-1)
	\Big[ 7(\alpha-2)(7\alpha-12)(\log Qn)^2 + 16(\alpha^2+\alpha-3)\log Qn \\
	&\qquad\qquad + 3(\alpha-1)(56\log n + 8\log n_t + 4\log Q + 2G[CW_{\Gamma}] + 2G[CV_{\Gamma}] + 8) \Big] .
\end{split}
\end{equation}

We can now finally calculate the full T cost of the PREP and SELECT oracles. By inserting \labelcref{eqn:Cost L1e,eqn:Cost Llin1,eqn:Cost Llin2,eqn:Cost Lnonlin} into \eqref{eqn:Cost Le Eq}, we obtain the full expression
\begin{equation} \label{eqn:Cost Prep Select}
	\begin{split}
		G[C^{\log N_s} U_{L^{(\text{e})}}] 
		&= \frac{1}{3}N_\Gamma (\alpha-1)
		\Big[ 7(\alpha-2)(7\alpha-12)(\log Qn)^2 + 16(\alpha^2+\alpha-3)\log Qn \\
		&\qquad\qquad + 3(\alpha-1)(28\log n + 8\log n_t + 2\log Q + 2G[CW_{\Gamma}] + 2G[CV_{\Gamma}] + 8) \Big] \\
		&+ \alpha(\alpha+1) 
		\Big[ (16N_{E_x} + \sfrac{8}{3}N_{R})(\alpha-1)\log Qn + (24N_{E_x} + 4N_{R})\log n_t \\
		& \qquad\qquad + 72N_{E_x}\log n_x + 32N_{E_x} + 4N_{R} + G[CW_R] + G[CV_R] 
		\Big] \\
		& + 20\log n_t + 8 + 14N_s(\log N_s-1) .
	\end{split}
\end{equation}
Note that upper bounds for the 1D and 2D geometries may obtained by changing $n$ to $n_x$ and $n_xn_y$, respectively. Finally, this result is added to the T cost of the PREP oracle from Section \ref{sec:PREP and SELECT} and plotted in Figure \ref{fig:Cost Prep Select} for the D1Q3*, D2Q9* and D3Q15* cases using $N_{E_x}$, $N_R$, $N_\Gamma$, $G[CW_R]$, $G[CV_R]$, $G[CW_{\Gamma}]$ and $G[CV_{\Gamma}]$ from Table \ref{tbl:Pauli SVD Decompositions}. The leading order term for the D3Q15* case ($\frac{7}{3}N_\Gamma(\alpha-1)(\alpha-2)(7\alpha-12)(\log Qn_xn_yn_z)^2$) from \eqref{eqn:Cost Prep Select} is also included in Figure \ref{fig:Cost Prep Select} (dashed line) to demonstrate that the T gate cost is dominated by a single term. This leading term is attributed to the commutation matrices used in the nonlinear components. Therefore, either an improved circuit for the commutation matrix or a decrease in $N_\Gamma$ could result in large savings to the total T cost. Finally, we conclude that the T cost for PREP and SELECT scales like $\mathcal{O}(\alpha^3 Q^2 (\log n)^2)$ where we use $N_\Gamma \sim \mathcal{O}(Q^2)$ following from Table \ref{tbl:Pauli SVD Decompositions}.
		
\begin{figure}[h]
	\centering
	\includegraphics[width=0.75\linewidth]{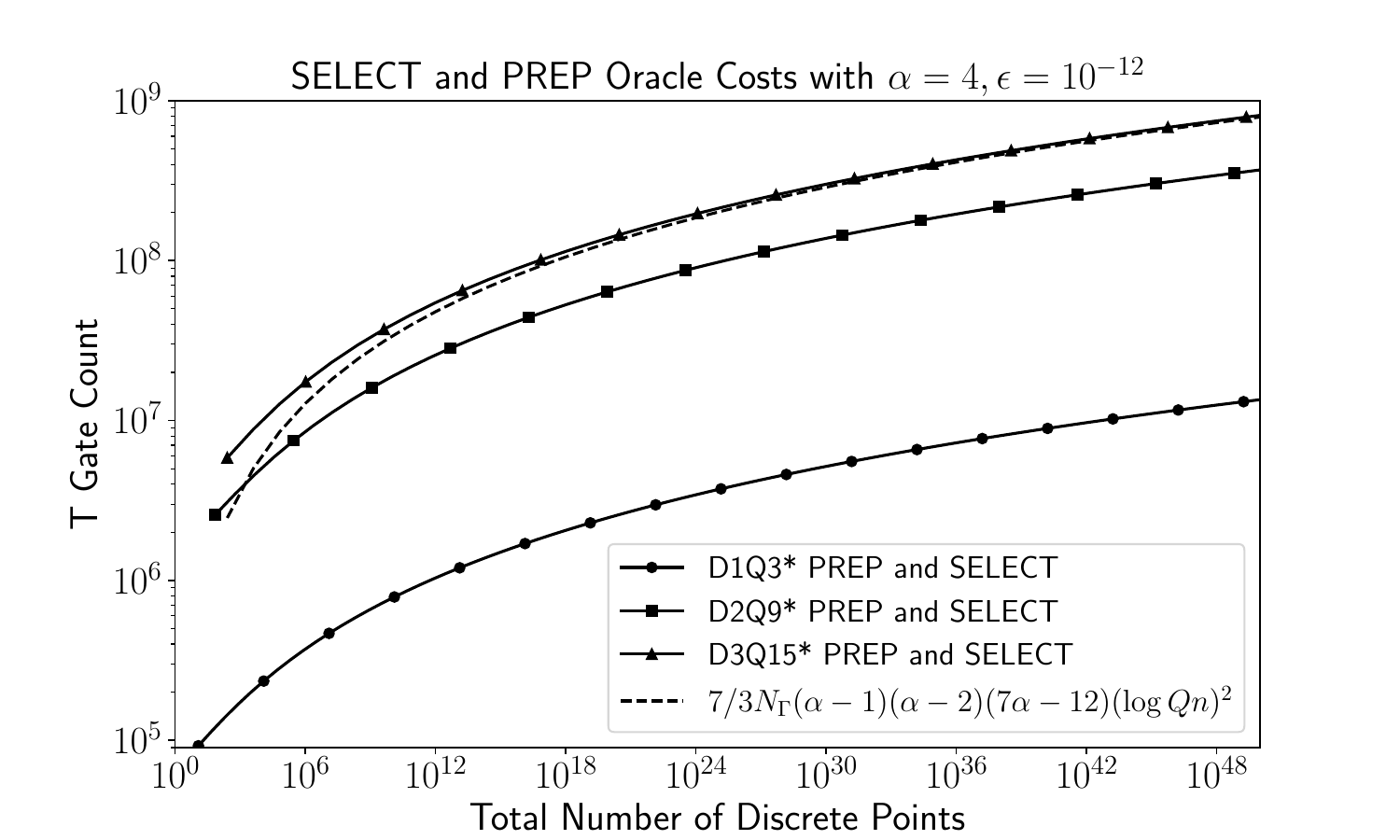}
	\caption{T gate count to encode the Carleman linearized LBE matrix using the PREP and SELECT oracles for the D1Q3*, D2Q9* and D3Q15* lattices. T count is plotted as a function of the total number of discrete points defined as $nn_tQ$. Each case uses \eqref{eqn:Cost Prep Select} with $n_t=n_x$ and the following: for D1Q3* $n=n_x$ and $Q=4$, for D2Q9* $n=n_xn_y$ and $Q=16$, and for D3Q15* $n=n_xn_yn_z$ and $Q=16$. We also use $\alpha=4$, $\epsilon=10^{-12}$ and $N_{E_x}$, $N_{E_y}$, $N_{E_z}$, $N_{R}$, and $N_{\Gamma}$ come from Table \ref{tbl:Pauli SVD Decompositions}. The leading term of \eqref{eqn:Cost Prep Select} is plotted in the dashed curve for the D3Q15* lattice.}
	\label{fig:Cost Prep Select}
\end{figure}


\subsection{Resource Estimate for VQLS} 
The VQLS method is a variational approach whereby an ansatz $V(\vec{\theta}_i)$ with variational parameters $\vec{\theta}_i$ is used to approximate the solution of the linear system $L\ket{x}=\ket{b}$ by searching the parameter space to find the optimal parameters $\vec{\theta}_\text{opt}$ such that $V(\vec{\theta}_\text{opt}) \ket{0} \approx \ket{x}$. To do this, the quantum computer is tasked with calculating classically intractable expectation values, which are then passed to a classical computer to (1) compute a cost function, and (2) update the variational parameters from $\vec{\theta}_i$ to $\vec{\theta}_{i+1}$ using an optimization routine. Given the updated variational parameters $\vec{\theta}_{i+1}$, the quantum computer recomputes the expectation values and again passes the information back to the classical computer to repeat (1) and (2). This back-and-forth process is repeated until the cost function converges to a predefined criteria at which point $\vec{\theta}_\text{opt}$ is obtained. An important note for the present study is that VQLS relies on the LCU data loading procedure.

The goal of this section is to calculate the T cost required to compute the expectation values per iteration by adopting the specific VQLS approach used in \cite{GS24,Demirdjian2025,Gnanasekaran2024ConstrainedOpt}, which is modified to work with the LCNU approach. To do this, first consider $L=\sum_{l=1}^{N_s} c_l L_l$ with $c_l\in\mathbb{C}$, $L,L_l \in \mathbb{C}^{2^{n_q} \times 2^{n_q}}$ where each $L_l$ is a specific non-unitary matrix of the form \eqref{eqn:Llgeneral}. Then from \cite{Bravo2023}, the local cost function is 
\begin{equation*}
	\begin{split}
		C(\vec{\theta}_i) &= \frac{1}{2} \left(1 - \frac{1}{n_q} 
		\frac{\sum_{r=1}^{n_q}\sum_{l,l^\prime=1}^{N_s} c_l c_{l^\prime}^* \delta_{l,l^\prime}^r}
		{\sum_{l,l^\prime=1}^{N_s} c_l c_{l^\prime}^* \beta_{l,l^\prime}} \right) \\
		\delta_{l,l^\prime}^r &= \bra{V(\vec{\theta}_i)} L_{l^\prime}^\dagger U_b Z_r U_b^\dagger L_l \ket{V(\vec{\theta}_i)} \\
		\beta_{l,l^\prime} &= \bra{V(\vec{\theta}_i)} L_{l^\prime}^\dagger L_l \ket{V(\vec{\theta}_i)} ,
	\end{split}
\end{equation*}
where $U_b\ket{0}=\ket{b}$ and $Z_r$ is the Pauli-Z matrix applied on the $r\text{th}$ qubit. In the case that $\delta_{l,l^\prime}^r$ and $\beta_{l,l^\prime}$ are real valued, then $N_s^2(n_q+1)$ circuits are required to evaluate the cost function. From Section \ref{sec:PREP and SELECT}, we find that $N_s = \alpha(\alpha+1)(2N_{E_x}+2N_{E_y}+2N_{E_z}+N_{R}) + 2(\alpha-1)^2N_\Gamma + 3$. The $\delta_{l,l^\prime}^r$ and $\beta_{l,l^\prime}$ expectation values are calculated using a modified Hadamarad test, as introduced in \cite{GS24}, where an additional ancilla qubit is required to embed the non-unitary $L_l$ matrices into the unitary matrices $\mathcal{U}(L_l)$, using the $\mathcal{U}$ function from Section \ref{sec:PREP and SELECT}. The Hadamard test requires that each gate of the $\mathcal{U}(L_l)$ circuits be controlled on an ancilla qubit, transforming their T costs from $G[\mathcal{U}(L_l)]$ to $G[C\mathcal{U}(L_l)]$. 

To calculate the T cost, consider that $\delta_{l,l^\prime}^r$ and $\beta_{l,l^\prime}$ are independent for different values of $l$, $l^\prime$ and $r$, and so the circuits may be run in parallel. The T cost can therefore be characterized by the most expensive circuit for any choice of $l$,$l^\prime$ and $r$, which, following Section \ref{sec:PREP and SELECT}, comes from the $\mathcal{U}(L_\lambda^\text{nlin})$ term for $\lambda\in\Lambda_3$. Specifically, the most expensive circuit is obtained by letting $\lambda = (3,2,\alpha-2,\alpha-3,q,m)$ for any $q\in\{1,\dots,Q\}$ and $m\in \{1,\dots,N_{\Gamma_q}\}$ since this value maximizes the cost of the commutation matrices. The T cost for this circuit is 
\begin{equation} \label{eqn:Cost Max VQLS}
\begin{split}
	G[C\mathcal{U}(L_{\lambda})] 
	&= G[C\overline{L_{\lambda}^\text{nlin}}] + G[C^{\mathcal{T}(L_{\lambda}^\text{nlin})+1}X] \\
	&=28(\alpha-2)(\log Qn)^2 + 28\log n + 2\log Q + 8\log n_t + 24\log Qn + 4 + G[CW_\Gamma] + G[CV_\Gamma] ,
\end{split}
\end{equation}
where the first equality is obtained using \eqref{eqn:Cost Ul}, and the second is obtained by evaluating the $\lambda = (3,2,\alpha-2,\alpha-3,q,m)$ value into both \labelcref{eqn:Cost Lnonlin Completion,eqn:Lnonlin LnonlinT} and combining the results. 

Both the total number of circuits and costliest circuit \eqref{eqn:Cost Max VQLS} are plotted in Figure \ref{fig:Cost VQLS}. Here, we find that while the maximum T cost is relatively shallow, the number of circuits is quite high since it depends quadratically on $N_s$. The reason that the D2Q9* and D3Q15* cases are approximately equal is because the $N_s$ parameter depends indirectly on $Q$ (through $N_{E_x}$, $N_{E_y}$, $N_{E_z}$, $N_{R}$ and $N_\Gamma$). There are two potential ways to reduce the number of circuits: (1) reduce the $N_{E_x}$, $N_{E_y}$, $N_{E_z}$, $N_{R}$ and $N_\Gamma$ parameters using different decompositions for their associated terms, and (2) change the cost function to eliminate the quadratic dependence on $N_s$. 

\begin{figure}[!htbp]
	\centering
	\includegraphics[width=0.75\linewidth]{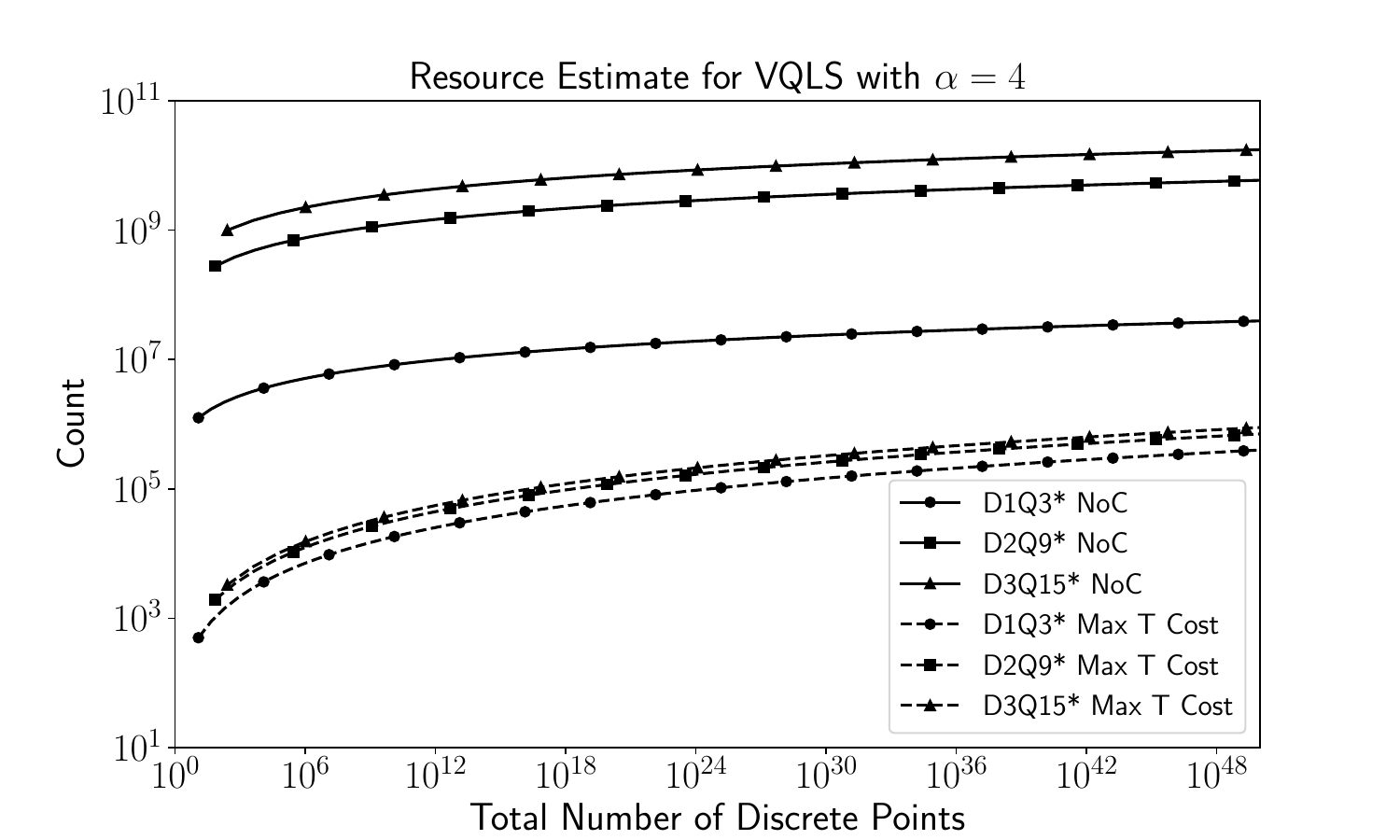}
	\caption{The number of circuits (NoC) and maximum T cost per circuit to encode the Carleman linearized LBE for the D1Q3*, D2Q9* and D3Q15* lattices. The total number of discrete points on the x-axis is defined as $nn_tQ$ where for D1Q3* $n=n_x$ and $Q=4$, for D2Q9* $n=n_xn_y$ and $Q=16$, and for D3Q15* $n=n_xn_yn_z$ and $Q=16$. Here, we use $n_t=n_x=n_y=n_z$, $\alpha=4$ and the values for $N_{E_x}$, $N_{E_y}$, $N_{E_z}$, $N_{R}$, and $N_{\Gamma}$ required to calculate the NoC for each case come from Table \ref{tbl:Pauli SVD Decompositions}.}
	\label{fig:Cost VQLS}
\end{figure}


\section{Conclusions and Discussion}
\twocolumngrid
	
In this work, we investigate the question of how to efficiently load an exponentially sized sparse and structured matrix onto a quantum computer, a critical bottleneck in obtaining a quantum advantage for solving ODEs and PDEs. First, we build upon the existing LCNU literature \cite{GS24,GS2025_LCofThings,Demirdjian2025,Gnanasekaran2024ConstrainedOpt,Surana2025ConstrainedOpt} by introducing a new class of matrices to be used in constructing decompositions whose basic form is described in equation \eqref{eqn:Llgeneral}. The upshot of our generalization is that non-unitary elements of this class admit straightforward unitary completions as demonstrated by Theorem \ref{thm:Theorem for Lbar}. Consequently, each term of our proposed LCNU can be easily embedded into a unitary matrix to obtain an LCU with the same number of terms. Algorithms \labelcref{alg:Ul1,alg:Ul2} outline a procedure to construct the embedded LCNU terms. Next, we introduce both a procedure to zero pad an arbitrary Carleman linearized system and an efficient LCNU decomposition for said system. After embedding the terms of this LCNU into unitary matrices, we investigate the efficiency of the resulting LCU when applied to the 3D Carleman linearized LBE. We find that the number of terms in this LCU scales like $N_s \sim (\alpha^2 Q^2)$, where $\alpha$ is the truncation order used in the Carleman linearization procedure, $Q$ is the number of discrete velocities from the LBE, and we have assumed $N_\Gamma \sim \mathcal{O}(Q^2)$ following Table \ref{tbl:Pauli SVD Decompositions}. Importantly, $N_s$ is completely independent of both the number of temporal and spatial discretization points. Next, we compose explicit circuits for each type of term in the LCU, demonstrating that the circuits are easy to construct. Finally, we perform a resource estimation of the T gate cost for data loading via our LCU using (1) the PREP and SELECT oracles as part of the fault-tolerant block encoding strategy, and (2) the VQLS approach for a NISQ implementation. In the former, we find that the T cost scales like $\mathcal{O}(\alpha^3 Q^2 (\log n)^2)$ where $n=n_x n_y n_z$ and $n_x$, $n_y$ and $n_z$ are the number of spatial grid points in the $x$, $y$ and $z$ dimensions, respectively. In the VQLS approach, we find that $N_s^2(\log (2n_tn^\alpha)+1)$ circuits, for $n_t$ time steps, are required with a worst case T cost of $\mathcal{O}(\alpha (\log Qn)^2)$. Therefore, in both cases, our data loading strategy has a polylogarithmic dependence on the number of discretization points, providing a viable strategy toward quantum advantage for solving real world fluid dynamics problems. 

A reasonable follow up question to ask is, does our Carleman linearized LBE decomposition strategy use fewer resources than a Pauli decomposition? To answer this question, we take the product between the number of terms in the decomposition and the gate count per term as our metric. For this comparison, we consider the D1Q3* lattice for the smallest possible case of $n_x=2$, $n_t=2$, $\alpha=4$ and the lattice parameters from Table \ref{tbl:Pauli SVD Decompositions}. These parameters are used to construct $L^{(\text{e})}$ from \eqref{eqn:Le Matrix}, which in this case is sized $2^{14} \times 2^{14}$. Beginning with the Pauli decomposition, we use IBM's \texttt{Qiskit} function \texttt{SparsePauliOp} (\cite{Javadi2024}) with an error tolerance of $10^{-8}$ resulting in $17,044,182$ terms in the decomposition. Since each term is implemented with tensor products of $\log(2 n_t (Qn)^\alpha)$ Pauli matrices, this results in $17,044,182 \cdot \log(2 n_t (Qn)^\alpha) \sim \mathcal{O}(10^8)$ Pauli gates. Note, this is an upper bound since we do not distinguish identity gates from Pauli gates. In contrast, our decomposition method requires only $654$ terms in the decomposition with a worst case circuit depth of $\mathcal{O}(\alpha(\log Qn)^2)$ from \eqref{eqn:Cost Max VQLS} resulting in just $\mathcal{O}(10^4)$ total T gates. Note, this is an upper bound since many of the circuits require substantially fewer T gates than the worst case used here. Though this is not a direct comparison, since we are comparing T gate counts to Pauli gate counts, our method does result in a four order of magnitude improvement when compared with the Pauli decomposition. Furthermore, the number of terms in our method scales independently with the number of spatial and temporal discretization points. Whereas, the number of terms in the Pauli decomposition has an upper bound that scales like $4^{N}$, where $N=2n_t (Qn)^\alpha$ is the matrix dimension, i.e. an exponential dependence on the number of spatial and temporal discretization points. In summary, our decomposition method vastly outperforms the Pauli decomposition for the Carleman linearized LBE. 

We emphasized that while an efficient decomposition of the Carleman linearized LBE is a necessary condition to solve said system with quantum advantage, it is not a sufficient condition. A number of additional challenges in the workflow must also be addressed before speed-up can be claimed. For example, QLSAs typically have a linear or worse dependence on the matrix condition number, and the zero padding method may lead to a substantial increase in condition number as found in Appendix \ref{sec:Condition Number}. Therefore, a matrix preconditioner is likely required to solve the exponentially sized linear systems of interest \cite{Lapworth2025Preconditioner,Jin2025Preconditioner,Hosaka2023Preconditioning,Golden2022Preconditioning}. Another challenge is the readout problem. In essence, even if it is possible to solve an exponentially sized linear system with advantage, its size prohibits a full solution readout and, therefore, only a small subset of the solution may be obtained. Additionally, if one is to consider using a variational method to solve the linear system (like VQLS), then other challenges sprout up. The most well-known of which is the barren plateau \cite{Larocca2025BarrenPlateau}, though exciting techniques like quantum multigrid methods \cite{Pool2024Multigrid,Keller2024Multigrid} could provide a solution. 

We close this article with a discussion on both some limitations of our method and how it can be improved upon. First and foremost, for simplicity we have used the backward Euler method, which is only first order accurate. Thankfully, the flexibility of our approach permits easy construction of sparse block matrices, and so higher order discretization schemes (both spatial and temporal) should add only minimal overhead. Next, we note that we have not made any effort to transpile our circuits, and, therefore, it is likely that our T cost estimates will be substantially reduced when implemented with a transpiler. In the opposite direction, we have also assumed an all-to-all connectivity, and so there will be an added overhead when considering other architectures. We also note that further improvements are likely possible if the $\Sigma_{\Gamma_q}$ matrices (diagonal matrix from the SVD in \eqref{eqn:Gamma Decomp}) are decomposed using elements from $\mathbb{P}_\rho$ rather than from $\mathbb{P}_\sigma$, something that we will explore in future work. Lastly, we would like to highlight that our LCNU approach from Section \ref{sec:LinComBlock} is best suited for the decomposition of sparse, structured matrices, and we believe it is likely useful for constructions well beyond Carleman linearized systems. 

\section{Acknowledgments}
We gratefully acknowledge the support of the Naval Research Laboratory's Base Program titled Simulating Fluid Dynamics using the Semi-Lagrangian Method on a Quantum Computer (PE0601153N).

\section{Author contribution statements}
R.D. and T.H. contributed with the conceptualization, investigation, methodology and writing of this work; A.G. and A.S. contributed with the conceptualization, investigation and methodology of this work; D.G. contributed with the conceptualization of this work.  

\section{Data Availability}
The data that support the findings of this article are not publicly available because of legal restrictions preventing unrestricted public distribution. The data are available upon request.


\onecolumngrid
\bibliography{bibliography}

@inproceedings{GS24,
	title = {Efficient Variational Quantum Linear Solver for Structured Sparse Matrices},
	url = {http://dx.doi.org/10.1109/QCE60285.2024.00033},
	DOI = {10.1109/qce60285.2024.00033},
	booktitle = {2024 IEEE International Conference on Quantum Computing and Engineering (QCE)},
	publisher = {IEEE},
	author = {Gnanasekaran,  Abeynaya and Surana,  Amit},
	year = {2024},
	month = sep,
	pages = {199–210}
}

@article{Demirdjian2025,
	title = {Efficient decomposition of the Carleman linearized Burgers' equation},
	author = {Demirdjian, Reuben and Hogancamp, Thomas and Gunlycke, Daniel},
	journal = {Phys. Rev. A},
	year = {2026},
	month = {Jan},
	publisher = {American Physical Society},
	doi = {10.1103/g27q-r2gk},
	url = {https://link.aps.org/doi/10.1103/g27q-r2gk}
}

@article{Liu2021,
  title={Efficient quantum algorithm for dissipative nonlinear differential equations},
	author={Liu, Jin-Peng and Kolden, Herman {\O}ie and Krovi, Hari K and Loureiro, Nuno F and Trivisa, Konstantina and Childs, Andrew M},
	journal={Proceedings of the National Academy of Sciences},
	volume={118},
	number={35},
	pages={e2026805118},
	year={2021},
	publisher={National Academy of Sciences}
}

@misc{Gidney2017,
	author = "Craig Gidney",
	title = "Simple Algorithm for Multiplicative Inverses mod $2^n$",
	url  = "https://algassert.com/post/1709",
	addendum = "(accessed: 11.26.2024)",
	year = "2017"
}

@article{GS2025_LCofThings,
	title={Efficient quantum access model for sparse structured matrices using linear combination of “things”},
	author={Gnanasekaran, Abeynaya and Surana, Amit},
	journal={Physical Review A},
	volume={113},
	number={2},
	pages={022437},
	year={2026},
	publisher={APS}
}

@book{Kruger2017,
	title={The lattice Boltzmann method},
	author={Kr{\"u}ger, Timm and Kusumaatmaja, Halim and Kuzmin, Alexandr and Shardt, Orest and Silva, Goncalo and Viggen, Erlend Magnus},
	volume={10},
	number={978-3},
	year={2017},
	publisher={Springer}
}

@article{Cruz2024,
	title={Shallow unitary decompositions of quantum Fredkin and Toffoli gates for connectivity-aware equivalent circuit averaging},
	author={MQ Cruz, Pedro and Murta, Bruno},
	journal={APL Quantum},
	volume={1},
	number={1},
	year={2024},
	publisher={AIP Publishing}
}

@online{Gidney2015Increment,
	ALTauthor = {Craig Gidney},
	title = {Constructing Large Increment Gates},
	month = {June},
	year = {2015},
	url = {https://algassert.com/circuits/2015/06/12/Constructing-Large-Increment-Gates.html}
}

@article{Xu2018,
	author = {Changqing Xu and Lingling He and Zerong Lin},
	title = {Commutation matrices and commutation tensors},
	journal = {Linear and Multilinear Algebra},
	volume = {68},
	number = {9},
	pages = {1721--1742},
	year = {2018},
	publisher = {Taylor \& Francis},
	doi = {10.1080/03081087.2018.1556242},
}

@article{Li2025,
	title={Potential quantum advantage for simulation of fluid dynamics},
	author={Li, Xiangyu and Yin, Xiaolong and Wiebe, Nathan and Chun, Jaehun and Schenter, Gregory K and Cheung, Margaret S and M{\"u}lmenst{\"a}dt, Johannes},
	journal={Physical Review Research},
	volume={7},
	number={1},
	pages={013036},
	year={2025},
	publisher={APS}
}

@article{Kay2018,
	title={Tutorial on the quantikz package},
	author={Kay, Alastair},
	journal={arXiv preprint arXiv:1809.03842},
	year={2018}
}

@article{Javadi2024,
	title={Quantum computing with Qiskit},
	author={Javadi-Abhari, Ali and Treinish, Matthew and Krsulich, Kevin and Wood, Christopher J and Lishman, Jake and Gacon, Julien and Martiel, Simon and Nation, Paul D and Bishop, Lev S and Cross, Andrew W and others},
	journal={arXiv preprint arXiv:2405.08810},
	year={2024}
}

@article{Jennings2025LBE,
	title={An end-to-end quantum algorithm for nonlinear fluid dynamics with bounded quantum advantage},
	author={Jennings, David and Korzekwa, Kamil and Lostaglio, Matteo and Ashworth, Richard and Marsili, Emanuele and Rolston, Stephen},
	journal={arXiv preprint arXiv:2512.03758},
	year={2025}
}

@article{Hariprakash2025,
	title={Strategies for simulating the time evolution of Hamiltonian lattice field theories},
	author={Hariprakash, Siddharth and Modi, Neel S and Kreshchuk, Michael and Kane, Christopher F and Bauer, Christian W},
	journal={Physical Review A},
	volume={111},
	number={2},
	pages={022419},
	year={2025},
	publisher={APS}
}

@article{Childs2012LCU,
	title={Hamiltonian simulation using linear combinations of unitary operations},
	author={Childs, Andrew M and Wiebe, Nathan},
	journal={arXiv preprint arXiv:1202.5822},
	year={2012}
}

@article{Khattar2025,
	title={Rise of conditionally clean ancillae for efficient quantum circuit constructions},
	author={Khattar, Tanuj and Gidney, Craig},
	journal={Quantum},
	volume={9},
	pages={1752},
	year={2025},
	publisher={Verein zur F{\"o}rderung des Open Access Publizierens in den Quantenwissenschaften}
}

@article{Selinger2013,
	title = {Quantum circuits of $T$-depth one},
	author = {Selinger, Peter},
	journal = {Phys. Rev. A},
	volume = {87},
	issue = {4},
	pages = {042302},
	numpages = {4},
	year = {2013},
	month = {Apr},
	publisher = {American Physical Society},
	doi = {10.1103/PhysRevA.87.042302},
	url = {https://link.aps.org/doi/10.1103/PhysRevA.87.042302}
}

@article{Bravo2023,
	title={Variational quantum linear solver},
	author={Bravo-Prieto, Carlos and LaRose, Ryan and Cerezo, Marco and Subasi, Yigit and Cincio, Lukasz and Coles, Patrick J},
	journal={Quantum},
	volume={7},
	pages={1188},
	year={2023},
	publisher={Verein zur F{\"o}rderung des Open Access Publizierens in den Quantenwissenschaften}
}

@article{Gunlycke2020,
	title={Efficient algorithm for generating Pauli coordinates for an arbitrary linear operator},
	author={Gunlycke, Daniel and Palenik, Mark C and Emmert, Alex R and Fischer, Sean A},
	journal={arXiv preprint arXiv:2011.08942},
	year={2020}
}

@article{Gnanasekaran2024ConstrainedOpt,
	author = {Gnanasekaran, Abeynaya and Surana, Amit and Zhu, Hongyu},
	doi = {10.2478/qic-2025-0014},
	url = {https://doi.org/10.2478/qic-2025-0014},
	title = {Variational Quantum Framework for Nonlinear PDE Constrained Optimization Using Carleman Linearization},
	journal = {Quantum Information \& Computation},
	number = {3},
	volume = {25},
	year = {2025},
	pages = {260--289}
}

@article{Surana2025ConstrainedOpt,
	title={Variational quantum framework for partial differential equation constrained optimization},
	author={Surana, Amit and Gnanasekaran, Abeynaya},
	journal={ACM Transactions on Quantum Computing},
	volume={7},
	number={1},
	pages={1--36},
	year={2025},
	publisher={ACM New York, NY}
}

@article{Pool2024Multigrid,
	title={Nonlinear dynamics as a ground-state solution on quantum computers},
	author={Pool, Albert J and Somoza, Alejandro D and Mc Keever, Conor and Lubasch, Michael and Horstmann, Birger},
	journal={Physical Review Research},
	volume={6},
	number={3},
	pages={033257},
	year={2024},
	publisher={APS}
}

@inproceedings{Keller2024Multigrid,
	author={Keller, Christo Meriwether and Eidenbenz, Stephan and Bärtschi, Andreas and O'Malley, Daniel and Golden, John and Misra, Satyajayant},
	booktitle={ISC High Performance 2024 Research Paper Proceedings (39th International Conference)}, 
	title={Hierarchical Multigrid Ansatz for Variational Quantum Algorithms}, 
	year={2024},
	volume={},
	number={},
	pages={1-11},
	doi={10.23919/ISC.2024.10528934}
}

@article{Larocca2025BarrenPlateau,
	title={Barren plateaus in variational quantum computing},
	author={Larocca, Martin and Thanasilp, Supanut and Wang, Samson and Sharma, Kunal and Biamonte, Jacob and Coles, Patrick J and Cincio, Lukasz and McClean, Jarrod R and Holmes, Zo{\"e} and Cerezo, Marco},
	journal={Nature Reviews Physics},
	pages={1--16},
	year={2025},
	publisher={Nature Publishing Group UK London}
}

@article{Childs2017ImprovedHHL,
	title = {Quantum Algorithm for Systems of Linear Equations with Exponentially Improved Dependence on Precision},
	volume = {46},
	ISSN = {1095-7111},
	DOI = {10.1137/16m1087072},
	number = {6},
	journal = {SIAM Journal on Computing},
	publisher = {Society for Industrial \& Applied Mathematics (SIAM)},
	author = {Childs,  Andrew M. and Kothari,  Robin and Somma,  Rolando D.},
	year = {2017},
	month = jan,
	pages = {1920–1950}
}

@article{Harrow2009HHL,
	title = {Quantum Algorithm for Linear Systems of Equations},
	volume = {103},
	ISSN = {1079-7114},
	DOI = {10.1103/physrevlett.103.150502},
	number = {15},
	journal = {Physical Review Letters},
	publisher = {American Physical Society (APS)},
	author = {Harrow,  Aram W. and Hassidim,  Avinatan and Lloyd,  Seth},
	year = {2009},
	month = oct 
}

@article{kramer2025polynomialization,
	title={Discovering polynomial and quadratic structure in nonlinear ordinary differential equations},
	author={Kramer, Boris and Pogudin, Gleb},
	journal={arXiv preprint arXiv:2502.10005},
	year={2025}
}

@article{Morales2024QLSAsurvey,
	title={Quantum linear system solvers: A survey of algorithms and applications},
	author={Morales, Mauro ES and Pira, Lirand{\"e} and Schleich, Philipp and Koor, Kelvin and Costa, Pedro and An, Dong and Aspuru-Guzik, Al{\'a}n and Lin, Lin and Rebentrost, Patrick and Berry, Dominic W},
	journal={arXiv preprint arXiv:2411.02522},
	year={2024}
}

@article{Dalzell2024QLSAshortcut,
	title={A shortcut to an optimal quantum linear system solver},
	author={Dalzell, Alexander M},
	journal={arXiv preprint arXiv:2406.12086},
	year={2024}
}

@book{Kowalski1991CarlemanBook,
	title={Nonlinear dynamical systems and Carleman linearization},
	author={Kowalski, Krzysztof and Steeb, Willi-hans},
	year={1991},
	publisher={World Scientific}
}

@article{Lin2022Carleman,
	title={Challenges for quantum computation of nonlinear dynamical systems using linear representations},
	author={Lin, Yen Ting and Lowrie, Robert B and Aslangil, Denis and Suba{\c{s}}{\i}, Yi{\u{g}}it and Sornborger, Andrew T},
	journal={arXiv preprint arXiv:2202.02188},
	year={2022}
}

@article{Gonzalez2025Carleman,
	title={Quantum Carleman linearization efficiency in nonlinear fluid dynamics},
	author={Gonzalez-Conde, Javier and Lewis, Dylan and Bharadwaj, Sachin S and Sanz, Mikel},
	journal={Physical Review Research},
	volume={7},
	number={2},
	pages={023254},
	year={2025},
	publisher={APS}
}

@article{Jennings2025Carleman,
	title={Quantum algorithms for general nonlinear dynamics based on the Carleman embedding},
	author={Jennings, David and Korzekwa, Kamil and Lostaglio, Matteo and Sornborger, Andrew T and Subasi, Yigit and Wang, Guoming},
	journal={arXiv preprint arXiv:2509.07155},
	year={2025}
}

@article{Li2025CarlemanLBE,
	title={Potential quantum advantage for simulation of fluid dynamics},
	author={Li, Xiangyu and Yin, Xiaolong and Wiebe, Nathan and Chun, Jaehun and Schenter, Gregory K and Cheung, Margaret S and M{\"u}lmenst{\"a}dt, Johannes},
	journal={Physical Review Research},
	volume={7},
	number={1},
	pages={013036},
	year={2025},
	publisher={APS}
}

@article{Aaronson2015HHLconditions,
	title={Read the fine print},
	author={Aaronson, Scott},
	journal={Nature Physics},
	volume={11},
	number={4},
	pages={291--293},
	year={2015},
	publisher={Nature Publishing Group UK London}
}

@article{Li2023SigmaBasis,
	title={Variational quantum algorithms for Poisson equations based on the decomposition of sparse Hamiltonians},
	author={Li, Hui-Min and Wang, Zhi-Xi and Fei, Shao-Ming},
	journal={Physical Review A},
	volume={108},
	number={3},
	pages={032418},
	year={2023},
	publisher={APS}
}

@article{Bae2024,
	title = {Hardware efficient decomposition of the Laplace operator and its application to the Helmholtz and the Poisson equation on quantum computer},
	volume = {23},
	ISSN = {1573-1332},
	DOI = {10.1007/s11128-024-04458-y},
	number = {7},
	journal = {Quantum Information Processing},
	publisher = {Springer Science and Business Media LLC},
	author = {Bae,  Jaehyun and Yoo,  Gwangsu and Nakamura,  Satoshi and Ohnishi,  Shota and Kim,  Dae Sin},
	year = {2024},
	month = jul 
}

@article{Liu2021SigmaBasis,
	title = {Variational quantum algorithm for the Poisson equation},
	author = {Liu, Hai-Ling and Wu, Yu-Sen and Wan, Lin-Chun and Pan, Shi-Jie and Qin, Su-Juan and Gao, Fei and Wen, Qiao-Yan},
	journal = {Phys. Rev. A},
	volume = {104},
	issue = {2},
	pages = {022418},
	numpages = {12},
	year = {2021},
	month = {Aug},
	publisher = {American Physical Society},
	doi = {10.1103/PhysRevA.104.022418},
	url = {https://link.aps.org/doi/10.1103/PhysRevA.104.022418}
}

@article{Kondo2021SigmaBasis,
	title={Computationally efficient quantum expectation with extended bell measurements},
	author={Kondo, Ruho and Sato, Yuki and Koide, Satoshi and Kajita, Seiji and Takamatsu, Hideki},
	journal={arXiv preprint arXiv:2110.09735},
	year={2021}
}

@article{Lapworth2025Preconditioner,
	title={Preconditioned block encodings for quantum linear systems},
	author={Lapworth, Leigh and S{\"u}nderhauf, Christoph},
	journal={Quantum Science and Technology},
	volume={10},
	number={4},
	pages={045064},
	year={2025},
	publisher={IOP Publishing}
}

@article{Jin2025Preconditioner,
	title={Quantum preconditioning method for linear systems problems via Schrodingerization},
	author={Jin, Shi and Liu, Nana and Ma, Chuwen and Yu, Yue},
	journal={arXiv preprint arXiv:2505.06866},
	year={2025}
}

@article{Hosaka2023Preconditioning,
	title={Preconditioning for a variational quantum linear solver},
	author={Hosaka, Aruto and Yanagisawa, Koichi and Koshikawa, Shota and Kudo, Isamu and Alifu, Xiafukaiti and Yoshida, Tsuyoshi},
	journal={arXiv preprint arXiv:2312.15657},
	year={2023}
}

@article{Golden2022Preconditioning,
	title={Quantum computing and preconditioners for hydrological linear systems},
	author={Golden, John and O’Malley, Daniel and Viswanathan, Hari},
	journal={Scientific Reports},
	volume={12},
	number={1},
	pages={22285},
	year={2022},
	publisher={Nature Publishing Group UK London}
}

@article{Demirdjian2022Variational,
	title = {Variational quantum solutions to the advection–diffusion equation for applications in fluid dynamics},
	volume = {21},
	ISSN = {1573-1332},
	url = {http://dx.doi.org/10.1007/s11128-022-03667-7},
	DOI = {10.1007/s11128-022-03667-7},
	number = {9},
	journal = {Quantum Information Processing},
	publisher = {Springer Science and Business Media LLC},
	author = {Demirdjian,  Reuben and Gunlycke,  Daniel and Reynolds,  Carolyn A. and Doyle,  James D. and Tafur,  Sergio},
	year = {2022},
	month = sep 
}

@article{Huang2025FourierReadout,
	title={Real and Fourier space readout methods: Comparison of complexity and applications to CFD problems},
	author={Huang, Xinchi and Nishi, Hirofumi and Kawada, Yoshifumi and Zushi, Tomofumi and Matsushita, Yu-ichiro},
	journal={arXiv preprint arXiv:2511.20017},
	year={2025}
}

@InProceedings{Nguyen2026QAlgo4PDEReview,
	author="Nguyen, Thanh",
	editor="Atulya K., Nagar
	and Dharm Singh, Jat
	and Durgesh Kumar, Mishra
	and Joshi, Amit",
	title="Quantum Algorithms for Partial Differential Equations: A Performance Review and Future Trajectories",
	booktitle="Intelligent Sustainable Systems",
	year="2025",
	publisher="Springer Nature Switzerland",
	address="Cham",
	pages="18--37",
	isbn="978-3-032-11524-9"
}

@article{Bharadwaj2025,
	title={Compact quantum algorithms for time-dependent differential equations},
	author={Bharadwaj, Sachin S and Sreenivasan, Katepalli R},
	journal={Physical Review Research},
	volume={7},
	number={2},
	pages={023262},
	year={2025},
	publisher={APS}
}

@inproceedings{Jennings2026CarlLBEApplied,
	title={Simulating non-trivial incompressible flows with a quantum lattice Boltzmann algorithm},
	author={Jennings, David and Korzekwa, Kamil and Lostaglio, Matteo and Mannix, Paul and Ashworth, Richard and Marsili, Emanuele and Rolston, Stephen},
	booktitle={AIAA SCITECH 2026 Forum},
	pages={1936},
	year={2026}
}

@article{Lewis2024Limitations,
	title={Limitations for quantum algorithms to solve turbulent and chaotic systems},
	author={Lewis, Dylan and Eidenbenz, Stephan and Nadiga, Balasubramanya and Suba{\c{s}}{\i}, Yi{\u{g}}it},
	journal={Quantum},
	volume={8},
	pages={1509},
	year={2024},
	publisher={Verein zur F{\"o}rderung des Open Access Publizierens in den Quantenwissenschaften}
}

@article{Lin2022Challenges,
	title={Challenges for quantum computation of nonlinear dynamical systems using linear representations},
	author={Lin, Yen Ting and Lowrie, Robert B and Aslangil, Denis and Suba{\c{s}}{\i}, Yi{\u{g}}it and Sornborger, Andrew T},
	journal={arXiv preprint arXiv:2202.02188},
	year={2022}
}

@article{Sarma2024VariationalPDE,
	title={Quantum variational solving of nonlinear and multidimensional partial differential equations},
	author={Sarma, Abhijat and Watts, Thomas W and Moosa, Mudassir and Liu, Yilian and McMahon, Peter L},
	journal={Physical Review A},
	volume={109},
	number={6},
	pages={062616},
	year={2024},
	publisher={APS}
}

@article{Song2025Incompressible,
	title={Incompressible Navier--Stokes solve on noisy quantum hardware via a hybrid quantum--classical scheme},
	author={Song, Zhixin and Deaton, Robert and Gard, Bryan and Bryngelson, Spencer H},
	journal={Computers \& Fluids},
	volume={288},
	pages={106507},
	year={2025},
	publisher={Elsevier}
}

@article{Wu2025RevisitingCarl,
	title={Quantum algorithms for nonlinear dynamics: Revisiting carleman linearization with no dissipative conditions},
	author={Wu, Hsuan-Cheng and Wang, Jingyao and Li, Xiantao},
	journal={SIAM Journal on Scientific Computing},
	volume={47},
	number={2},
	pages={A943--A970},
	year={2025},
	publisher={SIAM}
}

@article{Surana2024PolynomialDynSys,
	title = {An efficient quantum algorithm for simulating polynomial dynamical systems},
	volume = {23},
	ISSN = {1573-1332},
	url = {http://dx.doi.org/10.1007/s11128-024-04311-2},
	DOI = {10.1007/s11128-024-04311-2},
	number = {3},
	journal = {Quantum Information Processing},
	publisher = {Springer Science and Business Media LLC},
	author = {Surana,  Amit and Gnanasekaran,  Abeynaya and Sahai,  Tuhin},
	year = {2024},
	month = mar 
}

@article{Gourianov2025, 
	author = {Nikita Gourianov  and Peyman Givi  and Dieter Jaksch  and Stephen B. Pope },
	title = {Tensor networks enable the calculation of turbulence probability distributions},
	journal = {Science Advances},
	volume = {11},
	number = {5},
	pages = {eads5990},
	year = {2025},
	doi = {10.1126/sciadv.ads5990},
	URL = {https://www.science.org/doi/abs/10.1126/sciadv.ads5990},
	eprint = {https://www.science.org/doi/pdf/10.1126/sciadv.ads5990},
}

@article{Garrett2025Feasibility,
	title={Feasibility of Accelerating Incompressible Computational Fluid Dynamics Simulations With Fault-Tolerant Quantum Computers},
	author={Garrett, Michael C and Ponkratov, Dmitry and Qiu, Wei},
	journal={Marine Technology Society Journal},
	volume={59},
	number={1},
	pages={62--65},
	year={2025}
}

@article{Jennings2024FastForwarding,
	title = {The cost of solving linear differential equations on a quantum computer: fast-forwarding to explicit resource counts},
	volume = {8},
	ISSN = {2521-327X},
	DOI = {10.22331/q-2024-12-10-1553},
	journal = {Quantum},
	publisher = {Verein zur Forderung des Open Access Publizierens in den Quantenwissenschaften},
	author = {Jennings,  David and Lostaglio,  Matteo and Lowrie,  Robert B. and Pallister,  Sam and Sornborger,  Andrew T.},
	year = {2024},
	month = dec,
	pages = {1553}
}

@article{Tennie2023,
	title = {Quantum Computers for Weather and Climate Prediction: The Good,  the Bad,  and the Noisy},
	volume = {104},
	ISSN = {1520-0477},
	DOI = {10.1175/bams-d-22-0031.1},
	number = {2},
	journal = {Bulletin of the American Meteorological Society},
	publisher = {American Meteorological Society},
	author = {Tennie,  F. and Palmer,  T. N.},
	year = {2023},
	month = feb,
	pages = {E488–E500}
}

@article{Gaitan2020,
	title = {Finding flows of a Navier–Stokes fluid through quantum computing},
	volume = {6},
	ISSN = {2056-6387},
	DOI = {10.1038/s41534-020-00291-0},
	number = {1},
	journal = {npj Quantum Information},
	publisher = {Springer Science and Business Media LLC},
	author = {Gaitan,  Frank},
	year = {2020},
	month = jul 
}

@article{Williams2025Iterative,
	title={Quantum iterative methods for solving differential equations with application to computational fluid dynamics},
	author={Williams, Chelsea A and Gentile, Antonio A and Elfving, Vincent E and Berger, Daniel and Kyriienko, Oleksandr},
	journal={Advanced Quantum Technologies},
	pages={e00618},
	year={2025},
	publisher={Wiley Online Library}
}

@article{Oz2021,
	title = {Solving Burgers’ equation with quantum computing},
	volume = {21},
	ISSN = {1573-1332},
	DOI = {10.1007/s11128-021-03391-8},
	number = {1},
	journal = {Quantum Information Processing},
	publisher = {Springer Science and Business Media LLC},
	author = {Oz,  Furkan and Vuppala,  Rohit K. S. S. and Kara,  Kursat and Gaitan,  Frank},
	year = {2021},
	month = dec 
}

@article{Jin2024QAlgoPDE,
	title = {Quantum algorithms for nonlinear partial differential equations},
	volume = {194},
	ISSN = {0007-4497},
	DOI = {10.1016/j.bulsci.2024.103457},
	journal = {Bulletin des Sciences Mathématiques},
	publisher = {Elsevier BV},
	author = {Jin,  Shi and Liu,  Nana},
	year = {2024},
	month = sep,
	pages = {103457}
}

@article{Lapworth2022Hybrid,
	title={A hybrid quantum-classical CFD methodology with benchmark HHL solutions},
	author={Lapworth, Leigh},
	journal={arXiv preprint arXiv:2206.00419},
	year={2022}
}

@article{Ljubomir2022,
	title = {Quantum algorithm for the Navier–Stokes equations by using the streamfunction-vorticity formulation and the lattice Boltzmann method},
	volume = {20},
	ISSN = {1793-6918},
	DOI = {10.1142/s0219749921500398},
	number = {02},
	journal = {International Journal of Quantum Information},
	publisher = {World Scientific Pub Co Pte Ltd},
	author = {Ljubomir,  Budinski},
	year = {2022},
	month = feb 
}

@article{Lubasch2020,
	title = {Variational quantum algorithms for nonlinear problems},
	volume = {101},
	ISSN = {2469-9934},
	DOI = {10.1103/physreva.101.010301},
	number = {1},
	journal = {Physical Review A},
	publisher = {American Physical Society (APS)},
	author = {Lubasch,  Michael and Joo,  Jaewoo and Moinier,  Pierre and Kiffner,  Martin and Jaksch,  Dieter},
	year = {2020},
	month = jan 
}

@article{Hogancamp2026Linear,
	title={A Linear Combination of Unitaries Decomposition for the Laplace Operator},
	author={Hogancamp, Thomas and Demirdjian, Reuben and Gunlycke, Daniel},
	journal={arXiv preprint arXiv:2601.06370},
	year={2026}
}

@article{soderlind2006logarithmic,
	title={The logarithmic norm. History and modern theory},
	author={S{\"o}derlind, Gustaf},
	journal={BIT Numerical Mathematics},
	volume={46},
	number={3},
	pages={631--652},
	year={2006},
	publisher={Springer}
}

@phdthesis{horstmann2018hybrid,
	title={Hybrid numerical methods based on the lattice Boltzmann approach with application to non-uniform grids},
	author={Horstmann, Tobias},
	year={2018},
	school={Universit{\'e} de Lyon}
}

@article{BGK1954,
	title = {A Model for Collision Processes in Gases. I. Small Amplitude Processes in Charged and Neutral One-Component Systems},
	author = {Bhatnagar, P. L. and Gross, E. P. and Krook, M.},
	journal = {Phys. Rev.},
	volume = {94},
	issue = {3},
	pages = {511--525},
	numpages = {0},
	year = {1954},
	month = {May},
	publisher = {American Physical Society},
	doi = {10.1103/PhysRev.94.511},
	url = {https://link.aps.org/doi/10.1103/PhysRev.94.511}
}

@article{hantzko2024tensorized,
	title={Tensorized Pauli decomposition algorithm},
	author={Hantzko, Lukas and Binkowski, Lennart and Gupta, Sabhyata},
	journal={Physica Scripta},
	volume={99},
	number={8},
	pages={085128},
	year={2024},
	publisher={IOP Publishing}
}

@article{itani2022analysis,
	title={Analysis of Carleman linearization of lattice Boltzmann},
	author={Itani, Wael and Succi, Sauro},
	journal={Fluids},
	volume={7},
	number={1},
	pages={24},
	year={2022},
	publisher={MDPI}
}

@article{sanavio2024lattice,
	title={Lattice Boltzmann--Carleman quantum algorithm and circuit for fluid flows at moderate Reynolds number},
	author={Sanavio, Claudio and Succi, Sauro},
	journal={AVS Quantum Science},
	volume={6},
	number={2},
	year={2024},
	publisher={AIP Publishing}
}

@article{sanavio2024three,
	title={Three Carleman routes to the quantum simulation of classical fluids},
	author={Sanavio, Claudio and Scatamacchia, Riccardo and De Falco, Carlo and Succi, Sauro},
	journal={Physics of Fluids},
	volume={36},
	number={5},
	year={2024},
	publisher={AIP Publishing}
}

@article{succi2026foundational,
	title={The foundational value of quantum computing for classical fluids},
	author={Succi, Sauro and Sanavio, Claudio and Cappelli, Luca and Love, Peter},
	journal={Europhysics Letters},
	volume={153},
	number={2},
	pages={28001},
	year={2026},
	publisher={EDP Sciences, IOP Publishing and Societ{\`a} Italiana di Fisica}
}

@article{gan2025provably,
	title={Provably Efficient Quantum Algorithms for Solving Nonlinear Differential Equations Using Multiple Bosonic Modes Coupled with Qubits},
	author={Gan, Yu and Alipanah, Hirad and Cheng, Jinglei and Wu, Zeguan and Li, Guangyi and Mendoza-Arenas, Juan Jos{\'e} and Givi, Peyman and Malik, Mujeeb R and McDermott, Brian J and Liu, Junyu},
	journal={arXiv preprint arXiv:2511.09939},
	year={2025}
}

@article{sanavio2025carleman,
	title={Carleman-lattice-Boltzmann quantum circuit with matrix access oracles},
	author={Sanavio, Claudio and Simon, William A and Ralli, Alexis and Love, Peter and Succi, Sauro},
	journal={Physics of Fluids},
	volume={37},
	number={3},
	year={2025},
	publisher={AIP Publishing}
}

@article{penuel2024fesibility,
	title={Detailed assessment of calculating drag force with quantum computers: Explicit time-evolution precludes exponential advantage for nonlinear differential equations},
	author={Penuel, John and Katabarwa, Amara and Johnson, Peter D and Kuklinski, Parker and Rempfer, Benjamin and Farquhar, Collin and Cao, Yudong and Garrett, Michael C},
	journal={arXiv preprint arXiv:2406.06323},
	year={2024}
}


\appendix


\newpage
\onecolumngrid

\section{Notation}

\begin{table}[!htbp]
	\begin{center}
		\renewcommand{\arraystretch}{1.15}
		\begin{tabular}{p{0.125\linewidth} | p{0.85\linewidth} }
			\hline
			\multicolumn{2}{c}{Carleman Linearization} \\
			\hline \hline
			$L$ & Matrix from the linear system resulting from Carleman linearization \\
			$L^{(\text{e})}$ & Matrix from the linear system resulting from Carleman linearization and zero padding \\
			$\alpha$ & Truncation order \\
			$\Delta$ & $=\sum_{j=1}^\alpha N^j$ \\
			$N_F$ & Degree of polynomial nonlinearity in the generic ODE in \eqref{eqn:GeneralIVP} \\
			$(\cdot)^{(\text{e})}$ & An embedded version of the matrix or vector \\
			\hline
			\multicolumn{2}{c}{Lattice Boltzmann} \\
			\hline \hline
			$Q$ & Number of discrete velocities \\
			$n_x, n_y, n_z$ & Number of discrete grid points in $x,y,z$ \\
			$n$ & $=n_xn_yn_z$ \\
			$\vec{e}_m$ & The $m\text{th}$ discrete velocity vector \\
			$w_m$ & Velocity weights \\
			$\beta_{m,q}$ & Coefficient for the linear collision term (defined in \eqref{eqn:gamma_beta}) \\
			$\gamma_{q,m,r}$ & Coefficient for the quadratic and cubic collision terms (defined in \eqref{eqn:gamma_beta}) \\
			D($n$)Q($m$) & A velocity set with $n$ spatial dimensions and $m$ speeds \\
			D($n$)Q($m$)* & A velocity set with $n$ spatial dimensions and $2^{ \lfloor \log m \rfloor + 1}$ speeds \\
			\hline
			\multicolumn{2}{c}{Miscellaneous} \\
			\hline \hline			
			$\log x$ & $=\log_2 x$ \\
			LCU & Linear Combination of Unitaries \\
			LCNU & Linear Combination of Non-Unitaries \\
			$N_s$ & Number of terms in the LCU or LCNU \\
			DVBE & Discrete-Velocity Boltzmann Equation \\
			LBE & Lattice Boltzmann Equation \\
			QLSA & Quantum Linear System Algorithm
		\end{tabular}
	\end{center}
\end{table}

%


\section{Proofs}


\subsection{Proof of Lemma \ref{Lemma for Ul}} \label{Proof for U_l}
\LemmaUl*
\begin{proof}
	We begin with the forward direction. Assume that $\overline{Q}$ is a unitary completion of $Q$. We want to show that $U$ is unitary. Using \eqref{eqn:U block of Q} and $Q^\perp\coloneq\overline{Q}-Q$ we have
	\begin{equation} \label{eqn:UUdagger}
		UU^\dagger = 
		\begin{pmatrix} 
			\overline{Q}\,\overline{Q}^\dagger 
			- \overline{Q}Q^\dagger 
			- Q\overline{Q}^\dagger 
			+ 2QQ^\dagger 
			& QQ^{\perp\dagger} + Q^\perp Q^\dagger \\
			QQ^{\perp\dagger} + Q^\perp Q^\dagger 
			& \overline{Q}\,\overline{Q}^\dagger 
			- \overline{Q}Q^\dagger 
			- Q\overline{Q}^\dagger 
			+ 2QQ^\dagger 
		\end{pmatrix} .
	\end{equation}
	So, if $\overline{Q}\,\overline{Q}^\dagger=I$, $\overline{Q}Q^\dagger=Q\overline{Q}^\dagger=QQ^\dagger$ and $QQ^{\perp\dagger} + Q^\perp Q^\dagger=\mathbf{0}$, then $U$ is unitary. Given that $\overline{Q}$ is unitary, then it follows that (1) $\overline{Q}\,\overline{Q}^\dagger=I$ by definition, and (2) $\overline{Q}$ has orthonormal eigenvectors $v_i$ with eigenvalues $\lambda_i$. Then, since the eigenvectors of a unitary operator form a basis, we can write
	\begin{equation*}
		\overline{Q}=\sum\nolimits_{j} \lambda_j v_jv_j^\dagger .
	\end{equation*}
	By definition, if $v\in W$ is an eigenvector of $Q$ with eigenvalue $\lambda$, then 
	\begin{equation*}
		\overline{Q}v=Qv=\lambda v ,
	\end{equation*}
	and so $v$ is also an eigenvector of $\overline{Q}$ with eigenvalue $\lambda$. Define 
	\begin{equation*}
		\mathcal{I}=\{i: (\lambda_i,v_i) \text{ is eigenpair for } Q\} ,
	\end{equation*}
	and let $\mathcal{I}_c=\{1,\cdots, N\}\setminus \mathcal{I}$. Then, 
	\begin{equation*}
		\overline{Q}
		=\sum_{j\in \mathcal{I}} \lambda_j v_jv_j^\dagger
		+\sum_{j\in \mathcal{I}_c} \lambda_j v_jv_j^\dagger .
	\end{equation*}
	Note that $\Span\{v_j \mid j \in \mathcal{I}\} = W$ and $\Span\{v_j \mid j \in \mathcal{I}^c\} = W^\perp$. Recall that $Q$ is trivial on $W^\perp$. It follows that $Q = \sum\nolimits_{j\in \mathcal{I}} \lambda_j v_jv_j^\dagger$. Thus, 
	\begin{equation*}
		\overline{Q}
		=Q+\sum_{j\in \mathcal{I}_c} \lambda_j v_jv_j^\dagger .
	\end{equation*}
	Since $Q^{\perp} \coloneq \overline{Q}-Q$, it follows that 
	\begin{equation*}
		Q^{\perp} = \sum_{j\in \mathcal{I}_c} \lambda_j v_jv_j^\dagger ,
	\end{equation*}
	which gives
	\begin{equation*}
		Q^{\perp} Q^\dagger 
		=
		\Bigg(\sum_{j\in \mathcal{I}_c} \lambda_j v_jv_j^\dagger\Bigg) \Bigg(\sum_{i \in \mathcal{I}} \lambda_i^* v_iv_i^\dagger\Bigg)
		=\mathbf{0} .
	\end{equation*}
	Therefore, we have
	\begin{equation*}
		Q^\perp Q^\dagger = \mathbf{0} 
		\text{ and }
		Q^{\perp}Q^\dagger=(\overline{Q}-Q)Q^\dagger=\mathbf{0} 
		\implies \overline{Q} Q^\dagger = Q Q^\dagger .
	\end{equation*}
	By taking the conjugate transpose of these results we find that $(Q^\perp Q^\dagger)^\dagger = Q Q^{\perp\dagger} = \textbf{0}$ and $(\overline{Q} Q^\dagger)^\dagger = (Q Q^\dagger)^\dagger \implies Q \overline{Q}^\dagger = Q Q^\dagger$. Bringing the relations together, we find that 	
	\begin{equation*}
		QQ^\dagger=\overline{Q} Q^\dagger = Q \overline{Q}^\dagger
		\quad \text{and}  \quad
		QQ^{\perp\dagger}=\mathbf{0} .
	\end{equation*}
	Using this and the fact that $\overline{Q}$ is unitary, it follows that $\overline{Q}\,\overline{Q}^\dagger - \overline{Q}Q^\dagger - Q\overline{Q}^\dagger + 2QQ^\dagger=I$ and $QQ^{\perp\dagger} + Q^\perp Q^\dagger=\mathbf{0}$. Evaluating both of these properties into \eqref{eqn:UUdagger} gives $UU^\dagger=I$ as desired. 
	
	Next, we prove the backward direction. Assume that $U$ from \eqref{eqn:U block of Q} is unitary. We want to show that $\overline{Q}$ is a unitary completion of $Q$. Suppose $Q \colon W \to V$ is unitary on $W$. Then for $w\in W$ we have
	\begin{equation}
		U \begin{pmatrix} w \\ 0 \end{pmatrix} 
		= \begin{pmatrix} Q w \\ Q^\perp w \end{pmatrix}.
	\end{equation}
	Since $U$ is unitary, it follows that $\|w\|^2 = \|Qw\|^2 + \|Q^\perp w\|^2 = \|w\|^2 + \|Q^\perp w\|^2$, where the last relation is true because $Q$ is unitary on $W$. So, $\|Q^\perp w\|^2=0$, which implies that $Q^\perp w=0$ for all $w\in W$. By definition of $Q^\perp$, it follows that $\overline{Q}w = Qw$ as desired. Lastly, we must show that $\overline{Q}$ is unitary. Since $U$ is unitary, using \eqref{eqn:UUdagger} yields the following relations: 
	\begin{equation*}
	\begin{split}
		\overline{Q} \, \overline{Q}^\dagger 
		- \overline{Q} Q^\dagger 
		- Q \overline{Q}^\dagger 
		+ 2Q Q^\dagger 
		&= I , \\
		Q(\overline{Q}^\dagger - Q^\dagger) + (\overline{Q} - Q) Q^\dagger &= 0 .
	\end{split}
	\end{equation*}
	By inserting the second into the first, we find that $\overline{Q} \, \overline{Q}^\dagger = I$. Thus, $\overline{Q}$ is unitary and is therefore a unitary completion of $Q$.
\end{proof}


\subsection{Proof of Theorem \ref{thm:Theorem for Lbar}} \label{Proof for Lbar}
\TheoremLbar*
\begin{proof}
	For convenience, in this proof we drop the subscript $l$ from $L_l$. Let us first consider the case in which $m=2$ and write $P_{1} = \bigotimes_{j=1}^{n_{1}}P_{1,j}$, $P_{2} = \bigotimes_{j=1}^{n_{2}}P_{2,j}$. Note that $P_{1},P_{2}$ are $1$-sparse matrices whose nonzero entries have modulus 1, so that Lemma \ref{trivial_cond} implies that $L \coloneq P_1P_2 = 0$ if and only if $\tilde{W}_{2}=\{ w \in W_{2} \mid P_{2}w \in W_{1}\} = \{0\}$, where $W_{1},W_{2}$ are taken to be the rowspaces of $P_{1}, P_{2}$, respectively. Since we are interested in nontrivial $L$, then we assume that $\tilde{W}_{2} \ne \{0\}$. To find the unitary completions for $P_1$ and $P_2$ individually, we invoke Lemma \ref{Lemma of Q1 kron Q2} to obtain
	\begin{equation} \label{tceq}
		\overline{P}_{i} = \bigotimes_{j=1}^{n_{i}}\overline{P}_{i,j} , 
	\end{equation}
	for $i=1,2$ and $\overline{P}_{i,j}$ are defined by \eqref{eqn:P bar}. That $\overline{P}_{1}\overline{P}_{2}$ is a unitary completion of $P_{1}P_{2}$ follows from Lemma \ref{incl_prod}. Note that, using the constructions in the proof of Lemma \ref{trivial_cond}, it follows that $\tilde{W}_{2} =\Span\{e_{i} \mid \beta_{i}\neq 0, \alpha_{\gamma(i)}\neq 0\}$, i.e. $\tilde{W}_{2}$ is the rowspace of $Q_{1}Q_{2}$. 
	
	The general case now follows from an inductive argument. Let $m>2$, set $P_{i} = \bigotimes_{j=1}^{n_{i}}P_{i,j}$, and write 
	\begin{equation*}
		L = P_{1}P_{2}\cdots P_{m-1}P_{m} = \left(P_{1}\cdots P_{m-1}\right)P_{m} .
	\end{equation*}
	Note that $P_{1}\cdots P_{m-1}$ and $P_{m}$ are both $1$-sparse matrices whose entries are either $0$ or have modulus $1$. Let $V_{1}$, $V_{2}$ denote the rowspaces of $P_{1}\cdots P_{m-1}$ and $P_{m}$, respectively. It follows from Lemma \ref{trivial_cond} that $L$ is either trivial or $\{v \in V_{2} \mid P_{m}v \in V_{1}\} \neq \{0\}$. Assume that $L$ is nontrivial. We take as an inductive hypothesis that $\prod_{i=1}^{m-1} \left(\bigotimes_{j=1}^{n_{i}}\overline{P}_{i,j}\right)$ is a unitary completion of $P_{1}\cdots P_{m-1}$, where $\overline{P}_{i,j}$ is taken according to \eqref{eqn:P bar}. Next, Lemma \ref{Lemma of Q1 kron Q2} shows that $\bigotimes_{j=1}^{n_{m}}\overline{P}_{m,j}$ is a completion of $P_{m}$. Finally, by application of Lemma \ref{incl_prod}, it follows that 
	\begin{equation*}
		\left( \prod_{i=1}^{m-1} \bigotimes_{j=1}^{n_{i}} \overline{P}_{i,j} \right) 
		\left( \bigotimes_{j=1}^{n_{m}}\overline{P}_{m,j} \right) 
		= \prod_{i=1}^{m} \left( \bigotimes_{j=1}^{n_{i}} \overline{P}_{i,j} \right)
	\end{equation*} 
	is a unitary completion of $P_{1}\cdots P_{m}$. Since $m$ is arbitrary, this completes the inductive argument.
\end{proof}


\subsection{Proof of Lemma \ref{Lemma of Q1 kron Q2}} \label{Proof of Q1 kron Q2}
\begin{restatable}[]{lemma}{LemmaQotimesR} \label{Lemma of Q1 kron Q2}
	Suppose that $W_i$ is a subspace of $V_i$ for $i=1,2$. Let $Q_i \colon W_i \to V_i$ be unitary on $W_i$, where $\overline{Q}_i \colon V_i \to V_i$ is a unitary completion of $Q_i$. Let $Q_1 \otimes Q_2 \colon \tilde{W} \to \tilde{V}$ where $\tilde{W}=W_1 \otimes W_2 \ne \{0\}$ and $\tilde{V}=V_1 \otimes V_2$. Then $\overline{Q}_1 \otimes \overline{Q}_2 \colon \tilde{V} \to \tilde{V}$ defines a valid unitary completion of $Q_1 \otimes Q_2$.
\end{restatable}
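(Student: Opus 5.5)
We verify the two requirements of Definition \ref{def:completion} directly: that $\overline{Q}_1 \otimes \overline{Q}_2$ is unitary on $\tilde V$, and that it agrees with $Q_1 \otimes Q_2$ on $\tilde W$. We also check that $Q_1\otimes Q_2$ is itself unitary on $\tilde W$, so that the notion of completion applies.

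\textbf{Unitarity.} By the mixed product property,
\[
(\overline{Q}_1 \otimes \overline{Q}_2)(\overline{Q}_1 \otimes \overline{Q}_2)^\dagger = \overline{Q}_1\overline{Q}_1^\dagger \otimes \overline{Q}_2\overline{Q}_2^\dagger = I_{V_1}\otimes I_{V_2} = I_{\tilde V},
\]
and the same holds for the product in the other order.

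\textbf{Agreement on $\tilde W$.} First take simple tensors $w = w_1\otimes w_2$ with $w_i \in W_i$. Then
\[
(\overline{Q}_1\otimes \overline{Q}_2)(w_1\otimes w_2) = \overline{Q}_1 w_1 \otimes \overline{Q}_2 w_2 = Q_1w_1\otimes Q_2 w_2 = (Q_1\otimes Q_2)(w_1\otimes w_2),
\]
using $\overline{Q}_i w_i = Q_i w_i$. Since $\tilde W = W_1\otimes W_2$ is spanned by such simple tensors, for instance by $\{a_k\otimes b_l\}$ where $\{a_k\}$ and $\{b_l\}$ are bases of $W_1$ and $W_2$, linearity of both operators gives agreement on all of $\tilde W$.

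\textbf{$Q_1\otimes Q_2$ is unitary on $\tilde W$.} For simple tensors,
\[
\langle (Q_1\otimes Q_2)(a\otimes b), (Q_1\otimes Q_2)(c\otimes d)\rangle = \langle Q_1 a, Q_1 c\rangle\langle Q_2 b, Q_2 d\rangle = \langle a,c\rangle\langle b,d\rangle = \langle a\otimes b, c\otimes d\rangle.
\]
This extends to all of $\tilde W$ by sesquilinearity. Alternatively, it follows at once from the agreement just shown, since $\overline{Q}_1\otimes\overline{Q}_2$ is unitary and hence preserves inner products.

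\textbf{Main obstacle.} The only real subtlety is bookkeeping. One must interpret $Q_i$ as matrices on $V_i$ that are possibly nontrivial off $W_i$; the hypothesis used is only the equality on $W_i$. One must also note that $\tilde W \ne \{0\}$ guarantees each $W_i\neq\{0\}$, so the statement is non-vacuous. If the triviality-on-$W^\perp$ convention of Definition \ref{def:completion} is wanted for the complement, one checks that $\tilde W^\perp = (W_1^\perp\otimes V_2) + (V_1\otimes W_2^\perp)$, on which $Q_1\otimes Q_2$ vanishes whenever each $Q_i$ vanishes on $W_i^\perp$. This last part is again a direct mixed-product computation.
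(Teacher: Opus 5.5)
Your proof is correct and follows the same two-step route as the paper: agreement of $\overline{Q}_1\otimes\overline{Q}_2$ with $Q_1\otimes Q_2$ on simple tensors $w_1\otimes w_2$, plus unitarity of $\overline{Q}_1\otimes\overline{Q}_2$ on $\tilde V$. Your version of both steps is slightly tighter than the paper's: you get unitarity directly from the mixed product property rather than checking norms only on simple tensors $v_1\otimes v_2$, and you extend the agreement explicitly to all of $\tilde W$ by linearity; your extra checks (unitarity of $Q_1\otimes Q_2$ on $\tilde W$ and vanishing on $\tilde W^\perp$) are correct but not required by the paper's argument.
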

\begin{proof}	
	To prove that $\overline{Q}_1 \otimes \overline{Q}_2$ is a unitary completion to $Q_1 \otimes Q_2$, we must show that (1) $(\overline{Q}_1 \otimes \overline{Q}_2 ) (w_1 \otimes w_2) = (Q_1  \otimes Q_2) (w_1 \otimes w_2)$, where $w_1 \in W_1$ and $w_2 \in W_2$, and (2) $\overline{Q}_1 \otimes \overline{Q}_2$ is unitary on $\tilde{V}$. 
	
	To prove (1) let $w_1\in W_1$ and $w_2\in W_2$, then by definition $w_1 \otimes w_2 \in \tilde{W}$. Then,
	\begin{equation}
		\begin{split}
			(\overline{Q}_1\otimes \overline{Q}_2) (w_1\otimes w_2) 
			&= (\overline{Q}_1w_1)\otimes (\overline{Q}_2w_2) \\
			&= (Q_1w_1)\otimes (Q_2w_2) \\
			&= (Q_1\otimes Q_2)(w_1\otimes w_2) ,
		\end{split}
	\end{equation}
	where the second equality is true given that $\overline{Q}_i$ is a unitary completion of $Q_i$. 
	
	Next, to prove (2) let $v_1 \in V_1$ and $v_2 \in V_2$, then
	\begin{equation}
		\begin{split}
			(v_1\otimes v_2)^\dagger(\overline{Q}_1\otimes \overline{Q}_2)^\dagger(\overline{Q}_1\otimes \overline{Q}_2)(v_1\otimes v_2)
			&= \left( (\overline{Q}_1 v_1) \otimes (\overline{Q}_2 v_2) \right)^\dagger
			\left( (\overline{Q}_1 v_1) \otimes (\overline{Q}_2 v_2) \right) \\
			&=\left( v_1^\dagger \overline{Q}_1^\dagger \overline{Q}_1 v_1 \right)
			\otimes \left( v_2^\dagger \overline{Q}_2^\dagger \overline{Q}_2 v_2 \right) \\
			&=( v_1^\dagger v_1 ) \otimes ( v_2^\dagger v_2 ) \\
			&= (v_1 \otimes v_2)^\dagger (v_1 \otimes v_2) ,
		\end{split}
	\end{equation}
	where the second equality is true using the fact that $\overline{Q}_i$ is unitary on $V_i$. 
	
\end{proof}

\subsection{Proof of Lemma \ref{thm_abst_cond}} \label{Proof of thm_abst_cond}
\begin{restatable}[]{lemma}{LemmaQR} \label{thm_abst_cond}
	Let $W_{1}$ and $W_{2}$ be subspaces of $V$.  Suppose that $Q_{i}:V\to V$ is unitary on $W_{i}$ and trivial on $W_{i}^{\perp}$, for $i =1,2$. If $W_{2}$ admits a decomposition of the form
	\begin{equation*}
		W_{2} = \tilde{W}_{2}\oplus \tilde{W}_{2}^{'}\, , 
	\end{equation*}
	where $\tilde{W}_{2} = \{ w \in W_{2} \, | \, Q_{2}(w) \in W_{1}\}$ and $\tilde{W}_{2}^{'} = \{w \in W_{2} \, | \, Q_{2}(w) \in W_{1}^{\perp}\}$, then $Q_{1}Q_{2} \equiv 0$ if and only if $\tilde{W}_{2} = \{0\}$. 
\end{restatable}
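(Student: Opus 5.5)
The argument rests on a single identity, $Q_1 Q_2 = Q_1 Q_2\, \Pi_{W_2}$ with $\Pi_{W_2}$ the orthogonal projector onto $W_2$, together with the decomposition of $W_2$ and the fact that $Q_1$ annihilates $W_1^\perp$. Since $Q_2$ is trivial on $W_2^\perp$, we have $Q_2 v = Q_2 \Pi_{W_2} v$ for every $v\in V$. Consequently $Q_1Q_2\equiv 0$ holds if and only if $Q_1 Q_2 w = 0$ for every $w\in W_2$.

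For the backward direction, assume $\tilde W_2=\{0\}$. The hypothesis then gives $W_2=\tilde W_2'$. Take any $w\in W_2$. By the definition of $\tilde W_2'$, $Q_2 w\in W_1^\perp$. Because $Q_1$ is trivial on $W_1^\perp$, it follows that $Q_1Q_2 w=0$. Combined with the reduction above, this gives $Q_1Q_2\equiv 0$.

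For the forward direction, argue by contraposition and suppose there is a nonzero $w\in\tilde W_2$. Then $w\in W_2$, and since $Q_2$ is unitary (isometric) on $W_2$, $\|Q_2 w\|=\|w\|\neq 0$. By the definition of $\tilde W_2$, $u\coloneq Q_2 w$ lies in $W_1$. Applying the isometry of $Q_1$ on $W_1$ gives $\|Q_1 u\|=\|u\|=\|w\|\neq 0$. Hence $Q_1Q_2 w\neq 0$, so $Q_1Q_2\not\equiv 0$. This proves that $Q_1Q_2\equiv 0$ implies $\tilde W_2=\{0\}$.

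The only point needing care is the meaning of "unitary on $W_i$". By Definition \ref{def:completion}, it means $w^\dagger Q_i^\dagger Q_i w = w^\dagger w$ for $w\in W_i$, which is exactly norm preservation on $W_i$, and that is all the argument uses. The decomposition hypothesis is used only in the backward direction. It is what rules out vectors of $W_2$ whose images have components in both $W_1$ and $W_1^\perp$. Without it, such mixed vectors would have to be handled separately, and that would be the main obstacle.
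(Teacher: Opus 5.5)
Your proof is correct and takes essentially the same approach as the paper: the paper writes $v=\tilde w_2+\tilde w_2'+w_2^\perp$, uses that $Q_2$ is trivial on $W_2^\perp$ and $Q_1$ is trivial on $W_1^\perp$ to get $Q_1Q_2 v = Q_1Q_2\tilde w_2$, and proves the forward direction with the same isometry chain $\|Q_1Q_2\tilde w_2\|=\|Q_2\tilde w_2\|=\|\tilde w_2\|$. Your projection $\Pi_{W_2}$ and the contrapositive phrasing are only cosmetic differences.
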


\begin{proof}
	Since $V$ is finite dimensional, we can write 
	\begin{equation*}
		V = W_{2} \oplus W_{2}^{\perp} \, . 
	\end{equation*}
	Moreover, the decomposition of $W_{2}$ provided in the hypothesis implies that 
	\begin{equation*}
		V = \tilde{W}_{2}\oplus \tilde{W}^{'}_{2} \oplus W_{2}^{\perp} \, . 
	\end{equation*}
	Hence, each $v \in V$ admits a decomposition of the form 
	\begin{equation} \label{v_decomp}
		v = \tilde{w}_{2}+\tilde{w}_{2}^{'}+w_{2}^{\perp} \, , 
	\end{equation}
	where $\tilde{w}_{2} \in \tilde{W}_{2}$, $\tilde{w}^{'}_{2} \in \tilde{W}_{2}^{'}$, and $w_{2}^{\perp} \in \tilde{W}^{\perp}_{2}$. 
	
	Starting with the backward direction, assume that $\tilde{W}_2={0}$. It follows from \eqref{v_decomp} that
	\begin{align} \label{eq1}
		\begin{split}
			Q_{1}Q_{2}(v) &= Q_{1}Q_{2}(\tilde{w}_{2})+Q_{1}Q_{2}(\tilde{w}_{2}^{'})+Q_{1}Q_{2}(w_{2}^{\perp} ) \\ 
			&= Q_{1}Q_{2}(\tilde{w}_{2})+Q_{1}\left(Q_{2}(\tilde{w}_{2}^{'})\right) \\ 
			&= Q_{1}Q_{2}(\tilde{w}_{2})
		\end{split}
	\end{align}
	where the second inequality follows from $Q_{2}(w_{2}^{\perp})=0$, and the third is true since $Q_{2}(\tilde{w}^{'}_{2}) \in W_{1}^{\perp}$. It's clear from \eqref{eq1} that $\tilde{W}_{2}=\{0\}$ implies $Q_{1}Q_{2} =0$. 
	
	Next, for the forward direction, assume $Q_{1}Q_{2}=0$. Then, it follows from our hypothesis that 
	\begin{equation}
		0=\|Q_{1}Q_{2}(\tilde{w}_{2})\| = \|Q_{2}(\tilde{w}_{2})\| = \|\tilde{w}_{2}\| \, , 
	\end{equation}
	for all $\tilde{w}_{2} \in \tilde{W}_{2}$. The second equality follows from the fact that $Q_2 \tilde{w}_2 \in W_1$ and $Q_1$ is unitary on $W_1$. Similarly, the third equality follows from the fact that $Q_2$ is unitary on $W_2$. With this, we see that $\tilde{W}_{2} = \{0\}$ as desired.
\end{proof}


\subsection{Proof of Lemma \ref{trivial_cond}} \label{Proof of trivial_cond}
\begin{restatable}[]{lemma}{LemmaRowspace} \label{trivial_cond}
	Suppose that $Q_{1}, Q_{2} \in \mathbb{C}^{N \times N}$ are $1$-sparse matrices, having at most one non-zero entry per row, whose entries are either $0$ or have modulus $1$. Let $W_{1},W_{2}$ be the rowspaces of $Q_{1},Q_{2}$, respectively. Then $Q_{1}Q_{2} = 0$ if and only if $\tilde{W}_{2} \coloneq \{w \in W_{2} \mid Q_{2}w \in W_{1}\} =\{0\}$. 
\end{restatable}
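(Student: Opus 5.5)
Since every row of $Q_1$ and $Q_2$ has at most one nonzero entry, I would reduce the statement to Lemma \ref{thm_abst_cond}. The needed hypotheses there are that $Q_i$ is unitary on its rowspace $W_i$, trivial on $W_i^\perp$, and that $W_2 = \tilde{W}_2 \oplus \tilde{W}_2'$. Concretely, I would write $Q_2 = \sum_{i \in R_2} \beta_i e_i e_{\gamma(i)}^T$ with $|\beta_i|=1$, and similarly $Q_1 = \sum_{k\in R_1} \alpha_k e_k e_{\delta(k)}^T$ with $|\alpha_k|=1$. Here $R_1,R_2$ are the indices of the nonzero rows.

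\textbf{Structure of $Q_2$.} The rowspace $W_2$ is $\overline{\Span}\{e_{\gamma(i)}\}$; in this real-entry setting it equals $\Span\{e_{\gamma(i)} \mid i \in R_2\}$, and $W_2^\perp$ is spanned by the remaining standard basis vectors. Hence $Q_2$ kills $W_2^\perp$ automatically. The substantive point is that $Q_2$ is unitary on $W_2$. For this I need $\gamma$ to be injective on $R_2$, i.e.\ no two nonzero entries share a column. I would justify this from the context in which the lemma is used: the $P_i$ are tensor products of elements of $\mathscr{P}_1$, which are $1$-sparse in both rows and columns. So I would either add this column condition explicitly or observe that it holds for the matrices of \eqref{eqn:Llgeneral}. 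Given injectivity, $Q_2$ maps the orthonormal set $\{e_{\gamma(i)}\}$ to the orthonormal set $\{\beta_i e_i\}$, so it is unitary on $W_2$. The same reasoning applies to $Q_1$.

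\textbf{Decomposition of $W_2$.} Each basis vector $e_{\gamma(i)}$ of $W_2$ is sent to $\beta_i e_i$, a multiple of a standard basis vector. That vector lies in $W_1$ if $e_i$ is one of the spanning vectors $e_{\delta(k)}$, and otherwise lies in $W_1^\perp$. So I would set
\[
\tilde{W}_2 = \Span\{e_{\gamma(i)} \mid e_i \in W_1\},
\]
and define $\tilde{W}_2'$ as the span of the complementary basis vectors. These are orthogonal, sum to $W_2$, and match the definitions of $\tilde{W}_2$ and $\tilde{W}_2'$ in Lemma \ref{thm_abst_cond}. Checking that a general $w\in W_2$ with $Q_2 w\in W_1$ lies in this span is immediate by linearity and injectivity. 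Lemma \ref{thm_abst_cond} then yields $Q_1Q_2=0$ if and only if $\tilde{W}_2=\{0\}$.

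\textbf{Expected obstacle.} The only real obstacle is the column-injectivity needed for unitarity on the rowspace. Row $1$-sparsity alone fails: $Q_2$ with two rows both equal to $e_1^T$ is not isometric on $W_2=\Span\{e_1\}$. I would handle this by stating the double $1$-sparsity assumption that holds in all applications. As a fallback, the lemma can be proved directly: $(Q_1Q_2)_{k,\cdot} = \alpha_k \beta_{\delta(k)} e_{\gamma(\delta(k))}^T$ whenever $\delta(k)\in R_2$. This is nonzero exactly when some $e_i$ with $i\in R_2$ lies in $W_1$, which is exactly when $\tilde{W}_2 \ne \{0\}$.
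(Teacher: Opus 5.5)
Your main argument is correct and is essentially the paper's proof. Both reduce to Lemma~\ref{thm_abst_cond}, both note that each $W_i$ is spanned by standard basis vectors so $Q_i$ vanishes on $W_i^\perp$, and both split $W_2$ according to whether $Q_2$ sends a basis vector into $W_1$ or into $W_1^\perp$. The paper's $\tilde W_2=\Span\{e_i\mid \beta_i\neq 0,\ \alpha_{\gamma(i)}\neq 0\}$ is your space, written via the column action. The paper starts from $Q_1e_i=\alpha_i e_{\sigma(i)}$ and $Q_2e_i=\beta_i e_{\gamma(i)}$ with $\sigma,\gamma$ \emph{permutations}, which is exactly the row-and-column $1$-sparsity you want to add. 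It says this ``follows from our assumptions'', although row sparsity alone does not give it. It also leaves the unitarity-on-$W_i$ hypothesis implicit, whereas you check it. So your diagnosis of the hidden assumption is right.

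Your fallback, however, does not remove the need for that assumption. Its final step reads ``$R_2\cap\delta(R_1)\neq\emptyset$, which is exactly when $\tilde W_2\neq\{0\}$'', and this again uses injectivity of $\gamma$. In general $\tilde W_2=\Span\{e_j\mid j\in\gamma(R_2),\ \gamma^{-1}(j)\subseteq\delta(R_1)\}$, which can be trivial even when $R_2\cap\delta(R_1)\neq\emptyset$.

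In fact your own example refutes the lemma as literally stated. Take $Q_2=\begin{pmatrix}1&0\\1&0\end{pmatrix}$ and $Q_1=\rho_0$. Then $W_1=W_2=\Span\{e_1\}$ and $Q_2e_1=e_1+e_2\notin W_1$, so $\tilde W_2=\{0\}$, yet $Q_1Q_2=\rho_0\neq 0$. With row sparsity alone, only the implication $Q_1Q_2=0\Rightarrow\tilde W_2=\{0\}$ survives, because each $Q_i$ is still injective on $W_i$.

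So the column condition is needed for the statement itself, not merely for the route through Lemma~\ref{thm_abst_cond}. Your primary plan of adding it to the hypotheses is the correct fix. It costs nothing downstream, since tensor and matrix products of elements of $\mathscr{P}_1$ are doubly $1$-sparse.
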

\begin{proof}
	Let $\{e_{i}\}_{i=1}^{N}$ denote the standard orthonormal basis. It follows from our assumptions that the action of $Q_{1}, Q_{2}$, and $Q_{1}Q_{2}$ can be described in the form
	\begin{equation} \label{eqn:basis_mult}
		Q_{1}e_{i} = \alpha_{i}e_{\sigma(i)} \, , \qquad Q_{2}e_{i}=\beta_{i}e_{\gamma(i)}, \qquad \text{and} \qquad Q_{1}Q_{2}e_{i} = \alpha_{\gamma(i)}\beta_{i}e_{\sigma(\gamma(i))}\, , 
	\end{equation}
	where the coefficients $\alpha_{i}, \beta_{i}$ are constants that are either zero or have modulus $1$ and $\gamma, \sigma$ are permutations on $\{1,\dots, N\}$. Lemma \ref{thm_abst_cond} can be invoked to prove the claim once the relevant hypotheses (1) $Q_i$ is trivial on $W_i^\perp$ for $i=1,2$, and (2) $W_{2} = \tilde{W}_{2} \oplus \tilde{W}_{2}^\prime$ are verified. 
	
	To prove hypothesis (1), we use the constructions in \eqref{eqn:basis_mult}. Then, it follows that $W_1 = \text{span} \{e_i \mid \alpha_i \ne 0 \}$ and therefore $W_1^\perp = \text{span} \{e_i \mid \alpha_i = 0\}$, which shows that $Q_1$ is trivial on $W_1^\perp$. Similarly, $W_2 = \text{span} \{e_i \mid \beta_i \ne 0 \}$ and therefore $W_2^\perp = \text{span} \{e_i \mid \beta_i = 0\}$, which shows that $Q_2$ is trivial on $W_2^\perp$. Thus, hypothesis (1) is proved.
	
	To prove hypothesis (2), we note that $\tilde{W}_2 = \text{span}\{e_{i} \mid \beta_{i} \neq 0, \alpha_{\gamma(i)} \neq 0\}$. The expression $W_2 = \tilde{W}_2 + \tilde{W}_2^\prime$ is satisfied with $\tilde{W}_{2}^\prime = \textup{span}\{e_{i} \mid \beta_{i} \neq 0, \alpha_{\gamma(i)}=0\}$. Moreover, since $\tilde{W}_2 \cap \tilde{W}_2^\prime = \{0\}$, then $\tilde{W}_2 + \tilde{W}_2^\prime$ is a direct sum, and hypothesis (2) is proved. Therefore, both hypotheses of Lemma \ref{thm_abst_cond} are satisfied, and so $Q_1Q_2 = 0$ if and only if $\tilde{W}_2 = \{0\}$.
\end{proof}


\subsection{Proof of Lemma \ref{incl_prod}} \label{Proof of incl_prod}
\begin{restatable}[]{lemma}{LemmaQRcompletion} \label{incl_prod}
	Suppose $Q_i \in \mathbb{C}^{N\times N}$ for $i=1,2$ is a $1$-sparse matrix whose entries are either $0$ or have modulus $1$. Let $W_i$ be the rowspace of $Q_i$, and let $Q_i$ be unitary on $W_i$. Suppose also that $\tilde{W}_{2} \coloneq \{w \in W_{2} \mid Q_{2}w \in W_{1}\} \neq \{0\}$ and consider $Q_{1}Q_{2} \colon \tilde{W}_2 \to V$. If $\overline{Q}_i$ is a valid unitary completion of $Q_i$, then $\overline{Q}_{1}\overline{Q}_{2}$ is a valid unitary completion of $Q_{1}Q_{2}$. 
\end{restatable}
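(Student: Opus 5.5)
Following Definition \ref{def:completion}, I need to check two things with $W=\tilde{W}_2$ as the domain of $Q_1Q_2$: (i) $\overline{Q}_1\overline{Q}_2$ is unitary on $V=\mathbb{C}^N$, and (ii) $\overline{Q}_1\overline{Q}_2 w = Q_1Q_2 w$ for every $w\in\tilde{W}_2$. Part (i) is immediate, since a product of unitaries is unitary. I would also record that $Q_1Q_2$ is unitary on $\tilde{W}_2$, so that the hypothesis of Definition \ref{def:completion} is actually met. For $w\in\tilde W_2\subset W_2$ we have $\|Q_2w\|=\|w\|$, and since $Q_2w\in W_1$ also $\|Q_1Q_2w\|=\|Q_2w\|$. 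Polarization then upgrades norm preservation to preservation of inner products.

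The substance is (ii), and it is a two-step substitution. Take $w\in\tilde W_2$. Because $w\in W_2$ and $\overline{Q}_2$ completes $Q_2$ on $W_2$, we get $\overline{Q}_2w=Q_2w$. By the definition of $\tilde W_2$, the vector $u\coloneq Q_2w$ lies in $W_1$. Since $\overline{Q}_1$ completes $Q_1$ on $W_1$, we get $\overline{Q}_1u=Q_1u$. Chaining these gives $\overline{Q}_1\overline{Q}_2w=\overline{Q}_1Q_2w=Q_1Q_2w$, which is (ii).

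The one point needing care is the "trivial on the complement" clause that Definition \ref{def:completion} uses to define $Q^\perp$. I would show that $Q_1Q_2$ vanishes on $\tilde W_2^\perp$ by using the structure from the proof of Lemma \ref{trivial_cond}. There $\tilde W_2=\Span\{e_i\mid \beta_i\ne0,\ \alpha_{\gamma(i)}\ne0\}$, and from \eqref{eqn:basis_mult} we have $Q_1Q_2e_i=\alpha_{\gamma(i)}\beta_ie_{\sigma(\gamma(i))}$, which is zero exactly when $e_i\notin\tilde W_2$. Because $\tilde W_2$ is spanned by standard basis vectors, its orthogonal complement is spanned by the remaining basis vectors, on which $Q_1Q_2$ vanishes. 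This also identifies $\tilde W_2$ as the rowspace of $Q_1Q_2$, which is what the proof of Theorem \ref{thm:Theorem for Lbar} needs for its induction.

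I expect the main obstacle to be exactly this bookkeeping: confirming that the rowspace in the $1$-sparse setting coincides with the coordinate span, and that unitarity "on $W_i$" is consistent with that description. The completion identity itself is a direct chain of the two completion properties.
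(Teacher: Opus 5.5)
Your proof is correct and follows the paper's argument. The paper also obtains $\overline{Q}_1\overline{Q}_2w=\overline{Q}_1Q_2w=Q_1Q_2w$ for $w\in\tilde W_2$ by the same two-step substitution, and gets unitarity from the fact that a product of unitaries is unitary. Your extra checks are correct but not in the lemma's proof in the paper: that $Q_1Q_2$ is isometric on $\tilde W_2$, and that it vanishes on $\tilde W_2^\perp$, so that $\tilde W_2$ is the rowspace of $Q_1Q_2$. The paper only remarks on the rowspace identification inside the proof of Theorem~\ref{thm:Theorem for Lbar}.
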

\begin{proof} 
	To prove that $\overline{Q}_1 \overline{Q}_2$ is a unitary completion to $Q_1 Q_2$, we must show that (1) $\overline{Q}_1 \overline{Q}_2 w = Q_1  Q_2 w$ where $w \in \tilde{W}_2$, and (2) $\overline{Q}_1 \overline{Q}_2$ in unitary on $\mathbb{C}^N$. 
	
	To prove (1), let $w \in \tilde{W}_{2}$, then
	\begin{equation*}
		\overline{Q}_{1}\overline{Q}_{2}w = \overline{Q}_{1}Q_{2}w = Q_{1}Q_{2}w \, , 
	\end{equation*}
	where the first relation is true because $\overline{Q}_2$ is a unitary completion of $Q_2$, and the second because $Q_2w \in W_1$ and $\overline{Q}_1$ is a unitary completion of $Q_1$.
	
	To prove (2), we note that the product of two unitary matrices is unitary. Thus, the claim is proved since both (1) and (2) are true.
\end{proof}


\subsection{Proof of Corollary \ref{thm:UPV}} \label{Proof for UPV}
\CorUPV*
\begin{proof}
	Let $W^{'}_{1}$ denote the rowspace of $\prod_{i=1}^{m}\bigotimes_{j=1}^{n_{i}} P_{i, j}$ and set $W_{1} \coloneqq V^{-1}(W^{'}_{1})$. We need to show two things: 1) That $\overline{L}$ is unitary, and 2) that $\overline{L}w = Lw$ for all $w \in W_{1}$. The first condition follows from the fact that each factor on the right hand side of \eqref{p2} is unitary. The second is verified by 
	\begin{equation*}
		\overline{L}w = U \left(\prod_{i=1}^{m} \bigotimes_{j=1}^{n_{i}}\overline{P}_{i,j}\right) \left(Vw\right) = 
		U \left(\prod_{i=1}^{m} \bigotimes_{j=1}^{n_{i}}P_{i,j}\right) \left(Vw\right) = Lw \, , 
	\end{equation*}
	where the third equality follows from Definition \ref{def:completion} and the hypothesis that $Vw \in W_{1}^{'}$. 
\end{proof}


\subsection{Proof of Corollary \ref{thm:U1_P2}} \label{Proof for U1_P2}
\corollaryLLT*
\begin{proof}
	Following Theorem \ref{thm:Theorem for U1}, we only need to show that $L_l L_l^T \in \mathcal{R}$ is true. We also need to consider the case that if $L_l L_l^T = \mathbf{0}$, then $U_{l,1}$ is simply the identity following \eqref{eqn:Ul1 Ul2}.
	
	Next, define $P_i \coloneq \bigotimes_{j=1}^{n_i} P_{i,j}$, where $P_{i,j} \in \mathscr{P}_2$ with size $2^{m_j} \times 2^{m_j}$ for some positive integer $m_j$. Using \eqref{eqn:Llgeneral}, this gives $L=P_1 \cdots P_m$. Let us first consider the case in which $m=2$ and write $L_lL_l^T = P_1 P_2 P_2^T P_1^T$. We now show that $L_lL_l^T \in \mathcal{R}$. Expanding out the middle terms, we find
	\begin{equation}
		P_2 P_2^T 
		= \left( \bigotimes_{j=1}^{n_2} P_{2,j} \right) \left( \bigotimes_{j=1}^{n_2} P_{2,j}^T \right)
		= \left( \bigotimes_{j=1}^{n_2} P_{2,j} P_{2,j}^T \right) .
	\end{equation}
	We now show that $P_2 P_2^T \in \mathcal{R}$ by considering all possible cases. First, let $P_{2,j} \in \mathbb{P}_\rho$, then $P_{2,j} P_{2,j}^T \in \{\rho_0,\rho_3,I\} \in \mathcal{R}$. Next, let $P_{2,j} \in \mathbb{P}_\sigma$, then $P_{2,j} P_{2,j}^T = \pm I \in \mathcal{R}$, since $\mathcal{R}$ is closed under scalar multiplication. Lastly, let $P_{2,j} \in \mathbb{S}_n$ is unitary, then we have $P_{2,j} P_{2,j}^T = I \in \mathcal{R}$. Since $\mathcal{R}$ is closed under tensor products, and $P_{2,j} P_{2,j}^T \in \mathcal{R}$, then $P_2 P_2^T \in \mathcal{R}$ as claimed. 
	
	To finalize the $m=2$ case, we now show that $P_1 P_2 P_2^T P_1^T \in \mathcal{R}$. First, observe that since $P_2P_2^T \in \mathcal{R}$ and $P_2$ is the same size as $P_1$, we may write $P_2 P_2^T = \bigotimes_{j=1}^{n_1} R_j$ for some $R_j \in \mathcal{R}$ with size $2^{m_j} \times 2^{m_j}$. With this, we expand out
	\begin{equation}
		\begin{split}
			P_1 P_2 P_2^T P_1^T &= \left( P_{1,1} \otimes \cdots \otimes P_{1,n_1} \right) 
			\left( R_1 \otimes \cdots \otimes R_{n_1} \right)
			\left( P_{1,1}^T \otimes \cdots \otimes P_{1,n_1}^T \right) \\
			&= P_{1,1}R_1P_{1,1}^T \otimes \cdots \otimes P_{1,n_1}R_{n_1}P_{1,n_1}^T .
		\end{split}
	\end{equation} 
	We claim that $P_{1,j}R_jP_{1,j}^T \in \mathcal{R} \cup \{\mathbf{0}\}$ for all $j=1,\dots,n_1$. Once again, this claim is proved by considering all possible cases. Let $P_{1,j} \in \mathbb{P}_\rho$, then $P_{1,j}R_jP_{1,j}^T \in \{\mathbf{0},\rho_0,\rho_3,I\} \in \mathcal{R} \cup \{\mathbf{0}\}$. Next, let $P_{1,j} \in \mathbb{P}_\sigma$, then $P_{1,j}R_jP_{1,j}^T \in \{\pm\rho_0,\pm\rho_3,\pm I\} \in \mathcal{R}$, since $\mathcal{R}$ is closed under scalar multiplication. Lastly, let $P_{1,j} \in \mathbb{S}_n$, then from \ref{eqn:Set Sn}
	\begin{equation}
		P_{1,j}R_jP_{1,j}^T = S \left( \bigotimes_{k=1}^{m_j} X^{b_k} \right) R_j \left( \bigotimes_{k=1}^{m_j} X^{b_k} \right) S^T
		= S \hat{R}_j S^T ,
	\end{equation}
	where $\hat{R} = ( \bigotimes_{k=1}^{m_j} X^{b_k} ) R_j ( \bigotimes_{k=1}^{m_j} X^{b_k} ) \in \mathcal{R}$, which follows from the fact that $X \rho X \in \mathcal{R}$ for $\rho \in \{\rho_0,\rho_3,I\}$. Let $S(i)$ be the $S$ permutation of a binary string representation of an integer $i$. Expand $\hat{R} = \hat{R}_1 \otimes \cdots \otimes \hat{R}_{m_j}$, then $S\hat{R}S^T = \hat{R}_{S(1)} \otimes \cdots \otimes \hat{R}_{S(m_j)} \in \mathcal{R}$. Therefore, $P_{1,j} R_j P_{1,j}^T \in \mathcal{R}$ and, since $\mathcal{R}$ is closed under tensor products, it follows that $P_1P_2P_2^TP_1^T \in \mathcal{R}$ as claimed.
	
	The general case now follows from an inductive argument. Let $m>2$, then by definition we have $L_l L_l^T = P_1 \cdots P_m P_m^T \cdots P_1^T$. We take as an inductive hypothesis that $P_{m-1} P_m P_m^T P_{m-1}^T \in \mathcal{R} \cup \{\mathbf{0}\}$. From the logic of the $m=2$ case, it follows that $P_{m-2} P_{m-1} P_m P_m^T P_{m-1}^T P_{m-2}^T \in \mathcal{R} \cup \{\mathbf{0}\}$. By successive applications of this, we find that $L_l L_l^T \in \mathcal{R} \cup \{\mathbf{0}\}$. Therefore, $U_{l,1}$ is either the identity or can be implemented with a single multi-controlled \textsc{NOT} gate following \cite{GS2025_LCofThings}.
\end{proof}


\section{Derivations of Important Relations}

\subsection{Derivation of $A^{(\textnormal{e})}$ from \eqref{eqn:Ae_expr}} \label{sec:Ae deriv}
Before deriving the expression for $A^{(\text{e})}$, we introduce a useful construction. Define the matrix $E_{i,j}^{n}$ as an $n \times n$ matrix such that $n$ is a power of two and all entries are zero except that the $(i,j)\text{th}$ entry is equal to $1$. We claim that $E_{i,j}^{n} = \bigotimes_{l=1}^{\log n} r_l$ for some $r_l \in \mathbb{P}_\rho$. As an example, consider
\begin{equation} \label{eqn:Example Matrix}
	E_{1,2}^4 = 
	\begin{pmatrix} 0&0&0&0 \\ 0&0&1&0 \\ 0&0&0&0 \\ 0&0&0&0 \end{pmatrix}
	=
	\rho_1 \otimes \rho_2 .
\end{equation}
The goal of this preamble is to determine how to construct the Kronecker product given $(i,j)$ and $n$ (see Theorem 3 of \cite{GS2025_LCofThings} for a similar discussion). First, define the function $\mathcal{B}_\beta(d)$, which maps the base-ten number $d \le 2^\beta-1$ to a binary number with $\beta$ digits. For example, $\mathcal{B}_4(7)=0111$. Next, following \cite{Gunlycke2020,Demirdjian2025}, we define the function $\mathcal{F}:\{0,1\}^K\times\{0,1\}^K\to \{0,1,2,3\}^K$ such that $\mathcal{F}(i_K,j_K)=f_{K-1}\dots f_0$ where $f_k=2i_k+j_k$ for $k=0,\dots,K-1$, $i_K\coloneq i_{K-1}\dots i_0$ and $j_K\coloneq j_{K-1}\dots j_0$ for $i_k, j_k \in \{0,1\}$. 

Together, these functions can be used to map the decimal indices $(i,j)$ into the quaternary bitstring $f_{K-1}\dots f_0$. This enables the desired relation: $E_{i,j}^{n} = \rho_{\mathcal{F}(\mathcal{B}_{\log n}(i),\mathcal{B}_{\log n}(j))} \coloneq \rho_{f_{K-1}} \otimes \cdots \otimes\rho_{f_0}$. As a demonstration, consider that the non-zero entry in \eqref{eqn:Example Matrix} has $(i,j)=(1,2)$ and $n=4$. Then, $\mathcal{F}(\mathcal{B}_2(1),\mathcal{F}(\mathcal{B}_2(2))) = \mathcal{F}(01,10) = 12$. Therefore, $E_{1,2}^4 = \rho_{\mathcal{F}(\mathcal{B}_2(1),\mathcal{F}(\mathcal{B}_2(2)))} = \rho_{12} = \rho_1 \otimes \rho_2$, as desired. 

We now derive $A^{(\text{e})}$. By inserting \eqref{eqn:A} into the definition of $A^{(\text{e})}$ from \eqref{eqn:Ae_def}, we can see that 
\begin{equation} \label{eqn:Ae decomp}
	A^{(\text{e})} 
	\coloneq 
	\begin{pmatrix} 0_{N_0 \times N_0} & 0_{N_0 \times \Delta} \\ 0_{\Delta \times N_0} & A \end{pmatrix}
	= \sum_{j=1}^\alpha \hat{A}_j^j + \sum_{k=2}^{N_F} \sum_{j=1}^{\alpha-k+1} \hat{A}_{j+k-1}^j 
	+ \sum_{j=1}^{\alpha-1} \hat{A}_{j-1}^j ,
\end{equation}
with
\begin{equation} \label{eqn:Ahatjj}
	\begin{split}
		\hat{A}_j^j &\coloneq
		\left(
		\begin{array}{c|ccccc}
			0_{N_0 \times N_0} & & & & & \\ 
			\hline
			& 0_{N \times N} & & & & \\ & & \ddots & & & \\ & & & A_j^j & & \\ & & & & \ddots & \\ & & & & & 	0_{N^\alpha \times N^\alpha} 
		\end{array} \right)_{2N^\alpha \times 2N^\alpha} \\[4pt]
		&=
		\left(
		\begin{array}{ccccc}
			0_{N^j \times N^j} & & & & \\ & \ddots & & & \\ & & A_j^j & & \\ 
			& & & \ddots & \\ & & & & 0_{N^j \times N^j} 
		\end{array} \right)
		= E_{r_1,c_1}^{2N^{\alpha-j}} \otimes A_j^j ,
	\end{split}
\end{equation}
where the second equality is true because there is an integer multiple of the $0_{N^j \times N^j}$ on all sides of $A_j^j$, the third equality is true since each block is equal in size and $r_1 = c_1 = N^{\alpha-j} - \sum_{l=0}^{\alpha-j-1} N^l$.  Similarly, we have
\begin{equation} \label{eqn:Ahatjpkm1j}
	\begin{split}
		\hat{A}_{j+k-1}^j &\coloneq
		\left(
		\begin{array}{c|ccccccc}
			0_{N_0 \times N_0} & & & & & & & \\
			\hline
			& 0_{N \times N} & \cdots & 0_{N \times N^k} & & & & \\
			& & \ddots & & \ddots & & & \\
			& & & & & A_{j+k-1}^j & & \\
			& & & & & & \ddots & \\
			& & & & & & & 0_{N^{\alpha-k+1} \times N^\alpha} \\
			& & & & & & \ddots & \vdots \\
			& & & & & & & 0_{N^\alpha \times N^\alpha}
		\end{array} \right)_{2N^\alpha \times 2N^\alpha} \\[4pt]
		&= 
		\left(
		\begin{array}{ccccccc}
			0_{a \times a} & \cdots & 0_{a \times a} & & & & \\
			& \ddots & & \ddots & & & \\
			& & & & \begin{pmatrix} 0_{b \times a} \\ A_{j+k-1}^j \\ 0_{c \times a} \end{pmatrix} & & \\
			& & & & & \ddots & \\
			& & & & & & 0_{a \times a} \\
			& & & & & \ddots & \vdots \\
			& & & & & & 0_{a \times a}
		\end{array} \right)
		= E_{r_2,c_2}^{2N^{\alpha-j-k+1}} \otimes 
		\begin{pmatrix} 0_{b \times a} \\ A_{j+k-1}^j \\ 0_{c \times a}\end{pmatrix} ,
	\end{split}
\end{equation}
where $a=N^{j+k-1}$, $b=N^{j+k-1} - \sum_{l=j}^{j+k-2}N^l$, $c=\sum_{l=j+1}^{j+k-2}N^l$, $r_2 = N^{\alpha-j-k+1} - \sum_{l=0}^{\alpha-j-k} N^l - 1$ and $c_2 = N^{\alpha-j-k+1} - \sum_{l=0}^{\alpha-j-k} N^l$. Note, for $k=2$, then $c=0$. The equalities in \eqref{eqn:Ahatjpkm1j} are true by the same logic as in the relations from \eqref{eqn:Ahatjj}, however, $A_{j+k-1}^j$ requires zero padding to make it square. The reason the zero padding comes from both above and below is to ensure that there is an integer multiple of the $0_{a \times a}$ blocks on all sides of it. Without this condition, the third equality would not be possible. Finally, the last term in \eqref{eqn:Ae decomp} is obtained in a similar way as \eqref{eqn:Ahatjpkm1j}, but for the subdiagonal terms, yielding
\begin{equation} \label{eqn:Ahatjm1j}
	\hat{A}_{j-1}^j = \left( E_{r_2,c_2}^{2N^{\alpha-j+1}} \right)^T
	\otimes	\begin{pmatrix} 0_{N^j \times (N^j-N^{j-1})} & A_{j-1}^j \end{pmatrix} .
\end{equation}

We now use the method in the preamble to determine the form of $E_{r_1,c_1}^{2N^{\alpha-j}}$ and $E_{r_2,c_2}^{2N^{\alpha-j-k+1}}$. Each is accomplished in two parts: (1) convert the decimal indices $(i,j)$ into the binary indices $(\mathcal{B}_\beta(i),\mathcal{B}_\beta(j))$, and (2) evaluate $\rho_{\mathcal{F}(\mathcal{B}_\beta(i),\mathcal{B}_\beta(j))}$. For $N$ a power of two and integer $m$, a useful property for step (1) is 
\begin{equation} \label{eqn:binary prop}
	\mathcal{B}_\beta(N^m) = \underbrace{0 \dots 0\underbrace{10 \dots 0}_{\mathclap{m \log N+1}} }_\beta .
\end{equation}
Beginning with $E_{r_1,c_1}^{2N^{\alpha-j}}$, define $\beta_1 = \log 2N^{\alpha-j}$, then \begin{equation*}
	\begin{split}
		\mathcal{B}_{\beta_1}(r_1)
		&= \mathcal{B}_{\beta_1}(N^{\alpha-j} - 1 - N - \cdots - N^{\alpha-j-1}) \\
		&= \mathcal{B}_{\beta_1}(N^{\alpha-j}) - \mathcal{B}_{\beta_1}(1) - \mathcal{B}_{\beta_1}(N) - \cdots - \mathcal{B}_{\beta_1}(N^{\alpha-j-1}) \\
		&= \underbrace{0 1 \dots 1}_{ \mathclap{\beta_1} } 
		- \mathcal{B}_{\beta_1}(N) - \cdots - \mathcal{B}_{\beta_1}(N^{\alpha-j-1}) \\
		&= \underbrace{
			\underbrace{0 1 \dots 1}_{ \mathclap{\log N} } 
			\underbrace{0 1 \dots 1}_{ \mathclap{\log N} }
			\dots
			\underbrace{0 1 \dots 1}_{ \mathclap{\log N} }
			1 }_{\beta_1} ,
	\end{split}
\end{equation*}
where we have used $\mathcal{B}_{\beta_1}(N^{\alpha-j}) - \mathcal{B}_{\beta_1}(1) = 01\dots1$ in the third equality and successive applications of \eqref{eqn:binary prop} for the forth. Since $r_1=c_1$, we have 
\begin{equation*}
	\mathcal{F}(\mathcal{B}_{\beta_1}(r_1),\mathcal{B}_{\beta_1}(c_1))
	= \underbrace{
	\underbrace{0 3 \dots 3}_{ \mathclap{\log N} } 
	\underbrace{0 3 \dots 3}_{ \mathclap{\log N} }
	\dots
	\underbrace{0 3 \dots 3}_{ \mathclap{\log N} }
	3 }_{\beta_1} ,
\end{equation*}
and therefore
\begin{equation} \label{eqn:Er1c1}
	E_{r_1,c_1}^{2N^{\alpha-j}} = 
	\rho_{\mathcal{F}(\mathcal{B}_{\beta_1}(r_1),\mathcal{B}_{\beta_1}(c_1))} = 
	\left( \rho_0 \otimes \rho_3^{\otimes \log N-1} \right)^{\otimes \alpha - j} \otimes \rho_3.
\end{equation}

Next, we apply a similar procedure to $E_{r_2,c_2}^{2N^{\alpha-j-k+1}}$. Let $\beta_2=\log 2N^{\alpha-j-k+1}$, then 
\begin{equation*}
	\begin{split}
		\mathcal{B}_{\beta_2}(r_2) 
		&= \mathcal{B}_{\beta_2}(N^{\alpha-j-k+1} - 2 - N - \cdots - N^{\alpha-j-k}) \\
		&= \mathcal{B}_{\beta_2}(N^{\alpha-j-k+1}) - \mathcal{B}_{\beta_2}(2) - \mathcal{B}_{\beta_2}(N) 
		- \cdots - \mathcal{B}_{\beta_2}(N^{\alpha-j-k}) \\
		&= \underbrace{0 1 \dots 10}_{ \mathclap{{\beta_2}} } 
		- \mathcal{B}_{\beta_2}(N) - \cdots - \mathcal{B}_{\beta_2}(N^{\alpha-j-k}) \\
		&= \underbrace{
			\underbrace{0 1 \dots 1}_{ \mathclap{\log N} } 
			\underbrace{0 1 \dots 1}_{ \mathclap{\log N} }
			\dots
			\underbrace{0 1 \dots 1}_{ \mathclap{\log N} }
			0 }_{\beta_2} .
	\end{split}
\end{equation*}
Similarly, the column index in binary is 
\begin{equation*}
\begin{split}
	\mathcal{B}_{\beta_2}(c_2) 
	&= \mathcal{B}_{\beta_2}(N^{\alpha-j-k+1} - 1 - N - \cdots - N^{\alpha-j-k}) \\
	&= \mathcal{B}_{\beta_2}(N^{\alpha-j-k+1}) - \mathcal{B}_{\beta_2}(1) - \mathcal{B}_{\beta_2}(N) 
	- \cdots - \mathcal{B}_{\beta_2}(N^{\alpha-j-k}) \\
	&= \underbrace{0 1 \dots 1}_{ \mathclap{{\beta_2}} } 
	- \mathcal{B}_{\beta_2}(N) - \cdots - \mathcal{B}_{\beta_2}(N^{\alpha-j-k}) \\
	&= \underbrace{
	\underbrace{0 1 \dots 1}_{ \mathclap{\log N} } 
	\underbrace{0 1 \dots 1}_{ \mathclap{\log N} }
	\dots
	\underbrace{0 1 \dots 1}_{ \mathclap{\log N} }
	1 }_{\beta_2} .
\end{split}
\end{equation*}
This gives
\begin{equation*}
	\mathcal{F}(\mathcal{B}_{\beta_2}(r_2),\mathcal{B}_{\beta_2}(c_2))
	= \underbrace{
	\underbrace{0 3 \dots 3}_{ \mathclap{\log N} } 
	\underbrace{0 3 \dots 3}_{ \mathclap{\log N} }
	\dots
	\underbrace{0 3 \dots 3}_{ \mathclap{\log N} }
	1 }_{\beta_2} ,
\end{equation*}
and therefore
\begin{equation} \label{eqn:Er2c2}
	E_{r_2,c_2}^{2N^{\alpha-j-k+1}} = 
	\rho_{\mathcal{F}(\mathcal{B}_{\beta_2}(r_2),\mathcal{B}_{\beta_2}(c_2))} = 
	\left( \rho_0 \otimes \rho_3^{\otimes \log N-1} \right)^{\otimes \alpha-j-k+1} \otimes \rho_1 .
\end{equation}

Next, we use the following relation
\begin{equation} \label{eqn:0_A_0 prop}
	\begin{pmatrix} 0_{b \times a} \\ A_{j+k-1}^j \\ 0_{c \times a}\end{pmatrix}
	= (\mathcal{P}_k \otimes I_{N^j}) A_{j+k-1}^{(\text{e}),j} ,
\end{equation}
where we recall that $\mathcal{P}_k \in \mathbb{C}^{N^{k-1} \times N^{k-1}}$ is defined in Appendix \ref{sec:Pk circuit} and we have used the definition of $A_{j+k-1}^{(\text{e}),j}$ from \eqref{eqn:Aejpkm1j_def}. In a similar fashion, the zero padded block in \eqref{eqn:Ahatjm1j} may be written as
\begin{equation} \label{eqn:0A prop}
	\begin{pmatrix} 0_{N^j \times (N^j-N^{j-1})} & A_{j-1}^j \end{pmatrix}
	= A_{j-1}^{(\text{e}),j} (\mathcal{P}_2^T \times I_{N^{j-1}}) ,
\end{equation}
where we have used the definition of $A_{j-1}^{(\text{e}),j}$ from \eqref{eqn:Aejm1j_def}. 

Finally, by evaluating \labelcref{eqn:Er1c1,eqn:Er2c2,eqn:0_A_0 prop,eqn:0A prop,eqn:Ahatjj,eqn:Ahatjpkm1j,eqn:Ahatjm1j} into \eqref{eqn:Ae decomp}, we obtain
\begin{equation}
	\begin{split}
		A^{(\text{e})} 
		&= \sum_{j=2}^{\alpha} (\rho_0 \otimes \rho_3^{\otimes \log N-1})^{\otimes \alpha-j} 
			\otimes \rho_2 \otimes \left[ A^{(\text{e}),j}_{j-1} \left( \mathcal{P}_2^T \otimes I_{N^{j-1}} \right) \right] \\		
		&+ \sum_{j=1}^{\alpha} (\rho_0 \otimes \rho_3^{\otimes \log N-1})^{\otimes \alpha-j} 
			\otimes \rho_3 \otimes A^j_j \\  
		&+ \sum_{k=2}^{N_F} \sum_{j=1}^{\alpha-k+1} (\rho_0 \otimes \rho_3^{\otimes \log N-1})^{\otimes \alpha-k-j+1} 
			\otimes \rho_{1} \otimes \left[ \left( \mathcal{P}_k\otimes I_{N^j} \right) A^{(\text{e}),j}_{j+k-1} \right] .
	\end{split}
\end{equation}


\subsection{Derivation of $A_{j+k-1}^{(\textnormal{e}),j}$ from \eqref{eqn:Aejpkm1j}} \label{Aejpkm1j_Derivation}
Here, we derive $A_{j+k-1}^{(\text{e}),j}$ from \eqref{eqn:Aejpkm1j} for $j=\{1,\dots,\alpha-1\}$ and $k \in \{0,1,\dots,N_F\}$. By definition, we have
\begin{equation} \label{eqn:Ajpkm1j_com}
	\begin{split}
		A_{j+k-1}^j &\coloneq \sum_{l=0}^{j-1} I_{N^l} \otimes F_k \otimes I_{N^{j-l-1}} \\
		&= \sum_{l=0}^{j-1} \Bigl( K^{(N^l,N)} (F_k \otimes I_{N^l}) K^{(N^k,N^l)} \Bigr) 
		\otimes I_{N^{j-l-1}} ,
	\end{split}
\end{equation}
where $K^{(a,b)}\in\mathbb{C}^{ab\times ab}$ is the commutation matrix which has the useful property $K^{(r,m)}(A \otimes B)K^{(n,q)} = B \otimes A$ for $A \in \mathbb{C}^{m \times n}$ and $B \in \mathbb{C}^{r \times q}$ \cite{Xu2018}. Next, we evaluate \eqref{eqn:Ajpkm1j_com} into \eqref{eqn:Aejpkm1j_def} to obtain
\begin{equation} \label{eqn:Ajpkm1j_Derivation}
	\begin{split}
		A_{j+k-1}^{(\text{e}),j} 
		=
		\sum_{l=0}^{j-1}
		\begin{pmatrix}
			K^{(N^l,N)} (F_k \otimes I_{N^l}) K^{(N^k,N^l)} \\
			0_{(N^{k+l}-N^{l+1})\times N^{k+l}}
		\end{pmatrix} 
		\otimes I_{N^{j-l-1}} ,
	\end{split}
\end{equation}
where we have followed steps similar to Appendix B in \cite{Demirdjian2025}. Next, we simplify the matrix product terms by
\begin{equation} \label{eqn:MatProd}
	\begin{split}
		&\begin{pmatrix}
			K^{(N^l,N)} (F_k \otimes I_{N^l}) K^{(N^k,N^l)} \\
			0_{(N^{k+l}-N^{l+1})\times N^{k+l}}
		\end{pmatrix}
		=
		\Bigl( \rho_0^{\otimes \log N^{k-1}} \otimes K^{(N^l,N)} \Bigr) 
		\Bigl( F_k^{(\text{e})} \otimes I_{N^l} \Bigr)    
		K^{(N^k,N^l)} ,
	\end{split}
\end{equation}
where we once again followed steps similar to Appendix B in \cite{Demirdjian2025}. Finally, evaluate \eqref{eqn:MatProd} into \eqref{eqn:Ajpkm1j_Derivation} to give the full expression
\begin{equation} \notag
	A^{(\text{e}),j}_{j+k-1} =
	\sum_{l=0}^{j-1}
	\left[
	\left( \rho_0^{\otimes \log N^{k-1}} \otimes K^{(N^l,N)} \right) 
	\left(
	F^{(\text{e})}_k 
	\otimes I_{N^l}
	\right)
	K^{(N^k,N^l)} \right]
	\otimes I_{N^{j-l-1}} .
\end{equation}

	
\section{Condition Number Analysis} \label{sec:Condition Number}
The DVBE from \eqref{eqn:DVBE} may be rewritten in the more general form 
\begin{equation} \label{eqn:DVBE r0}
	\frac{\partial f_m}{\partial t} + \vec{c}_m\cdot\nabla f_m = - \frac{1}{\tau}(f_m-f_m^\text{(eq)}) ,
\end{equation}
where the difference here is that we have used the more general velocities $\vec{c}_m = r_0 \frac{\Delta x}{\Delta t} \vec{e}_m$ for $r_0 \in \mathbb{R}$. In LBE applications, one typically chooses $r_{0}=1$ so that the characteristics pass exactly through lattice points. This allows one to derive the standard LBE via integration along characteristics. Here, however, we adopt a finite difference scheme independent of this procedure and, so, have more freedom in selecting $r_{0}$ (this is sometimes referred to as the off-lattice Boltzmann method, see for example Section 3.2 of \cite{horstmann2018hybrid}). Note in this new context we must redefine the following: 
\begin{equation*}
	c_{s}^{2} = r_{0}^{2}\frac{\left(\Delta x\right)^{2}}{\left(\Delta t\right)^{2}} c_s^{\star 2} 
	, \qquad
	\tau = r_0^{-2} \frac{(\Delta t)^2}{(\Delta x)^2} \frac{\nu}{c_s^{\star 2}} .
\end{equation*}
In the following analysis, we consider the matrix $L^{(e)}$ obtained after applying the zero padded Carleman Linearization from Section \ref{sec:Carl Lin Zero Pad} to this version of the DVBE. Specifically, our goal is to compare $\kappa(L^{(\text{e})})$ and $\kappa(L)$. 

Observe that, the matrix $L^{(e)}$ is unitarily equivalent to 
\begin{equation*}
	\tilde{L}^{\text{(e)}} \coloneqq \begin{pmatrix}
		V_{n_{t}}\otimes I_{N_{0}} & \\ 
		& L
	\end{pmatrix} \, , 
\end{equation*}
where $ V_{n_{t}} = \text{tridiag}{(-1,1,0)}$ of size $n_{t}\times n_{t}$, and therefore $\tilde{L}^{(\text{e})}$ must have the same condition number as $L^{(\text{e})}$. It follows from the block structure of $\tilde{L}^\text{(e)}$, the properties of $\|\cdot\|_{2}$, and the definition of the condition number that 
\begin{equation} \label{eqn:keq1}
	\kappa(L^{(e)}) = \max\{ \|V_{n_{t}}\|_{2}, \|L\|_{2} \} \cdot \max\{\|V_{n_{t}}^{-1}\|_{2}, \|L^{-1}\|_{2}\} .
\end{equation}
Direct calculation shows that $\|V_{n_{t}}\|_{2} \sim 2$ and $\|V^{-1}_{n_{t}}\|_{2} \sim (n_{t}+1)/\pi$. Let $B \coloneqq I -  A$. It follows from the structure of $L$ and the subadditivity of $\|\cdot \|_{2}$ that  
\begin{align}\label{eqn:keq2}
	\begin{split}
		\sqrt{1+\|B\|_{2}^{2}} & \leq \|L\|_{2} \leq \max\{2, 1+\|B\|_{2}\} \,\\ 
		\sqrt{\sum_{k=0}^{n_{t}-1}\|B^{-1}\|_{2}^{2k}} & \leq \|L^{-1}\|_{2} \leq \sum_{k=0}^{n_{t}-1}\|B^{-1}\|_{2}^{k} .
	\end{split}
\end{align}
Using the fact that $\max\{2,1+ \|B\|_2\} \le 2(1+\|B\|_2)$, equations \eqref{eqn:keq1} and \eqref{eqn:keq2} show that 
\begin{equation} \label{eqn:upperLowerKest}
	\left(\sqrt{1+\|B\|^{2}_{2}} \right) \max\{\|V_{n_{t}}^{-1}\|_{2}, \|L^{-1}\|_{2}\} \leq \kappa(L^{(\text{e})}) \leq 2\left(1+\|B\|_{2}\right)\max\{\|V_{n_{t}}^{-1}\|_{2}, \|L^{-1}\|_{2}\} \, . 
\end{equation}
Hence, we are left to estimate $\|L^{-1}\|_{2}$. To this end, we require the logarithmic norm of a matrix $M$ defined by 
\begin{equation*}
	\mu_p(M) \coloneq \lim_{\epsilon \to 0^{+}} \frac{\| I + \epsilon M \|_p - 1}{\epsilon} .
\end{equation*}
Equipped with this, we use the following inequality: 
\begin{equation}\label{eqn:logInvEst}
	\|B^{-1}\|_{2} \leq \frac{1}{\max\{-\mu_{2}(-B), -\mu_{2}(B)\} } \, , 
\end{equation}
which is valid provided either $\mu_{2}(B)<0$ or $\mu(-B)<0$ (see Proposition 2.3 of \cite{soderlind2006logarithmic}). Let $D$ be a matrix whose diagonal blocks are exactly $A^{i}_{i}$ and whose off-diagonal blocks are trivial. Furthermore, let $N = A-D$ and let the maximum eigenvalue of an arbitrary matrix $M$ be defined by $\lambda_\text{max}(M)$. It follows from the definition of $\mu_{2}(\cdot)$ that 
\begin{align}\label{eqn:mu1}
	\begin{split}
		\mu_{2}(A) \coloneqq \lim_{\epsilon \to 0^{+}}\frac{\|I+\epsilon D +\epsilon N\|_{2}-1}{\epsilon} &\leq \|N\|_{2} +\lim_{\epsilon \to 0^{+}}\frac{\|I+\epsilon D \|_{2}-1}{\epsilon} = \mu_{2}(D) + \|N\|_{2}\\
		& \leq \lambda_{\text{max}}\left(\frac{D+D^{\dagger}}{2}\right) + \Delta t \alpha\left(\|F_{2}\|_{2}+\|F_{3}\|_{2}\right) ,
	\end{split}
\end{align}
where in the last inequality we have used the subadditivity property and the fact that $\mu_{2}(M) = \lambda_{\text{max}}\left(\frac{M+M^{\dagger}}{2}\right)$ for an arbitrary matrix $M$. Moreover, from \eqref{eqn:S_etapm} we observe that $S_{\eta}+S_{\eta}^{\dagger}=0$ for $\eta \in \{x,y,z\}$ and the eigenvalues of $\left(A_{j}^{j}+(A_{j}^{j})^{\dagger}\right)/2$ for $j \in \{1,\dots,\alpha\}$ are simply $\{\lambda_{i_{1}}+ \cdots + \lambda_{i_{k}} \;|\; \lambda_{i_{j}} \in \sigma((R+R^{\dagger})/2) \}$ where $\sigma(M)$ is the spectrum of $M$. These observations may now be combined with \eqref{eqn:mu1} to show that 
\begin{equation}\label{eqn:general_mu_ineq}
	\mu_{2}(A) \leq \begin{cases}
		\Delta t \left(\lambda_{\text{max}}\left(\frac{R+R^{\dagger}}{2}\right) + \alpha\left(\|F_{2}\|_{2}+\|F_{3}\|_{2}\right) \right) \qquad \lambda_\text{max}(R+R^{\dagger}) < 0 \\
		\Delta t \alpha \left(\lambda_{\text{max}}\left(\frac{R+R^{\dagger}}{2}\right) + \left(\|F_{2}\|_{2}+\|F_{3}\|_{2}\right)\right) \qquad \lambda_\text{max}(R+R^{\dagger}) \ge 0 \, .
	\end{cases}
\end{equation}
Motivated by \eqref{eqn:general_mu_ineq}, we take 
\begin{equation}
	\beta_{1} \coloneqq \begin{cases}
		\left(\lambda_{\text{max}}\left(\frac{R+R^{\dagger}}{2}\right) + \alpha\left(\|F_{2}\|_{2}+\|F_{3}\|_{2}\right) \right) \qquad \lambda_\text{max}(R+R^{\dagger}) < 0 \\
		\alpha \left(\lambda_{\text{max}}\left(\frac{R+R^{\dagger}}{2}\right) + \left(\|F_{2}\|_{2}+\|F_{3}\|_{2}\right)\right) \qquad \lambda_\text{max}(R+R^{\dagger}) \ge 0 \, ,
	\end{cases}
\end{equation}
so that 
\begin{equation}\label{eqn:logEst2}
	\mu(-B) = -1 +\mu(A) \leq -1+\Delta t \beta_{1} ,
\end{equation}
in all cases. Equation \eqref{eqn:logEst2} ensures that $\mu(-B)<0$ for $\Delta t$ small enough (relative to $\tau$, $\alpha$, and the norms of each $F_{i}$ for standard velocity sets), and consequently that \eqref{eqn:logInvEst} is valid with $-\mu_{2}(-B)$. Note that $\mu_{2}(-B)<0$ implies that $\mu_{2}(B)>0$.

Observe that the $r_{0}$ parameter appears in the finite difference approximation of spatial derivatives in \eqref{eqn:DVBE r0}. Hence, 
\begin{equation} \label{eqn:A_2_est}
	1+\|A\|_{2} \le 1 + \Delta t \beta_{2} ,
\end{equation}
where 
\begin{equation*}
	\beta_2 = \frac{\alpha r_{0}}{\Delta t}+\alpha \sum_{i=1}^{3} \|F_{i}\|_{2} . 
\end{equation*}
Equations \eqref{eqn:logInvEst}, \eqref{eqn:keq2}, \eqref{eqn:logEst2}, and \eqref{eqn:A_2_est} can be combined to show 
\begin{align} \label{eqn:keq3}
	\begin{split}
		\frac{1}{1+\beta_{2}\Delta t} \leq \frac{1}{\|B\|_{2}} & \leq \|B^{-1}\| \leq \frac{1}{1-\beta_{1}\Delta t} \\ 
		\sqrt{\sum_{k=0}^{n_{t}-1}\left(\frac{1}{1+\beta_{2}\Delta t}\right)^{2k}} &\leq \|L^{-1}\|_{2} \leq \frac{1-\left(\frac{1}{1-\Delta t\beta_{1}}\right)^{n_{t}-1}}{1-\frac{1}{1-\Delta t \beta_{1}}} \, . 
	\end{split}
\end{align}
In the second inequality of \eqref{eqn:keq3} we want to give a lower bound for 
\begin{equation}
	\sqrt{\sum_{k=0}^{n_{t}-1}\left(\frac{1}{1+\beta_{2}\Delta t}\right)^{2k}} = \sqrt{\frac{1-\left(1+\beta_{2}\Delta t\right)^{-2n_{t}}}{1-(1+\beta_{2}\Delta t)^{-2}}} . 
\end{equation}
First, notice that 
\begin{equation} \label{eqn:ln_ineq_0}
	\frac{1}{1-\left(1+\beta_{2}\Delta t\right)^{-2}} 
	= \frac{\left( 1 + \beta_{2} \Delta t \right)^{2}}
	{2\beta_{2}\Delta t +(\beta_{2} \Delta t)^{2}} \geq \frac{1+\beta_{2}\Delta t}{2\beta_{2}\Delta t} 
	= \frac{n_{t}(1+\beta_{2}\Delta t)}{2\beta_{2}T} ,
\end{equation}
where the inequality follows from the fact that $\beta_2 > 0$ and in the last equality we use $T=n_t \Delta t$. Next, we need to make use of the basic inequality 
\begin{equation*}
	\ln(1+u) = \int_{0}^{u}\frac{1}{1+t} \, dt \geq \frac{u}{1+u} \qquad \text{for} \qquad u>0 ,
\end{equation*}
which is equivalent to  
\begin{equation}\label{eqn:ln_ineq_2}
	\frac{1}{1+u} \leq \textup{exp}\left(-\frac{u}{1+u}\right) \qquad \text{for} \qquad u>0. 
\end{equation}
Let $u = \beta_{2}\Delta t$ and raise both sides of \eqref{eqn:ln_ineq_2} to the power $2n_{t}$ to obtain 
\begin{equation} \label{eqn:ln_ineq_3}
	\left(1+\beta_{2}\Delta t\right)^{-2n_{t}} \leq \textup{exp}\left(-\frac{2\beta_{2}T}{1+\beta_{2}\Delta t}\right) \, . 
\end{equation}
Now, \eqref{eqn:ln_ineq_3} can be combined with \eqref{eqn:ln_ineq_0} to show 
\begin{equation} \label{eqn:ln_ineq_4}
	\frac{1-\left(1+\beta_{2}\Delta t\right)^{-2n_{t}}}{1-(1+\beta_{2}\Delta t)^{-2}} \geq \frac{n_{t}(1+\beta_{2}\Delta t)}{2\beta_{2}T}\left(1-\textup{exp}\left(-\frac{2\beta_{2}T}{1+\beta_{2}\Delta t}\right)\right)\, .  
\end{equation}
Note that the right hand side of \eqref{eqn:ln_ineq_4} has the structure 
\begin{equation*}
	n_{t} \left(\frac{1-e^{-x}}{x}\right) ,\qquad x = \frac{2\beta_{2}T}{1+\beta_{2}\Delta t} ,
\end{equation*}
and $(1-e^{-x})/x$ is decreasing in $x$. Since $2\beta_{2}T/(1+\beta_{2}\Delta t) < 2\beta_{2}T$, it follows that 
\begin{equation} \label{eqn:ln_ineq_6}
	\frac{1-\left(1+\beta_{2}\Delta t\right)^{-2n_{t}}}{1-(1+\beta_{2}\Delta t)^{-2}} \geq \frac{n_{t}}{2\beta_{2}T}\left(1-e^{-2\beta_{2}T}\right) \, . 
\end{equation}
Finally, by inserting the square root of \eqref{eqn:ln_ineq_6} into the second inequality of \eqref{eqn:keq3}, we obtain
\begin{equation}\label{eqn:keq4}
	\sqrt{n_{t}}\sqrt{\frac{1-e^{-\beta_{2}T}}{2\beta_{2}T}}\leq \|L^{-1}\|_{2} \leq \frac{1-\left(\frac{1}{1-\Delta t\beta_{1}}\right)^{n_{t}-1}}{1-\frac{1}{1-\Delta t \beta_{1}}} ,
\end{equation}
so that equations \eqref{eqn:upperLowerKest} and \eqref{eqn:keq4} can be combined to yield 
\begin{equation}
	\kappa(L) \leq \kappa(L^{(e)}) \leq \frac{2}{\pi} \kappa(L) \frac{(n_{t}+1)}{\|L^{-1}\|_{2}} \leq \frac{4\sqrt{n_{t}}}{\pi}\left(\sqrt{\frac{2\beta_{2}T}{1-e^{-\beta_{2}T}}}\right)\kappa(L)\, . 
\end{equation}
Thus, in summary, the ratio of the condition number of $L$ to $L^{(e)}$ is 
\begin{equation*}
	1 \leq \frac{\kappa(L^{(e)})}{\kappa(L)} \leq \frac{4\sqrt{n_{t}}}{\pi}\left(\sqrt{\frac{2\beta_{2}T}{1-e^{-\beta_{2}T}}}\right) ,
\end{equation*}
under the assumption $\beta_{1}\Delta t <1$. For fixed $\beta_{2}T$, we can write $1 \le \kappa(L^{(e)}) \le O(\sqrt{n_{t}}\kappa(L))$.


\section{Circuit Constructions for Relevant Matrices}


\subsection{The $\mathcal{P}_k$ Matrix} \label{sec:Pk circuit}
The only requirement of $\mathcal{P}_k \in \mathbb{C}^{N^{k-1} \times N^{k-1}}$ is that the element in the $(N^{k-1} - \sum_{l=0}^{k-2} N^l)\text{th}$ row and the $0\text{th}$ column be equal to 1. Here, we show that such $\mathcal{P}_k$ can be constructed using at most $\log N^{k-1}$ NOT gates. Consider that a permutation matrix with a $1$ in row $b \in \mathbb{N}_0$ and column $0$ will perform the mapping $\ket{0} \rightarrow \ket{b}$. So, we seek a matrix that performs this operation where $b=N^{k-1} - \sum_{l=0}^{k-2} N^l$. Define, $\mathcal{X}_c(b)=\bigotimes_{j=c-1}^0 X^{b_j}$, where $b$ is a decimal number with binary form $b_{c-1} \dots b_0$ for $b_j \in \{0,1\}$ and $c$ digits, and $X$ is the Pauli-X matrix. Then, $\mathcal{X}_c(b)\ket{0}^{\otimes c} = \ket{b}$ as desired. Thus, we may take $\mathcal{P}_k = \mathcal{X}_{\log N^{k-1}}\left( N^{k-1} - \sum_{l=0}^{k-2} N^l \right)$, which requires at most $\log N^{k-1}$ NOT gates.


\subsection{The $M_{m+1}^r$ Matrix} \label{sec:Construction of M}
Let $m = 2^{q_{m}}$ and $r = 2^{q_{r}}$, with $r\leq m$, for some integers $q_{m}$ and $q_{r}$. Consider a permutation matrix $M^{r}_{m+1} \in \mathbb{C}^{rm\times rm}$ where the first $r$-rows each have one nonzero element (equal to $1$) located at the $i(m+1)\text{th}$ column for $i =0,1,\dots,(r-1)$. We can then write $M_{m+1}^r=\begin{pmatrix} P \\ P^\perp\end{pmatrix}$, where $P(i,j)=1$ for $(i,j) \in \{(0,0),(1,m+1),(2,2(m+1)),\dots,(r-1,(r-1)(m+1))\}$ and $P^\perp$ is a non-square permutation matrix whose row space is orthogonal to $P$. Note, while similar, $P^\perp$ is not the unitary complement from Definition \ref{def:completion}. Our goal is to construct $M_{m+1}^r$ such that
\begin{equation} \label{eqn:Mm1r decimal}
	M_{m+1}^r \ket{i(m+1)}_{q_m+q_r} = {\ket{i}}_{q_m+q_r} ,
\end{equation}
for $i\in\{0,\dots,r-1\}$. Here, we have used the notation $\ket{i}_q=\ket{x_{q-1} \dots x_0}$ for decimal number $i$, positive integer $q$ and  $x_j\in\{0,1\}$. Next, we use the property that $\ket{y}_{q_y}\ket{x}_{q_x}=\ket{x+2^{q_x}y}_{q_x+q_y}$ where $x \in \{0,\dots,2^{q_x}-1\}$ and $y \in \{0,\dots,2^{q_y}-1\}$. Applying this property twice gives
\begin{equation}
\begin{split}
	\ket{i(m+1)}_{q_m+q_r} 
	&= \ket{i+i2^{q_m}}_{q_m+q_r} \\
	&= \ket{i}_{q_r}\ket{i}_{q_m} \\
	&= \ket{i}_{q_r}\ket{0}_{q_m-q_r}\ket{i}_{q_r} .
\end{split}
\end{equation}
With this, \eqref{eqn:Mm1r decimal} may be written in the more amenable form 
\begin{equation}
	M_{m+1}^r (\ket{i}_{q_r} \ket{0}_{q_m-q_r} \ket{i}_{q_r})
	= \ket{i}_{q_m + q_r} .
\end{equation}
By modifying the inline multiplication circuit introduced in \cite{Gidney2017}, we claim the following defines a matrix $M_{m+1}^r$ with the desired property \eqref{eqn:Mm1r decimal}
\begin{equation} \label{eqn:Mmp1r}
	M_{m+1}^r = \prod_{i=0}^{q_r-1} CX(i,i+q_m) ,
\end{equation}
where $CX(i,j)$ is the CNOT gate conditioned on qubit $c$ and targeted on qubit $t$. Here, we use little-endian notation whereby the least significant bit corresponds to the top-most wire of the circuit. With this framework, a CNOT has the form $CX(c,t)\ket{t}\ket{c} = \ket{t \oplus c}\ket{c}$. Then, by sequentially apply the CNOT gates from \eqref{eqn:Mmp1r} we have
\begin{equation}
	\begin{split}
		\ket{i}_{q_r} \ket{0}^{\otimes q_m-q_r} \ket{i}_{q_r}
		&\xrightarrow{CX(q_r-1,q_m+q_r-1)} \ket{x_{q_r-1} \oplus x_{q_r-1}} \ket{x_{q_r-2} \dots x_0} \ket{0}^{\otimes q_m-q_r} \ket{i}_{q_r} \\
		&\xrightarrow{CX(q_r-2,q_m+q_r-2)} \ket{0} \ket{x_{q_r-2} \oplus x_{q_r-2}} \ket{x_{q_r-3} \dots x_0} \ket{0}^{\otimes q_m-q_r} \ket{i}_{q_r} \\
		&\vdots \\
		&\xrightarrow{CX(0,q_m)} \ket{0}^{\otimes q_r-1} \ket{x_0 \oplus x_0} \ket{0}^{\otimes q_m-q_r} \ket{i}_{q_r} \\
		&= \ket{0}^{\otimes q_m} \ket{i}_{q_r}
		= \ket{i}_{q_m+q_r} ,  
	\end{split}
\end{equation}
where we have used the property that $\ket{x_i \oplus x_i}=\ket{0}$.


\subsection{The $\overline{B}_{2,q}$ Matrix} \label{sec:Circuit for B2q}
From \eqref{eqn:Bq}, we can see that the submatrix $B_{2,q}$ has exactly one nonzero element per row and that these elements are 
\begin{enumerate}
	\item spaced every $Qn+1$ elements apart, \label{Bq Cond 1}
	\item shifted forward by $(q-1)n$ elements. \label{Bq Cond 2}
\end{enumerate}
Putting these requirements together, it follows that $B_{2,q}(i,j)=1$ where
\begin{equation}
	(i,j) \in \{(0,b),(1,a+b),(2,2a+b),\dots,(n-1,(n-1)a+b)\} ,
\end{equation}
where $a=Qn+1$, $b=(q-1)n$, $q \in \{1,\dots,Q\}$, $Q=2^{q_Q}$ and $n=2^{q_n}$. We must therefore find an operator to perform the mapping
\begin{equation} \label{eqn:B2q mapping}
	\ket{ia+b}_{2q_n+q_Q} \rightarrow \ket{i}_{2q_n+q_Q} ,
\end{equation}
where $i \in \{0,\dots,n-1\}$. Using the property $\ket{y}_{q_y}\ket{x}_{q_x} = \ket{x+y2^{q_x}}_{q_x+q_y}$, the LHS of \eqref{eqn:B2q mapping} may be written in the more amenable form
\begin{equation}
\begin{split}
	\ket{ia+b}_{2q_n+q_Q} 
	&= \ket{Qni+(q-1)n+i}_{2q_n+q_Q} \\
	&= \ket{i}_{q_n}\ket{(q-1)n+i}_{q_n+q_Q} \\
	&= \ket{i}_{q_n}\ket{q-1}_{q_Q}\ket{i}_{q_n} .
\end{split}
\end{equation}
With this, an equivalent statement to \eqref{eqn:B2q mapping} is
\begin{equation} \label{eqn:B2q mapping equiv}
	\ket{i}_{q_n}\ket{q-1}_{q_Q}\ket{i}_{q_n} \rightarrow \ket{i}_{2q_n+q_Q} .
\end{equation}
We can achieve the desired mapping with the following operations
\begin{equation}
	\begin{split}
		\ket{i}_{q_n}\ket{q-1}_{q_Q}\ket{i}_{q_n} 
		\xrightarrow{M_{Qn+1}^n} & 
		\ket{0}_{q_n}\ket{q-1}_{q_Q}\ket{i}_{q_n} \\
		\xrightarrow{I_n \otimes \mathcal{X}_{q_Q}(q-1) \otimes I_n} &
		\ket{0}_{q_n}\ket{0}_{q_Q}\ket{i}_{q_n}  
		= \ket{i}_{2q_n+q_Q} ,
	\end{split}
\end{equation}
where $M_{m+1}^r$ is defined in Appendix \ref{sec:Construction of M} and $\mathcal{X}_c(b)$ is defined in Appendix \ref{sec:Pk circuit}. Since these operators produce the desired output, it follows that 
\begin{equation} \label{eqn:B2q circ eq}
	\overline{B}_{2,q}
	= \left( I_n \otimes \mathcal{X}_{q_Q}(q-1) \otimes I_n \right) 
	\cdot M_{Qn+1}^n  ,
\end{equation}
defines a matrix with the desired property \eqref{eqn:B2q mapping equiv}. The total resource cost of this circuit is exactly $\log n$ CNOT gates for the first operation and at most $\log Q$ NOT gates for the second as shown in Figure \ref{fig:Circuit B2q}.

\begin{figure}
	\centering
	\includegraphics[]{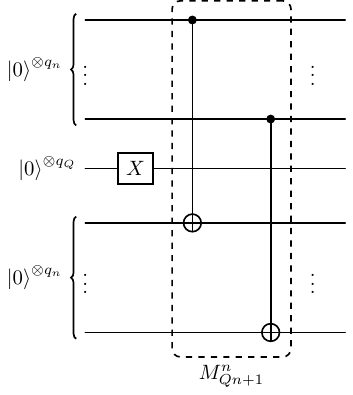}
	\caption{The $\overline{B}_{2,q}$ circuit from \eqref{eqn:B2q circ eq} for $q=Q$. The NOT gate on the $q_Q$ register should be interpreted as being applied to every wire within that register.}
	\label{fig:Circuit B2q}
\end{figure}


\subsection{The $\overline{B}_{3,q}$ Matrix} \label{sec:Circuit for B3q} 
From \eqref{eqn:B3q}, we can see that the submatrix $B_{3,q}$ has $Q$ nonzero elements equal to $1$ per row, and that 
\begin{enumerate} 
	\item in each row the $1$'s are spaced by $Qn^2$, \label{Cond B3q Qn2}
	\item in adjacent rows the pattern of $1$'s are spaced by $(Qn)^2+Qn+1$, \label{Cond B3q Qn2+Qn+1}
	\item shifted forward by $(q-1)n$ elements. \label{Cond B3q adder}
\end{enumerate}
Putting these requirements together, it follows that $B_{3,q}(i,j)=1$ where 
\begin{equation}
	\begin{split}
		(i,j)=\{&(0,c),(0,a+c),\dots,(0,(Q-1)a+c), \\
		&(1,b+c),(1,a+b+c),\dots,(1,(Q-1)a+b+c), \\
		&(2,2b+c),(2,a+2b+c),\dots,(2,(Q-1)a+2b+c), \\
		&\vdots \\
		&(n-1,(n-1)b+c),(n-1,a+(n-1)b+c),\dots,(n-1,(Q-1)a+(n-1)b+c) \} ,
	\end{split}
\end{equation}
where $a=Qn^2$, $b=(Qn)^2+Qn+1$ and $c=(q-1)n$ for $q \in \{1,\dots,Q\}$, $Q=2^{q_Q}$ and $n=2^{q_n}$. We must find an operator to perform
\begin{equation} \label{eqn:abc to i}
	d\sum_{j=0}^{Q-1} \ket{ja + ib + c}_{3q_n+2q_Q} \rightarrow \ket{i}_{3q_n+2q_Q} ,
\end{equation}
where $d \in \mathbb{C}$ is a normalization coefficient and $i \in \{0,\dots,n-1\}$. Repeatedly applying the property $\ket{y}_{q_y}\ket{x}_{q_x} = \ket{x+y2^{q_x}}_{q_x+q_y}$ from Appendix \ref{sec:Construction of M}, the LHS may be written in the more amenable form 
\begin{equation}
	\begin{split}
		\ket{aj + bi + c} 
		&= \ket{(Qn)^2i+Qn^2j+Qni+(q-1)n+i}_{3q_n+2q_Q} \\
		&= \ket{i}_{q_n}\ket{Qn^2j+Qni+(q-1)n+i}_{2q_n+2q_Q} \\
		&= \ket{i}_{q_n}\ket{j}_{q_Q}\ket{Qni+(q-1)n+i}_{2q_n+q_Q} \\
		&= \ket{i}_{q_n}\ket{j}_{q_Q}\ket{i}_{q_n}\ket{(q-1)n+i}_{q_n+q_Q} \\
		&= \ket{i}_{q_n}\ket{j}_{q_Q}\ket{i}_{q_n}\ket{q-1}_{q_Q}\ket{i}_{q_n} .
	\end{split}
\end{equation}
With this, an equivalent statement to \eqref{eqn:abc to i} is
\begin{equation} \label{eqn:B3q map}
	d\sum_{j=0}^{Q-1}
	\ket{i}_{q_n} \ket{j}_{q_Q} \ket{i}_{q_n} \ket{q-1}_{q_Q} \ket{i}_{q_n}
	\rightarrow
	\ket{i}_{3q_n+2q_Q} .
\end{equation}
To find an operator to perform this mapping, we will need to use the property
\begin{equation}
	2^{-(\log N)/2} \prod_{k=0}^{\log N -1} H_k \left( \sum_{j=0}^{N-1} \ket{j} \right)
	= \ket{0}^{\otimes \log N} ,
\end{equation}
where $H_k$ is the Hadamard gate applied to the $k\text{th}$ qubit. With this, we have the following relations:
\begin{equation}
	\begin{split}
		\sum_{j=0}^{Q-1}
		\ket{i}_{q_n} \ket{j}_{q_Q} \ket{i}_{q_n} \ket{q-1}_{q_Q} \ket{i}_{q_n} 
		\xrightarrow{M_{(Qn)^2+1}^n} 
		&\sum_{j=0}^{Q-1}
		\ket{0}_{q_n} \ket{j}_{q_Q}\ket{i}_{q_n} \ket{q-1}_{q_Q} \ket{i}_{q_n} \\  
		\xrightarrow{I_{Qn} \otimes M_{Qn+1}^n} 
		&\sum_{j=0}^{Q-1}
		\ket{0}_{q_n} \ket{j}_{q_Q} \ket{0}_{q_n} 	\ket{q-1}_{q_Q} \ket{i}_{q_n} \\
		\xrightarrow{2^{-q_Q/2}\prod_{k=2q_n + q_Q}^{2(\log Qn)-1} H_k} 
		& \ket{0}_{2q_n + q_Q} \ket{q-1}_{q_Q} \ket{i}_{q_n} \\
		\xrightarrow{I_{Qn^2} \otimes \mathcal{X}_{q_Q}(q-1) \otimes I_n} 
		&\ket{0}_{2q_n + 2q_Q} \ket{i}_{q_n} 
		= \ket{i}_{3q_n + 2q_n} ,
	\end{split}
\end{equation}
where $M_{m+1}^r$ is defined in Appendix \ref{sec:Construction of M} and $\mathcal{X}_c(b)$ is defined in Appendix \ref{sec:Pk circuit}. Since these operators produce the desired output, it follows that 
\begin{equation} \label{eqn:B3q circ eq}
\begin{split}
	\overline{B}_{3,q}
	&= \left( I_{Qn^2} \otimes \mathcal{X}_{q_Q}(q-1) \otimes I_n \right)
	\cdot \left( I_{Qn} \otimes M_{Qn+1}^n \right)
	\cdot M_{(Qn)^2+1}^n
	\cdot \prod_{k=2q_n + q_Q}^{2\log(Qn) -1} H_k \\
	&= \left( I_{Qn} \otimes \overline{B}_{2,q} \right)
	\cdot M_{(Qn)^2+1}^n
	\cdot \prod_{k=2q_n + q_Q}^{2\log(Qn) -1} H_k .
\end{split}
\end{equation}
Note that $\overline{B}_{3,q}$ must be multiplied by a factor of $2^{-q_Q/2}$ to offset the Hadamard coefficients. The total resource cost of this circuit is exactly $2\log n$ CNOT gates for the combined $M_{m+1}^r$ operations, at most $\log Q$ NOT gates for the $\mathcal{X}_{q_Q}(q-1)$ operation and exactly $\log Q$ Hadamard gates as shown in Figure \ref{fig:Circuit B3q}.

\begin{figure}
	\centering
	\includegraphics[]{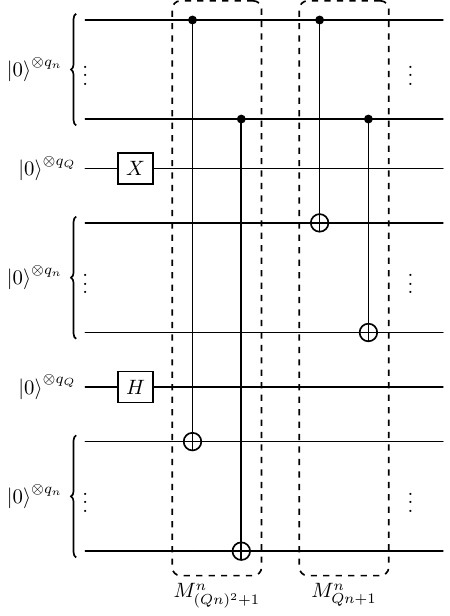}
	\caption{The $\overline{B}_{3,q}$ circuit from \eqref{eqn:B3q circ eq} for $q=Q$. A single qubit gate applied on a multi-qubit register should be interpreted as being applied to each individual wire within that register.}
	\label{fig:Circuit B3q}
\end{figure}


\subsection{The Commutation Matrix} \label{sec:Circ for Commutation}

Following \cite{Xu2018}, the commutation matrix is defined by 
\begin{equation}
	K^{(a,b)} = \sum_{i=1}^a \sum_{j=1}^b (\vec{e}_{a,i} \otimes \vec{e}_{b,j})(\vec{e}_{a,j} \otimes \vec{e}_{b,i})^T ,
\end{equation}
where $\vec{e}_{a,j}$ denotes the $j\text{th}$ cononical vector of dimension $a$ (i.e. a column vector with $1$ in the $j\text{th}$ row and $0$ elsewhere). From \cite{Demirdjian2025}, if $a=2^{q_a}$ and $b=2^{q_b}$ for integers $q_a$ and $q_b$, then the quantum circuit for the commutation matrix is 
\begin{equation}
	K^{(a,b)} = \prod_{i=0}^{q_a-1} \prod_{j=0}^{q_b-1} \text{SWAP}(i+q_b-j-1,i+q_b-j) ,
\end{equation}
So, $K^{(a,b)}$ requires $\log a \log b$ SWAP gates.


\section{Velocity Sets} \label{sec:Velocity Sets}
Conventionally, a lattice with $n$ spatial dimensions and $m$ speeds is denoted as D($n$)Q($m$). Here we summarize the D1Q3, D2Q9 and D3Q15 velocity sets along with their respective weights. We also provide a method to zero pad velocity sets that are not powers of $2$. 

\subsection{The D1Q3, D2Q9, and D3Q15 Velocity Sets} \label{sec:Specific Velocity Sets} 
Here, we provide the discrete velocities (in lattice units) and weights for the D1Q3, D2Q9 and D3Q15 cases. Following \cite{Kruger2017}, the discrete velocities and their weights for the D1Q3 case are 
\begin{equation}
	\left[\vec{e}_m,w_m\right] \vert_\text{D1Q3}  
	\begin{cases} 
		\left[ (0), \frac{2}{3} \right],    & \text{for } m\in\{1\} \\
		\left[ (\pm1), \frac{1}{6} \right], & \text{for } m\in\{2,3\} ,
	\end{cases}
\end{equation}
for the D2Q9 case is
\begin{equation}
	\left[\vec{e}_m,w_m\right] \vert_\text{D2Q9} =  
	\begin{cases} 
		\left[ (0,0), \frac{4}{9} \right],    & \text{for } m\in\{1\} \\
		\left[ (\pm1,0), \frac{1}{9} \right], & \text{for } m\in\{2,3\}\\
		\left[ (0,\pm1), \frac{1}{9} \right], & \text{for } m\in\{4,5\}\\
		\left[ (\pm1,\pm1) , \frac{1}{36}\right], & \text{for } m\in\{6,7,8,9\} ,		
	\end{cases}
\end{equation}
and for the D3Q15 case is
\begin{equation}
	\left[\vec{e}_m,w_m\right] \vert_\text{D3Q15} =  
	\begin{cases} 
		\left[ (0,0,0), \frac{2}{9} \right],    & \text{for } m\in\{1\} \\
		\left[ (\pm1,0,0), \frac{1}{9} \right], & \text{for } m\in\{2,3\}\\
		\left[ (0,\pm1,0), \frac{1}{9} \right], & \text{for } m\in\{4,5\}\\
		\left[ (0,0,\pm1), \frac{1}{9} \right], & \text{for } m\in\{6,7\}\\
		\left[ (\pm1,\pm1,\pm1) , \frac{1}{72}\right], & \text{for } m\in\{8,\dots,15\} .
	\end{cases}
\end{equation}
Furthermore, the non-dimensional speed of sound is $c_s = 1/\sqrt{3}$ for each of these three cases.

\subsection{Embedding Velocity Sets} \label{sec:Velocity Sets Embed}
While the velocity sets described in Appendix \ref{sec:Specific Velocity Sets} are popular choices, none have a velocity dimension $Q$ that is power of two, an assumption made in Section \ref{sec:Carl_LBE}. To solve this issue, first embed the velocity set $Q$ into one that is a power of two by defining $Q^{(\text{e})} \coloneq 2^{\lfloor \log Q\rfloor + 1}$, as well as $Q_z=Q^{(\text{e})}-Q$ for convenience. Next, let $\tilde{f}_m(t,\vec{X})$ describe the zero padded velocity distribution function with $Q^{(\text{e})}$ discrete velocities and let $f_m(t,\vec{X})$ describe the original distribution function with $Q$ discrete velocities. We then embed $f_m$ into $\tilde{f}_m$ by setting $\tilde{f}_m(t=0,\vec{X})=0$ for $m>Q$, and 
\begin{equation} \label{eqn:zero pad RGE}
\begin{gathered}
	R = 
	\begin{pmatrix}\begin{array}{ccc|c} 
			\beta_{1,1} & \cdots & \beta_{1,Q} & \multirow{5}{*}{$\hat{R}_1$} \\ 
			\vdots & \ddots & \vdots & \\ 
			\beta_{Q,1} & \cdots & \beta_{Q,Q} & \\
			\cline{1-3}
			\multicolumn{3}{c|}{0_{Q_z \times Q}} &
	\end{array}\end{pmatrix}
	, \quad 
	\Gamma_q = 
	\begin{pmatrix}\begin{array}{ccc|c} 
			\gamma_{q,1,1} & \cdots & \gamma_{q,1,Q} & \multirow{5}{*}{$\hat{\Gamma}_q$} \\ 
			\vdots & \ddots & \vdots & \\ 
			\gamma_{q,Q,1} & \cdots & \gamma_{q,Q,Q} & \\
			\cline{1-3}
			\multicolumn{3}{c|}{0_{Q_z \times Q}} &
	\end{array}\end{pmatrix} ,\\[4pt]
	E_\eta = 
	\begin{pmatrix}\begin{array}{ccc|c} 
			e_1^\eta & & & \multirow{5}{*}{$\hat{E}_\eta$} \\ 
			& \ddots & & \\ 
			& & e_Q^\eta & \\
			\cline{1-3}
			\multicolumn{3}{c|}{0_{Q_z \times Q}} &
	\end{array}\end{pmatrix} ,
\end{gathered}
\end{equation}
where $\eta \in \{x,y,z\}$, $\hat{R}_1,\hat{\Gamma}_q, \hat{E}_\eta \in \mathbb{C}^{Q^{(\text{e})} \times Q_z}$, and $R,\Gamma_q, E_\eta \in \mathbb{C}^{Q^{(\text{e})} \times Q^{(\text{e})}}$. By creating an analogous form of \eqref{eqn:LBE_ExtraCompact} for $\tilde{f}_m$ and inserting \eqref{eqn:zero pad RGE} into it, one can see that $\partial \tilde{f}_m/\partial t=0$ for $m>Q$, i.e. these zero padded elements keep their initial values of $0$ for all time. This means that the submatrices $\hat{E}_\eta$, $\hat{R}_1$ and $\hat{\Gamma}_q$ can take any value (including $\mathbf{0}$). Furthermore, there is flexibility in how to define the $\Gamma_q$ matrices for $q>Q$ with one valid approach leveraging the symmetry $\gamma_{q,m,r} = \gamma_{r,m,q}$. Using this procedure, we can transform any D($n$)Q($m$) lattice into a D($n$)Q($m$)* lattice where the asterisk indicates that the number of speeds have been increased to the nearest power of two ($2^{\lfloor\log m\rfloor + 1}$) using the embedding procedure described above.

\end{document}